\ifx\documentclass\undefined
\documentstyle[12pt]{article}
\else
\documentclass[12pt]{article}
\fi

\usepackage{color}

\usepackage{ascmac}

\sloppy
%
\newcommand{\beq}{\begin{eqnarray*}}
\newcommand{\eeq}{\end{eqnarray*}}
%

\makeatletter
\renewcommand{\theequation}{\thesection.\arabic{equation}}
\@addtoreset{equation}{section}
\def\eqnarray{%
\stepcounter{equation}%
\let\@currentlabel=\theequation
\global\@eqnswtrue
\global\@eqcnt\z@
\tabskip\@centering
\let\\=\@eqncr
$$\halign to \displaywidth\bgroup\@eqnsel\hskip\@centering
$\displaystyle\tabskip\z@{##}$&\global\@eqcnt\@ne
\hfil$\displaystyle{{}##{}}$\hfil
&\global\@eqcnt\tw@$\displaystyle\tabskip\z@{##}$\hfil
\tabskip\@centering&\llap{##}\tabskip\z@\cr}
\makeatother
%
\newtheorem{theorem}{Theorem}[section]
\newtheorem{lemma}[theorem]{Lemma}
\newtheorem{corollary}[theorem]{Corollary}
\newtheorem{proposition}[theorem]{Proposition}
\newtheorem{remark}{Remark}[section]

\newsavebox{\toy}
\savebox{\toy}{\framebox[0.65em]{\rule{0cm}{1ex}}}
\newcommand{\QED}{\usebox{\toy}}
\def\nlni{\par\ifvmode\removelastskip\fi\vskip\baselineskip\noindent}
\newenvironment{proof}{\nlni\begingroup\it Proof.\rm}{
\endgroup\vskip\baselineskip}

%

\begin{document}
\setlength{\baselineskip}{15pt}
\title{
Level statistics for one-dimensional Schr\"odinger operators and Gaussian beta ensemble\\ 
}
\author{
Fumihiko Nakano
\thanks{
Department of Mathematics,
Gakushuin University,
1-5-1, Mejiro, Toshima-ku, Tokyo, 171-8588, Japan.
e-mail : 
fumihiko@math.gakushuin.ac.jp}
}
\maketitle
\begin{abstract}
We study
the level statistics for two classes of 1-dimensional random Schr\"odinger operators : 
(1)
for operators whose coupling constants decay as the system size becomes large, 
and 
(2)
for operators with critically decaying random potential. 
As a byproduct of (2) 
with our previous result \cite{KN} 
imply the coincidence of the limits of 
circular and Gaussian beta ensembles.
\end{abstract}

Mathematics Subject Classification (2000): 60H25, 34L20

\section{Introduction}
As one of the recent developments 
of the theory of random matrices, 
the continuum limit of the beta ensembles are recently revealed : 
Killip-Stoiciu \cite{KS}
identified the limit of the circular beta ensemble($C_{\beta}$-ensemble, in short) by using the solution to a SDE.
Valk\'o-Vir\'ag \cite{VV}
identified the limit of Gaussian beta ensemble($G_{\beta}$-ensemble, in short) by using Brownian carousel. 
At the same time, 
it also gave a new insight to 
the level statistics problem of 1-dimensional random Schr\"odinger operators : 
In 
\cite{KS}, 
they also studied the level statistics problem of the CMV matrices, 
that is, 
they studied the scaling limit of the point process 
$\xi_L$
whose atoms are composed of the scaled eigenvalues of the truncated matrices. 
When the  diagonal components 
decay in the order of 
$n^{-\alpha}$ 
they showed that, 
$\xi_L$
converges to 
(i)
$\alpha > \frac 12$ : 
the clock process, 
(ii)
$\alpha < \frac 12$ : 
the Poisson process,
(iii)
$\alpha = \frac 12$ : 
the limit of 
$C_{\beta}$-ensemble. 
In \cite{KVV}, 
they studied the same problem for 1-dimensional discrete Schr\"odinger operators whose random potential decays in the order of 
$n^{- \frac 12}$
and showed that 
$\xi_L$
converges to the limit of 
$G_{\beta}$-ensemble. 
Moreover, 
they also studied the random Hamiltonians with system size 
$L$
in which the coupling constant decays in the order of 
$L^{- \frac 12}$. 
They identified the limit(``Sch$_{\tau}$") of 
$\xi_L$
and studied its various properties.
In \cite{KN}, 
they studied the 1-dimensional Schr\"odinger operators in the continuum with random decaying potential, for the case of 
$\alpha > \frac 12$ 
and 
$\alpha = \frac 12$, 
and the results obtained are parallel to that in \cite{KS}.
This paper 
is basically a continuum analogue of \cite{KVV} : 
(1)
we consider the operator on 
$[0,L]$ 
where the coupling constant is equal to 
$L^{-\alpha}$. 
We study the limit of 
$\xi_L$ 
for the case of 
$\alpha > \frac 12$
and 
$\alpha = \frac 12$. 
(2)
we consider the same operator to that in \cite{KN} 
for the critical decay
$\alpha = \frac 12$
and show that 
$\xi_L$ 
converges to the limit of $G_{\beta}$-ensemble, 
which, together with the results in \cite{KN}, 
implies that the limit of these two beta ensembles are equal. 
In the next subsection, 
we shall explain the motivation of the problem (1)\footnote{The author 
would like to thank F. Klopp for introducing this problem. }. 
\subsection{Motivation and Set ups}
The localization length 
$l_{loc}$
of the 1-dimensional Schr\"dingier operator 
$H = - \triangle + \lambda V$
is typically in the order of 
$\lambda^{-2}$. 
Thus, setting 
$H_L := H |_{[0, L]}$, 
$\lambda = L^{- \alpha}$, 
we expect : 
\begin{enumerate}

\item[(1)]
(extended case)
$\alpha > \frac 12$ : 
we have 
$L \ll l_{loc}$
so that the particle would be extended. 

\item[(2)]
(localized case)
$\alpha < \frac 12$ : 
we have
$l_{loc} \ll L$
so that the particle would be localized. 

\item[(3)]
(critical case)
$\alpha = \frac 12$ : 
$l_{loc} \simeq L$
so that it would correspond to the critical case.

\end{enumerate}

Therefore
if we consider the level statistics problem, 
$\xi_L$
would converge to 
(1)
$\alpha > \frac 12$ : 
the clock process,
(2)
$\alpha < \frac 12$ : 
the Poisson process, 
(3)
$\alpha = \frac 12$ : 
something which is intermediate between the clock and Poisson.

In 
this paper we consider this problem in the continuum setting : 
The Hamiltonian is defined by 
\beq
H_L &:=& H_{\lambda_L} |_{[0, L]}
\eeq
with Dirichlet boundary condition, 
where 
$H_{\lambda_L}$
is the Schr\"odinger operator with the coupling constant 
$\lambda_L$
which decays at certain rate as the system size
$L$
is large : 
\beq
H_{\lambda_L}
&:=& -\frac {d^2}{dt^2} + \lambda_L F(X_t), 
\quad
\lambda_L := L^{- \alpha}, 
\quad
\alpha > 0. 
\eeq
$(X_t)_{t \ge 0}$
is a Brownian motion on a compact Riemannian manifold
$M$
and 
$F \in C^{\infty}(M)$
with 
\[
\langle F \rangle := \int_M F(x) dx = 0.
\]
Let 
$\{ E_n (L) \}_{n \ge 1}$
be the eigenvalues of 
$H_L$
in the increasing order. 
Since 
we only consider the positive eigenvalues, we set 
\[
n(L) := \min \{ n | E_n (L) > 0 \}.
\]
Fix the reference energy 
$E_0 > 0$
arbitrary.
To study the local distribution of 
$E_n(L)$'s
near 
$E_0$, 
we set 
\begin{eqnarray}
\xi_L &:=&
\sum_{n \ge n(L)}
\delta_{L (\sqrt{E_n(L)} - \sqrt{E_0})}.
\label{Pointprocess}
\end{eqnarray}
Here we take 
$\sqrt{ E_n (L) }$
instead of 
$E_n (L)$
to unfold the eigenvalues with respect to the density of states. 
Our purpose is to study the behavior of 
$\xi_L$
as 
$L$
tends to infinity.
%
%
\subsection{Results for Extended Case}
If we consider the free Laplacian, 
we must take a subsequence 
in order that 
$\xi_L$
converges to a point process.
We need the same condition described below. \\

{\bf (A)}
The subsequence 
$\{ L_j \}_{j=1}^{\infty}$
satisfies 
$L_j \stackrel{j \to \infty}{\to} \infty$
and
\[
\sqrt{E_0} L_j
=
m_j \pi + \beta + o(1), 
\quad
j \to \infty
\]
$m_j \in {\bf N}$, 
$\beta \in [0, \pi)$.
\begin{theorem}
\label{clock}
Assume 
$(A)$
and 
$\alpha > \frac 12$. 
Then we have
\beq
\lim_{j \to \infty}
{\bf E}[ e^{- \xi_{L_j}(f)} ]
=
\exp \left(
- \sum_{n \in {\bf Z}}
f 
(n \pi - \beta)
\right).
\eeq
In other words, 
$\xi_{L_j}$
converges to a (deterministic) clock process with spacing 
$\pi$, 
in probability.
\end{theorem}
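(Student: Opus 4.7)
The plan is to reduce to the free Laplacian on $[0,L]$, whose Dirichlet eigenvalues are $\sqrt{E_n^{(0)}(L)} = n\pi/L$. Under assumption (A), these unperturbed eigenvalues scale as
\begin{eqnarray*}
L_j(\sqrt{E_{m_j+k}^{(0)}(L_j)} - \sqrt{E_0}) &=& k\pi - \beta + o(1),
\end{eqnarray*}
so the associated point process converges deterministically to $\sum_{k \in \mathbb{Z}} \delta_{k\pi - \beta}$. Thus it suffices to prove that the eigenvalues of $H_L$ near $E_0$ differ from the unperturbed ones by $o(1/L)$.

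The natural tool is the Pr\"ufer transformation. Writing a solution $\psi$ to $(H_{\lambda_L} - E)\psi = 0$ with $\psi(0) = 0$ as $\psi = R\sin\theta$, $\psi' = \sqrt{E} R\cos\theta$, the Pr\"ufer angle obeys
\begin{eqnarray*}
\theta'(t;E) &=& \sqrt{E} - \frac{\lambda_L F(X_t)}{\sqrt{E}} \sin^2\theta(t;E),
\end{eqnarray*}
and the Dirichlet eigenvalues are characterised by $\theta(L;E) \in \pi\mathbb{N}$. Using $\sin^2\theta = (1-\cos 2\theta)/2$, the remainder $R(L,E) := \theta(L;E) - \sqrt{E}L$ splits into a non-oscillatory part proportional to $\lambda_L \int_0^L F(X_t)\,dt$ and an oscillatory part proportional to $\lambda_L \int_0^L F(X_t)\cos 2\theta(t;E)\,dt$.

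For the non-oscillatory piece, since $F$ is smooth and mean-zero on the compact manifold $M$, the diffusive CLT bound gives $\int_0^L F(X_t)\,dt = O(\sqrt{L})$, so this contributes $O(L^{1/2 - \alpha}) = o(1)$ when $\alpha > 1/2$. For the oscillatory piece I would use It\^o calculus on $F(X_t)$ (or the solution $G$ of the Poisson equation $\tfrac12 \Delta_M G = F$) to express $F(X_t)\,dt$ as an exact differential plus a martingale increment, then integrate by parts against $\cos 2\theta$; the rapid rotation of $\theta$ at speed $\sqrt{E}$ produces further cancellations so that this term is likewise $o(1)$. Given $R(L,E) = o(1)$ uniformly on $|E - E_0| \le C/L$, the monotonicity of $\theta(L;\cdot)$ and the implicit function theorem yield $L\sqrt{E_{m_j+k}(L_j)} = (m_j + k)\pi + o(1)$, so the Laplace functional of $\xi_{L_j}$ converges to the claimed product, giving the clock limit in probability.

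The main obstacle is the uniform control of the oscillatory remainder over the $O(1/L)$-window of energies. Since $\partial_E\theta(t;E)$ grows like $t$, bounding $R(L,E)$ uniformly requires a Gronwall-type argument combined with moment estimates on the martingale part of $F(X_t)$; everything is driven by the gain $L^{1/2 - \alpha}$, and the regime $\alpha > 1/2$ is precisely what makes the estimates close.
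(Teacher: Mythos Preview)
Your proposal is correct and follows essentially the same route as the paper. Both arguments use the Pr\"ufer phase, write the remainder $\tilde\theta_{t,L}(\kappa)=\theta_{t,L}(\kappa)-\kappa t$ as $\frac{1}{2\kappa}\,\mathrm{Re}\,(J_{t,L}(\kappa)-J_{t,L}(0))$ with $J_{t,L}(\kappa)=\lambda_L\int_0^t e^{2i\theta_{s,L}(\kappa)}F(X_s)\,ds$, and then apply the It\^o/Poisson-equation trick (your $G$ is the paper's $g_\kappa=(L+2i\kappa)^{-1}F$) to split $J$ into a boundary term $O(L^{-\alpha})$, a drift term $O(L^{1-2\alpha})$, and a martingale whose supremum has second moment $O(L^{1-2\alpha})$; all three are $o(1)$ precisely when $\alpha>\tfrac12$.

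The only packaging differences: the paper goes through the Laplace transform of $\xi_L$ written via the inverse of the relative phase $\Psi_{L,L}(x)=\theta_{L,L}(\kappa_0+x/L)-\theta_{L,L}(\kappa_0)$ rather than locating individual eigenvalues, and the uniformity over the $O(1/L)$ energy window that worries you is obtained without Gronwall --- the It\^o decomposition bounds are already uniform for $\kappa$ in compact subsets of $(0,\infty)$, and $\kappa_0+x/L$ stays in such a compact for $x\in\mathrm{supp}\,f$.
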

When 
the random potential is spatially decaying in the order of 
$\alpha > \frac 12$, 
$\xi_L$
also converges to a clock process but 
$\beta$
is random \cite{KN}.
Here 
the effect of the random potential is rather weak compared to that in \cite{KN}. 
In fact, 
the solution to the eigenvalue equation 
$H x_t = E x_t$
approaches to the free solution in probability
(Theorem \ref{free}).
However, 
the randomness appear in the second order
(Theorem \ref{2ndLimitTheorem}).
To see the spacing between eigenvalues, 
we renumber the eigenvalues near 
$E_0$
such that 
$
\cdots < E'_{-2} < E'_{-1} < E_0 \le E'_0 < E'_1 < \cdots
$.
Then by the argument of the proof of Theorem \ref{clock}, for any 
$l \in {\bf Z}$, 
\[
\left( 
\sqrt{E'_{l+1}(L)} - \sqrt{E'_l(L)} 
\right) 
L - \pi
\to 0
\]
in probability.
Hence it is reasonable to consider 
\beq
X_j(n) 
:=
\left\{
\left(
\sqrt{E_{m_j+n+1}(L_j)} - \sqrt{E_{m_j+n+j}(L_j)}
\right)L_j - \pi
\right\}
L_j^{\alpha - \frac 12}, 
\quad
n \in {\bf Z}
\eeq
to study the second order asymptotics of eigenvalues near 
$E_0$, 
for 
$E_{m_j}(L_j)$
may be regarded as the closest eigenvalue to 
$E_0$. 
\begin{theorem}
\label{2ndLimitTheorem}
$\{ X_j (n) \}_{n \in {\bf Z}}$
converges in distribution to the Gaussian system with covariance 
\beq
C(n, n')
&:=&
\frac {C(E_0)}{8 E_0}
\cases{
2 & $(n=n')$ \cr
-1 & $(|n - n'| =1)$ \cr
0 & $(otherwise)$ \cr
}
\eeq
where
\beq
C(E)
&:=&
\int_M 
| \nabla (L + 2i \sqrt{E})^{-1} F |^2 dx
\eeq
and 
$L$
is the generator of 
$(X_t)$. 
\end{theorem}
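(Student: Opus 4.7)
The plan is to reduce the claim to a joint central limit theorem for a family of oscillating functionals of the ergodic diffusion $(X_t)$. First, write solutions of $H_{\lambda_L} x = E x$ in Pr\"ufer form $x(t) = r(t) \sin\theta(t;E)$, $x'(t) = \sqrt{E}\, r(t) \cos\theta(t;E)$, so that the phase obeys
$$ \theta'(t;E) = \sqrt{E} - \frac{\lambda_L}{\sqrt{E}}\, F(X_t)\, \sin^2\theta(t;E), \qquad \theta(0;E) = 0, $$
and Dirichlet eigenvalues are characterised by $\theta(L;E_n) = n\pi$. Using $\sin^2\theta = (1 - \cos 2\theta)/2$, the phase shift
$$ \eta(E) := \theta(L;E) - \sqrt{E}\, L = -\frac{\lambda_L}{2\sqrt{E}}\int_0^L F(X_t)\,dt + \frac{\lambda_L}{2\sqrt{E}}\int_0^L F(X_t) \cos\bigl(2\theta(t;E)\bigr)\,dt $$
satisfies $X_j(n) = -L_j^{\alpha - 1/2}\bigl(\eta(E_{m_j+n+1}) - \eta(E_{m_j+n})\bigr)$; the first (non-oscillatory) term of $\eta$ is independent of $n$ and drops out of the difference.

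Replacing $\theta(t;E)$ by the free phase $\sqrt{E}\, t$ in the oscillating integral---justified to leading order by Theorem \ref{free}---the problem reduces to the joint limit of
$$ \zeta_j(n) := \frac{1}{\sqrt{L_j}}\int_0^{L_j} F(X_t) \cos\bigl(2\sqrt{E_{m_j+n}(L_j)}\, t\bigr)\,dt, $$
with $X_j(n) = -\frac{1}{2\sqrt{E_0}}\bigl(\zeta_j(n+1) - \zeta_j(n)\bigr) + o(1)$. To analyse $\zeta_j(n)$, pass to its complex counterpart and solve the Poisson-type equation $(L + 2i\sqrt{E})\phi_E = F$ on the compact manifold $M$. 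Applying It\^o's formula to $\phi_E(X_t) e^{2i\sqrt{E}\, t}$ yields
$$ \int_0^L F(X_t)\, e^{2i\sqrt{E}\, t}\,dt = \bigl[\phi_E(X_t)\, e^{2i\sqrt{E}\, t}\bigr]_0^L - \int_0^L e^{2i\sqrt{E}\, t}\, \nabla\phi_E(X_t)\cdot dB_t, $$
an $O(1)$ boundary term plus a martingale whose quadratic variation is asymptotic to $L\, C(E)$ by ergodicity of $(X_t)$. The martingale CLT then produces $\zeta_j(n) \Rightarrow N(0, C(E_0)/2)$.

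For joint convergence across finitely many indices, I would invoke the multivariate martingale CLT. The key computation is the cross quadratic covariation at different eigenvalues: using $\sqrt{E_{m_j+n}} - \sqrt{E_{m_j+n'}} = (n-n')\pi/L_j + o(1/L_j)$, the integrand carries a factor $\cos\bigl(2\pi(n-n')\, t/L_j\bigr)$, and
$$ \frac{1}{L_j}\int_0^{L_j}\cos\bigl(2\pi(n-n')\, t/L_j\bigr)\,|\nabla\phi_{E_0}(X_t)|^2\,dt \longrightarrow \frac{C(E_0)}{2}\,\delta_{n,n'}, $$
because for integer $n \neq n'$ the cosine completes exactly $|n-n'|$ full periods on $[0,L_j]$ and averages out against the ergodic constant $C(E_0)$. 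Consequently the $\zeta_j(n)$ are asymptotically i.i.d.\ $N(0, C(E_0)/2)$, and differencing yields the tridiagonal covariance $C(n,n')$ stated in the theorem with $\sigma^2 = C(E_0)/(8 E_0)$.

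The principal obstacle I anticipate is upgrading the substitution $\theta(t;E) \mapsto \sqrt{E}\, t$ to a \emph{uniform} estimate across the finite family $\{E_{m_j+n}\}$ at the precision demanded by the rescaling $L_j^{\alpha - 1/2}$. The Pr\"ufer correction is itself of order $\lambda_L \sqrt{L} = L^{1/2 - \alpha}$, precisely the size of the Gaussian fluctuations we wish to capture, so a naive substitution would threaten to contaminate the limit. Resolving this should require an iterative It\^o argument that re-expresses the residual oscillating integral (containing the $\theta$-correction) as a secondary martingale whose contribution is $o(1)$ after rescaling, uniformly in $n$ over any bounded window around $m_j$.
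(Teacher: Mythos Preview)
Your proposal is essentially correct and shares all the main ingredients with the paper's proof: Pr\"ufer variables, the It\^o/Poisson identity $(L+2i\kappa)g_\kappa=F$, reduction to a martingale, and the martingale CLT with the cross-bracket computed via the eigenvalue spacing $\sqrt{E_{m_j+n}}-\sqrt{E_{m_j+n'}}\sim (n-n')\pi/L_j$. Your final covariance computation (i.i.d.\ $\zeta_j(n)$'s differenced to produce the tridiagonal structure) is equivalent to the paper's evaluation of $\int_0^1(e^{2ib_{n+1}u}-e^{2ib_nu})(e^{-2ib_{n'+1}u}-e^{-2ib_{n'}u})\,du$ with $b_k=k\pi-\beta$.

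The one substantive difference is where the free-phase substitution $\theta_t(\kappa)\mapsto\kappa t$ is performed, and the paper's choice neatly dissolves the obstacle you flag at the end. You propose to substitute \emph{inside the martingale integrand} and then iterate the It\^o argument to kill the residual; the paper instead applies one partial integration to obtain $L^{\alpha-1/2}J_{t,L}(\kappa)=L^{\alpha-1/2}Y_{t,L}(\kappa)+o(1)$ with $Y_{t,L}(\kappa)=\lambda_L\int_0^t e^{2i\theta_{s,L}(\kappa)}\,dM_s(\kappa)$ still carrying the \emph{full} Pr\"ufer phase (Lemma~3.1(2)), and only substitutes $\tilde\theta\to 0$ when computing the \emph{quadratic variation} of $Z^{(n)}_t:=n^{\alpha-1/2}(Y_{nt,n}(\kappa_1)-Y_{nt,n}(\kappa_2))$. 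Because the bracket is an ordinary bounded Riemann integral of the form $\frac1n\int_0^{nt}(e^{2i\theta_s(\kappa_1)}-e^{2i\theta_s(\kappa_2)})(\overline{\cdots})\,\varphi_{\kappa_0}(X_s)\,ds$, the fast oscillation $e^{2i\kappa_0 s}$ cancels in the conjugate product and only phase \emph{differences} remain; there the replacement $\tilde\theta_s(\kappa_j)\to 0$ is justified directly by $\sup_s|\tilde\theta_s|\to_P 0$ without any rescaling issue. So no iteration is needed---the delicate substitution is pushed to a place where it is trivial. Your route would work too, but the paper's ordering is cleaner and avoids the second-order bookkeeping you anticipate.
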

Here 
the covariance is short range, while it is not the case if the potential is spatially decaying \cite{KN}. 
\begin{remark}
By definition of 
$X_j(n)$, 
we have
\beq
\sqrt{E_{m_j + n}(L_j)}
&=&
\sqrt{E_{m_j}(L_j)}
+
\frac {n \pi}{L_j}
+
\frac {1}{L_j^{\alpha + \frac 12}}
\sum_{l=0}^{n-1}
X_j (l)
\eeq
so that the difference between 
$\sqrt{E_{m_j+n}(L_j)}$
and 
$\sqrt{E_{m_j}(L_j)}$
converges to the Gaussian in the second order. 
\end{remark}
\begin{remark}
Suppose that 
we consider two reference energies 
$E_0, E'_0$, 
$E_0 \ne E'_0$
both satisfying 
(A)
with the same subsequence.
Then the corresponding 
$\{ X_j (n) \}$, $\{ X'_j(n') \}$
converge jointly to the two independent Gaussian systems each other.
The same property
also holds for 
Theorem \ref{Schtau}, \ref{Carousel} 
stated below.
\end{remark}

%
\subsection{Results for Critical Case}
In this subsection we set 
$\alpha = \frac 12$. 
By
Theorem \ref{Delocalized}, 
the solution to the eigenvalue equation 
$H x_t = E x_t$
is bounded from above and below so that it is different from that in the critically decaying potential case studied in \cite{KN}. 
\begin{theorem}
\label{Schtau}
Assume 
$\alpha = \frac 12$
and 
(A). 
Then we have 
\beq
\lim_{j \to \infty}
{\bf E}[ e^{- \xi_{L_j}(f)} ]
=
{\bf E}\left[
\exp \left(
- \sum_{n \in {\bf Z}}
f (\Psi_1^{-1}(2 n \pi - 2 \beta))
\right)
\right].
\eeq
where 
$\Psi_t (c)$
is a strictly-increasing function valued process such that for any 
$c_1, c_2, \cdots, c_m$, 
$\Psi_t(c_1), \cdots, \Psi_t(c_m)$
jointly satisfy the following SDE. 
\begin{eqnarray}
d \Psi_t(c_j)
&=&
\left(
2c_j - Re \; \frac {i}{2 E_0} \langle F g_{\sqrt{E_0}}\rangle
\right) dt
\nonumber
\\
&& \qquad + 
\frac {1}{\sqrt{E_0}}
\Biggl\{
\sqrt{
\frac {C(E_0)}{2}
}
Re \; 
\left(
e^{i \Psi_t(c_j)} d Z_t
\right)
+
\sqrt{ C(0) }
dB_t
\Biggr\}
\label{SchtauSDE}
\end{eqnarray}
$j=1, 2, \cdots, m$, 
where
$Z_t$
is a complex Brownian motion independent of a Brownian motion 
$B_t$
and 
\beq
g_{\sqrt{E_0}}
&:=&
(L + 2i \sqrt{E_0})^{-1} F, 
\quad
g := L^{-1} (F - \langle F \rangle),
\\
C(E_0)
&:=&
\int_M | \nabla g_{\sqrt{E_0}} |^2 dx, 
\quad
C(0)
:=
\int_M | \nabla g_{} |^2 dx.
\eeq
\end{theorem}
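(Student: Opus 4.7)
\emph{Plan.} I would follow the continuum analogue of the Brownian-carousel method of Kritchevski--Valk\'o--Vir\'ag, with the randomness of the diffusion $F(X_t)$ handled by a Kipnis--Varadhan-type martingale decomposition.

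For each $c\in{\bf R}$ let $\theta^L_t(c)$ denote the Pr\"ufer angle of the solution of $H_{\lambda_L}x = E(c)x$ with $E(c) := (\sqrt{E_0}+c/L)^2$ and Dirichlet initial data, and set
\[
\Psi^L_s(c) := 2\bigl(\theta^L_{sL}(c) - sL\sqrt{E_0}\bigr),\qquad s\in[0,1].
\]
Sturm oscillation identifies the atoms of $\xi_L$ with $\{c : \theta^L_L(c)\in\pi{\bf Z}\}$, which assumption (A) rewrites as $\Psi^L_1(c) = 2n\pi-2\beta + o(1)$, $n\in{\bf Z}$. The Pr\"ufer ODE yields
\[
d\Psi^L_s(c) = \Bigl(2c - \tfrac{L^{1/2}}{\sqrt{E_0}}F(X_{sL})\bigl(1 - \cos(\Psi^L_s(c) + 2sL\sqrt{E_0})\bigr)\Bigr)ds + O(L^{-1/2})ds,
\]
so it suffices to identify the joint $L\to\infty$ limit of $\Psi^L_\cdot(c_1),\ldots,\Psi^L_\cdot(c_m)$.

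Split the $F$-term into a non-oscillating part and the oscillating part containing $\cos(\Psi^L+2sL\sqrt{E_0})$. With $g := \mathcal{L}^{-1}F$, It\^o applied to $g(X_{sL})$ rewrites the non-oscillating part as $o(1)$ plus a martingale whose quadratic variation averages to $s\cdot C(0)$; this produces the term $\sqrt{C(0)}\,dB_t/\sqrt{E_0}$. With $g_{\sqrt{E_0}} := (\mathcal{L}+2i\sqrt{E_0})^{-1}F$, It\^o applied to $g_{\sqrt{E_0}}(X_{sL})e^{i 2sL\sqrt{E_0}}$ followed by integration by parts against the slow factor $e^{i\Psi^L_s(c)}$ yields the complex Brownian term $\sqrt{C(E_0)/2}\,\mathrm{Re}(e^{i\Psi_t(c)}dZ_t)/\sqrt{E_0}$, plus an It\^o correction; expanding $(1-\cos\phi)e^{i\phi} = e^{i\phi}-\tfrac12 e^{2i\phi}-\tfrac12$ in that correction and averaging away the fast oscillations leaves the constant $-\tfrac12\langle Fg_{\sqrt{E_0}}\rangle$, which, combined with the prefactor $i/E_0$ generated by the integration by parts, produces the drift $-\mathrm{Re}(i/(2E_0))\langle Fg_{\sqrt{E_0}}\rangle$ of (\ref{SchtauSDE}). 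Independence of $Z$ from $B$ reflects orthogonality of the oscillating and non-oscillating martingale increments on $M$. Running the argument jointly for $c_1,\ldots,c_m$ with shared driving $B,Z$ (they come from the same $X$-martingales; only the prefactor $e^{i\Psi^L_s(c_j)}$ differs), together with a tightness estimate and identification of weak limits via the martingale problem for (\ref{SchtauSDE}), gives convergence in law of $(\Psi^L_\cdot(c_j))_{j=1}^m$ to the stated system. Sturm monotonicity of $c\mapsto \Psi^L_1(c)$ passes to the limit, and the Laplace formula follows by continuous mapping together with a tail bound on the number of atoms of $\xi_L$ in large intervals.

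The main obstacle is the oscillatory averaging: three scales (the fast diffusion $X_{sL}$, the deterministic oscillation at frequency $2\sqrt{E_0}L$, and the macroscopic scale $s\in[0,1]$) must be separated simultaneously, and this must be done uniformly in $c$ so that pointwise convergence of $\Psi^L_1(c)$ upgrades to point-process convergence. Verifying that no further hidden $O(1)$ drift survives the It\^o expansion, and that the surviving martingales truly combine into a single complex and a single real Brownian motion, is the technical core of the argument.
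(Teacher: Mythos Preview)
Your proposal is correct and follows essentially the same route as the paper: the paper also rescales time to $[0,1]$, writes $\Psi^{(n)}_t(c)=2ct+\frac{1}{\kappa_0}\mathrm{Re}\bigl(J^{(n)}_t(\kappa_c)-J^{(n)}_t(0)\bigr)+o(1)$ (your oscillating/non-oscillating split), applies the resolvent/It\^o identity with $g_{\kappa_0}$ and $g$ (your ``Kipnis--Varadhan'' step) to extract the drift $-\mathrm{Re}\frac{i}{2E_0}\langle Fg_{\kappa_0}\rangle$ from the $K_0$ term and the martingale $W^{(n)}_t(c)=Z_{nt,n}(\kappa_c)-Z_{nt,n}(0)$, computes the limiting covariations $\langle W(c),W(d)\rangle_t=\langle\varphi\rangle t$ and $\langle W(c),\overline{W(d)}\rangle_t=\langle\varphi_{\kappa_0}\rangle\int_0^t e^{i(\Psi_u(c)-\Psi_u(d))}du+\langle\varphi\rangle t$ to obtain the representation by independent $Z_t,B_t$, and closes with tightness, SDE comparison for strict monotonicity, and the inverse-function lemma for the Laplace transform. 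The only cosmetic difference is that the paper packages your partial-integration steps into a preliminary lemma (Lemma~4.1) rather than performing them inline.
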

This 
SDE
is the same as that satisfied by the phase function of 
``Sch$_{\tau}$" \cite{KVV} 
up to constant. 
Hence 
the properties of 
``Sch$_{\tau}$" 
derived in \cite{KVV} 
such as 
strong repulsion, 
large gap asymptotics, 
explicit form of intensity and 
CLT, 
also hold for our case. 
%
%
\subsection{Results for decaying potential model with critical decay}
%
%
In this subsection we consider 
\beq
H = - \frac {d^2}{d t^2} + V(t), 
\quad
V(t) := a(t) F(X_t).
\eeq
where 
$a \in C^{\infty}$, 
$a(-t) = a(t)$, 
$a$
is decreasing on 
$[0, \infty)$, 
and 
\[
a(t) = t^{-\frac 12}(1 + o(1)), 
\quad
t \to \infty.
\]
$(X_t)$
and 
$F$
satisfy the same conditions stated in subsection 1.1.
Let 
$H_L := H |_{[0, L]}$
be the finite box Hamiltonian with Dirichlet boundary condition and let 
$\{ E_n (L)\}_{n \ge n(L)}$
be the set of positive eigenvalues of 
$H_L$. 
Let 
$\xi_L$
defined as in 
(\ref{Pointprocess}).
In \cite{KN}, 
we proved that 
$\xi_L$
converges to the limit of $C_{\beta}$-ensemble.
That is, let 
$\zeta^C_{\beta} = \sum_k \delta_{\lambda_k}$
be the continuum limit of the $C_{\beta}$-ensemble.
Then 
\begin{equation}
\xi_L \stackrel{d}{\to} \tilde{\zeta}^C_{\beta}
\label{Circular}
\end{equation}
where
$\tilde{\zeta}^C_{\beta}
=
\sum_k \delta_{\lambda_k /2}$
and 
$\beta = \frac {8 E_0}{C(E_0)}$. 
Here we give
a different description of the limit.
\begin{theorem}
\label{Carousel}
The limit 
$\xi_{\infty} = \lim_{L \to \infty} \xi_L$
has the following property.
Let 
$N(\lambda) := \sharp \{ 
\mbox {atoms of $\xi_{\infty}$ in } [0, \lambda] \}$
be the counting function of 
$\xi_{\infty}$. 
Then 
$N(\lambda) \stackrel{d}{=} \frac {1}{2 \pi}\Psi_{1-}(\lambda)$
where
$\Psi_t (\lambda)$, $t \in [0, 1)$
is the strictly-increasing function valued process which is the solution to 
\begin{eqnarray}
d \Psi_t (\lambda) &=& 2 \lambda dt 
+ 
\frac {D(E_0)}{\sqrt{1-t}}
Re \left[ 
(e^{i \Psi_t(\lambda)} -1) d Z_t
\right], 
\quad
\Psi_0(\lambda) = 0
\label{SDEeq}
\end{eqnarray}
where
$
D(E_0) = \sqrt{
\frac {C(E_0)}{2 E_0}
}
$.
\end{theorem}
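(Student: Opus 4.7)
The plan is to analyze the Pr\"ufer phase of the eigenvalue equation $Hx = Ex$ directly, and to show that, after a suitable time change, its scaling limit is governed by the SDE (\ref{SDEeq}). Combined with the convergence $\xi_L \stackrel{d}{\to} \tilde{\zeta}^C_{\beta}$ already established in \cite{KN} (equation (\ref{Circular})), this yields the claimed alternative description of the counting function of $\xi_{\infty}$.

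First, I would introduce Pr\"ufer variables $x_t = r_t(E) \sin \phi_t(E)$, $x'_t = \sqrt{E}\, r_t(E) \cos \phi_t(E)$, with $\phi_0(E) = 0$. Since Dirichlet eigenvalues correspond to $\phi_L(E) \in \pi {\bf Z}$, setting $E_\lambda := (\sqrt{E_0} + \lambda/L)^2$ gives
\[
2\pi N_L(\lambda) = \Psi_L^L(\lambda) + O(1), \qquad \Psi_t^L(\lambda) := 2 \bigl( \phi_t(E_\lambda) - \phi_t(E_0) \bigr).
\]
Subtracting the Pr\"ufer equations at $E_\lambda$ and $E_0$, expanding $\sqrt{E_\lambda} - \sqrt{E_0} = \lambda/L + O(\lambda^2/L^2)$, and substituting the Poisson-equation decomposition $F = (\mathcal{L} + 2i \sqrt{E_0}) g_{\sqrt{E_0}}$ already used for Theorem \ref{Schtau} would isolate a resonant martingale part $dM_t$ whose quadratic variation is asymptotically $C(E_0)\, dt$. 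After removing the non-resonant rapidly oscillating terms by stochastic averaging, the microscopic SDE takes the form
\[
d \Psi_t^L(\lambda) = \frac{2\lambda}{L}\, dt + \frac{a(t)}{\sqrt{E_0}}\, \mathrm{Re} \bigl[ ( e^{i \Psi_t^L(\lambda)} - 1 )\, dM_t \bigr] + (\mathrm{l.o.t.}).
\]

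Next, I would perform the logarithmic time change $\tau = \log(t)/\log L$, which matches the intrinsic clock $\int_1^t a(s)^2\, ds \asymp \log t$ of the critically decaying perturbation, followed by $1 - s = e^{-D(E_0)^2 \tau}$. Under this composite change the drift becomes $2 \lambda\, ds$ and the noise prefactor becomes $D(E_0)/\sqrt{1-s}$, exactly reproducing (\ref{SDEeq}). Convergence to the limiting SDE then follows by a standard martingale-problem / diffusion-approximation argument analogous to \cite{KVV}, adapted to the continuum decaying-potential setting of \cite{KN}, once uniform-in-$\lambda$ tightness of $\{\Psi_{\cdot}^L(\lambda)\}_L$ is established on compact sets of $(s,\lambda)$ with $s$ bounded away from $1$.

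The main technical obstacle is the joint-in-$\lambda$ convergence: the SDEs at different $\lambda$ share the common noise $dM_t$, and their coupled limit must be identified together. This is especially delicate near $s = 1$, where the $1/\sqrt{1-s}$ prefactor is singular and the Pr\"ufer phase performs its final accumulation of eigenvalues. Control in this regime rests on the attractive pull of the nonlinearity $(e^{i\Psi}-1)$ toward $\Psi \in 2\pi {\bf Z}$, which is precisely the Valk\'o--Vir\'ag carousel mechanism; making this rigorous here requires combining the energy-cancellation estimates of \cite{KN} with the contraction bounds for the SDE already developed in \cite{KVV}.
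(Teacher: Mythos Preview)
Your proposal has a genuine gap: the composite time change you describe does not produce the SDE~(\ref{SDEeq}). Solving forward from $t=0$, the Pr\"ufer noise carries the prefactor $a(t)\sim t^{-1/2}$, so after the logarithmic change the martingale part becomes a Brownian motion in $\tau$, but the drift $\frac{2\lambda}{L}\,dt$ becomes proportional to $e^{\tau}\,d\tau$, i.e.\ exponentially \emph{increasing} on $[0,\log L]$. This is the time-\emph{reversal} of the sine-$\beta$ drift in~(\ref{Sinebeta}), not the sine-$\beta$ SDE itself, and the further substitution $1-s=e^{-D^2\tau}$ does not repair this: a direct computation shows the drift becomes $(1-s)^{-1/D^2-1}\,ds$ up to constants, not $2\lambda\,ds$. (As written, with $\tau=\log t/\log L\in[0,1]$, the image of $s$ is even bounded away from $1$.) Consequently the singular noise at the terminal time $s\uparrow 1$ that is the hallmark of the carousel never appears from your construction.

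The idea you are missing, and which the paper uses explicitly, is a \emph{spatial} reversal: one replaces $H_n$ by $\hat H_n=-d^2/dt^2+a(n-t)F(Y_t)$ on $[0,n]$ (equivalently, one solves the eigenvalue equation from the right endpoint). Then the coupling $a(n-t)$ is small near $t=0$ and blows up as $t\uparrow n$; after the \emph{linear} rescaling $t=nu$, $u\in[0,1)$, one has $n\,a(n-nu)^2=(1-u)^{-1}$, which gives exactly the noise prefactor $D(E_0)/\sqrt{1-u}$ in~(\ref{SDEeq}) while the drift $2\lambda\,du$ is constant. No logarithmic time change enters. The paper then proves tightness of $\Psi^{(n)}_t(\lambda)$ on $[0,T]$ for each $T<1$, identifies the limiting SDE via the quadratic variations, and treats the approach $t\uparrow 1$ by two separate ingredients that your sketch does not anticipate: a comparison/optional-stopping argument bounding ${\bf E}\bigl[|\Psi^{(n)}_{1-n^{\beta-1}}-\Psi^{(n)}_{1-\epsilon}|\bigr]$ (Proposition~\ref{Distance}), and the fact that the reference phase $\{2\theta_{n-n^\beta}(\kappa_0)\}_{2\pi{\bf Z}}$ becomes uniformly distributed (Theorem~\ref{uniform}), which together with the Pr\"ufer phase solved from the right and Sturm oscillation at the matching point $n-n^\beta$ yields the counting function. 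Your intuition about the ``attractive pull toward $2\pi{\bf Z}$'' is correct in spirit, but without the reversal it is attached to the wrong dynamics.
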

This theorem 
is the continuum analogue of that in 
\cite{KVV}. 
To see the significance of Theorem \ref{Carousel}, 
let us recall the Gaussian beta ensemble
whose joint density of ordered eigenvalues
$\lambda_1 \le \lambda_2 \le \cdots \le \lambda_n$ 
is proportional to 
\[
\exp \left(
- \frac {\beta}{4} \sum_{k=1}^n \lambda_k^2
\right)
\prod_{j < k}
| \lambda_j - \lambda_k |^{\beta}.
\]
Then 
Valk\'o - Vir\'ag
found the following representation of the continuum limit of the G$_{\beta}$- ensemble.
\begin{theorem}
{\bf (Valk\'o - Vir\'ag \cite{VV})}
\\
Let 
$\mu_n$
be the sequence such that 
$n^{\frac 16} (2 \sqrt{n} - | \mu_n |) \to \infty$. 
Then
\beq
\sum_{k}
\delta_{\Lambda_k^{(n)}}
\stackrel{d}{\to} 
\zeta^G_{\beta}, 
\quad
\Lambda_k^{(n)}
:= \sqrt{4n - \mu_n^2} (\lambda_k - \mu_n)
\eeq
where
$\sharp \{ \mbox{atoms of $\zeta_{\beta}^G$ in } [0,\lambda] \}
\stackrel{d}{=}
\frac {1}{2 \pi}
\alpha_{\infty}(\lambda)$
where
$\alpha_{\infty}(\lambda) := \lim_{t \to \infty} \alpha_t(\lambda)$
and 
$\alpha_t(\lambda)$
is the solution to 
\begin{equation}
d \alpha_t(\lambda)
=
\lambda 
\frac {\beta}{4} e^{- \frac {\beta t}{4}}dt
+
Re [ (e^{ i\alpha_t(\lambda) } - 1) d Z_t ], 
\quad
\alpha_0 (\lambda) = 0.
\label{Sinebeta}
\end{equation}
\end{theorem}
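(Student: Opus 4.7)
The plan is to follow the strategy of Valk\'o--Vir\'ag: represent the Gaussian $\beta$-ensemble by the Dumitriu--Edelman tridiagonal matrix $T_n$ (with Gaussian diagonal and $\chi$-distributed subdiagonal entries whose shape parameter decreases linearly along the diagonal), rewrite the eigenvalue equation $T_n u = \lambda u$ as a three-term recurrence, introduce a Pr\"ufer-type phase $\varphi_k(\lambda)$ encoding the projective coordinate of $(u_k, u_{k+1})$ after applying the transfer matrices, and count eigenvalues of $T_n$ below $\Lambda$ by the winding number $(\varphi_n(\Lambda) - \varphi_0(\Lambda))/(2\pi)$ up to an $O(1)$ correction.

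First I would set up the \emph{relative} phase
\[
\alpha_k^n(\lambda) := \varphi_k\bigl(\mu_n + \lambda / \sqrt{4n - \mu_n^2}\bigr) - \varphi_k(\mu_n), \qquad \alpha_0^n \equiv 0,
\]
so that at the terminal index $k = n$, $\alpha_n^n(\lambda)/(2\pi)$ equals the counting function $\sharp\{\mbox{atoms in } [0,\lambda]\}$ of $\sum_k \delta_{\Lambda_k^{(n)}}$ up to a lower-order error. I would then apply the time change $k = n(1 - e^{-\beta t/4})$, which sends $k \in \{0, \dots, n\}$ to $t \in [0, \infty)$, and set $\alpha_t^n := \alpha_{k(t)}^n$; this is the scale on which the drift $\lambda \frac{\beta}{4} e^{-\beta t/4}$ of (\ref{Sinebeta}) emerges naturally from the change of variables applied to the semicircle density at $\mu_n$.

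Next, a Taylor expansion of the single-step transfer-matrix action in the small parameter $1/\sqrt{4n - \mu_n^2}$, combined with standard concentration estimates for the $\chi$-variables around their means $\sqrt{\beta(n-k)}$, produces a discrete SDE of the form
\[
\Delta \alpha_t^n \approx \lambda \frac{\beta}{4} e^{-\beta t/4} \Delta t + \mbox{Re}\!\left[(e^{i \alpha_t^n} - 1)\, \Delta Z_t^n\right] + (\mbox{negligible terms}),
\]
where $\Delta Z_t^n$ is a discrete martingale increment with quadratic variation matching that of a complex Brownian motion. A Stroock--Varadhan / martingale-problem argument then yields, jointly for finitely many $\lambda$'s, $\alpha_\cdot^n(\lambda) \Rightarrow \alpha_\cdot(\lambda)$ in law on $C([0, T], {\bf R})$ for every $T < \infty$. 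The edge hypothesis $n^{1/6}(2\sqrt{n} - |\mu_n|) \to \infty$ enters here by guaranteeing that the local spectral density at $\mu_n$ is well-approximated by its semicircle value and that subleading corrections to the drift are $o(1)$ uniformly on the relevant time window.

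The main obstacle is the passage to $t = \infty$. The multiplicative noise term does not vanish as $t \to \infty$ while the drift decays exponentially, so the existence of $\alpha_\infty(\lambda) := \lim_{t \to \infty} \alpha_t(\lambda)$ and the fact that it takes values close to integer multiples of $2\pi$ must be established separately, exploiting the oscillatory structure of $\mbox{Re}[(e^{i\alpha} - 1)\, dZ]$ to force $\alpha_t$ to lock near $2\pi {\bf Z}$ once the drift has become small. One must then prove that $\alpha_n^n(\lambda) - \alpha_{T_n}^n(\lambda) \to 0$ in probability for a suitable $T_n \to \infty$, combine this with the convergence on $[0, T_n]$, and pass to joint convergence in $\lambda$ uniformly on compact sets to upgrade finite-dimensional convergence of phases into weak convergence of the point processes.
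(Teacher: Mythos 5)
This statement is not proved in the paper at all: it is quoted verbatim as a known result of Valk\'o--Vir\'ag \cite{VV}, so there is no internal proof to compare your attempt against. What you have written is a reconstruction of the argument in \cite{VV} itself, and as a high-level outline it is faithful to that strategy: tridiagonal (Dumitriu--Edelman) model, Pr\"ufer-type relative phase, eigenvalue counting by the winding of the phase, the exponential time change $k = n(1-e^{-\beta t/4})$ producing the decaying drift, a martingale-problem limit to the SDE (\ref{Sinebeta}), and a separate argument for the $t\to\infty$ limit. It is also, not coincidentally, the same template this paper uses for its own Theorem \ref{Carousel} (a priori estimates and tightness of the relative phase, identification of the limiting SDE, control of the phase on the final window $[n-n^{\beta}, n]$, and equidistribution of the boundary phase modulo $2\pi$).

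That said, as a proof your text is only a program: every genuinely hard step is named rather than carried out. In particular (i) the claim that $\alpha_n^n(\lambda)/(2\pi)$ equals the counting function ``up to a lower-order error'' hides the two-sided Sturm--Liouville argument --- one must solve from the right endpoint as well, show the right-boundary phase contributes a term that becomes uniform on $[0,2\pi)$ and independent of the left phase (the analogue of Theorem \ref{uniform} and display (\ref{SL}) here), and only then read off the count; (ii) the existence of $\alpha_\infty(\lambda)$ and the fact that it lands in $2\pi{\bf Z}$ is the core analytic content of \cite{VV} (and of Proposition \ref{Distance} / Lemma \ref{keylemma2} in this paper), requiring an oscillation/locking estimate of the form ${\bf E}[\,|\alpha_\infty - a_*|\,] \lesssim d(\alpha_T, 2\pi{\bf Z}) + \varepsilon$ conditional on the past, not just the qualitative remark that the noise ``forces $\alpha_t$ to lock''; and (iii) upgrading finite-dimensional convergence of the phases to weak convergence of the point processes needs monotonicity of $\lambda\mapsto\alpha_t(\lambda)$ and an inversion lemma (cf.\ Lemma \ref{Inverse}). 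So the proposal correctly identifies the architecture of the Valk\'o--Vir\'ag proof but does not constitute one; within this paper the correct disposition of the statement is simply to cite \cite{VV}.
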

Note that 
$\alpha_{\infty}(\lambda) \in 2 \pi {\bf Z}$.
By the time change  
$t = 1 - e^{-cs}$, 
$c = \frac {\beta}{4}$, 
$\beta = \frac {8 E_0}{C(E_0)}$, 
SDE (\ref{SDEeq})
is transformed to 
(\ref{Sinebeta})\cite{KVV}. 
Therefore 
\begin{corollary}
\label{ConvergetoGaussianbeta}
For 
$\zeta^G_{\beta} = \sum_k \delta_{\lambda_k}$, 
let 
$\tilde{\zeta}^G_{\beta} = \sum_k \delta_{\lambda_k/2}$.
Then
$
\xi_{L} \stackrel{d}{\to} \tilde{\zeta}^G_{\beta}
$
where
$\beta = \frac {8 E_0}{C(E_0)}$. 
\end{corollary}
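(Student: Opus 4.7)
The plan is to chain together Theorem~\ref{Carousel}, the stated time change, and the Valk\'o-Vir\'ag description of the Gaussian $\beta$-ensemble, so nothing new has to be proved at the level of the operator. The only genuine content is the SDE matching, which the paper already attributes to \cite{KVV}; I would re-derive it here in enough detail to fix the constants.

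First I would start from Theorem~\ref{Carousel}, which guarantees $\xi_L \xrightarrow{d} \xi_{\infty}$ with counting function $N(\lambda) \stackrel{d}{=} \frac{1}{2\pi}\Psi_{1-}(\lambda)$, $\Psi_t$ solving (\ref{SDEeq}). To compare with (\ref{Sinebeta}), set $\tilde{\Psi}_s(\lambda) := \Psi_{1-e^{-cs}}(\lambda)$ with $c = \beta/4$ and $\beta = 8E_0/C(E_0)$. Then $dt = c(1-t)\,ds$ and, since $Z_t$ is a complex Brownian motion, the time-changed driver $\widetilde{Z}_s$ defined by $d\widetilde{Z}_s = (c(1-t))^{-1/2}\,dZ_t$ is again a standard complex Brownian motion. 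Substituting into (\ref{SDEeq}) gives
\begin{eqnarray*}
d\tilde{\Psi}_s(\lambda)
&=& 2\lambda c(1-t)\,ds + \frac{D(E_0)}{\sqrt{1-t}}\sqrt{c(1-t)}\,\mathrm{Re}\!\left[(e^{i\tilde{\Psi}_s(\lambda)}-1)\,d\widetilde{Z}_s\right] \\
&=& 2\lambda c\, e^{-cs}\,ds + D(E_0)\sqrt{c}\,\mathrm{Re}\!\left[(e^{i\tilde{\Psi}_s(\lambda)}-1)\,d\widetilde{Z}_s\right].
\end{eqnarray*}

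Next I would match this with the equation satisfied by $\alpha_s(2\lambda)$ in (\ref{Sinebeta}): the drift coefficient becomes $2\lambda\cdot\beta/4$ provided $c = \beta/4$, and the noise coefficient becomes $1$ provided $D(E_0)\sqrt{c} = 1$, i.e.\ $c = 2E_0/C(E_0)$. With the identification $\beta = 8E_0/C(E_0)$, these two conditions coincide, and the two SDEs become identical. Uniqueness in law for (\ref{Sinebeta}) (already used in \cite{VV}) then yields $\tilde{\Psi}_{\infty}(\lambda) \stackrel{d}{=} \alpha_{\infty}(2\lambda)$, hence $\Psi_{1-}(\lambda) \stackrel{d}{=} \alpha_{\infty}(2\lambda)$.

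Finally, combining this with Theorem~\ref{Carousel} and the Valk\'o--Vir\'ag formula for the counting function of $\zeta^G_{\beta}$, I would compute
\begin{eqnarray*}
N(\lambda) \;=\; \tfrac{1}{2\pi}\Psi_{1-}(\lambda)
\;\stackrel{d}{=}\; \tfrac{1}{2\pi}\alpha_{\infty}(2\lambda)
\;=\; \sharp\{\text{atoms of }\zeta^G_{\beta}\text{ in }[0,2\lambda]\}
\;=\; \sharp\{\text{atoms of }\tilde{\zeta}^G_{\beta}\text{ in }[0,\lambda]\},
\end{eqnarray*}
jointly in $\lambda$, which is enough to identify the two point processes and conclude $\xi_L \xrightarrow{d} \tilde{\zeta}^G_{\beta}$. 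The one step that needs care is the justification that matching the one-dimensional marginals of the counting functions together with strict monotonicity of both $\Psi_{1-}(\cdot)$ and $\alpha_{\infty}(\cdot)$ upgrades to equality of the point processes in distribution; in our setting this follows from the joint SDE satisfied by $(\Psi_t(c_1),\ldots,\Psi_t(c_m))$ stated in Theorem~\ref{Schtau} and its analogue in \cite{VV}, so the same time change gives joint distributional equality for any finite collection of $\lambda$'s, which is the only potential obstacle but is handled by the same computation done coordinate-wise.
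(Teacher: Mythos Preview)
Your proposal is correct and follows the same route as the paper: the paper's entire argument is the sentence ``By the time change $t = 1 - e^{-cs}$, $c = \frac{\beta}{4}$, $\beta = \frac{8E_0}{C(E_0)}$, SDE (\ref{SDEeq}) is transformed to (\ref{Sinebeta})'', and you have simply filled in the constants and the half-scaling bookkeeping that the paper leaves implicit. One minor slip: your reference to Theorem~\ref{Schtau} for the joint SDE is misplaced, since that theorem concerns the coupling-constant model rather than the decaying-potential model; the joint structure you need is the ``strictly-increasing function valued process'' assertion in Theorem~\ref{Carousel} itself (together with its analogue in \cite{VV}), and the same coordinate-wise time change then gives the joint identity in law, exactly as you say.
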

By varying 
$E_0 \in (0, \infty)$
or 
$F$, 
any 
$\beta > 0$
can be realized.
Hence 
by combining (\ref{Circular}) and Corollary \ref{ConvergetoGaussianbeta}, 
we have the coincidence of the limit of two beta ensembles. 
\begin{corollary}
For any 
$\beta > 0$, 
$\zeta_{\beta}^C = \zeta_{\beta}^G$.
\end{corollary}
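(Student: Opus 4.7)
The plan is to combine the two convergence results just established for the same finite-volume point process $\xi_L$ of the critical decay model of this subsection. By (\ref{Circular}), taken from \cite{KN}, $\xi_L\stackrel{d}{\to}\tilde\zeta^C_\beta$ with $\beta=8E_0/C(E_0)$; by Corollary \ref{ConvergetoGaussianbeta}, $\xi_L\stackrel{d}{\to}\tilde\zeta^G_\beta$ with exactly the same value of $\beta$. Uniqueness of weak limits forces $\tilde\zeta^C_\beta\stackrel{d}{=}\tilde\zeta^G_\beta$, and since the tilde in both $\tilde\zeta^C_\beta$ and $\tilde\zeta^G_\beta$ denotes the identical pushforward $\lambda\mapsto\lambda/2$, this descends to $\zeta^C_\beta\stackrel{d}{=}\zeta^G_\beta$ at this specific value of $\beta$.

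The second step is to upgrade the identity to every $\beta>0$, for which the map $(E_0,F)\mapsto 8E_0/C(E_0)$ must be shown to surject onto $(0,\infty)$. Fixing $E_0$ and replacing $F$ by $cF$ for $c>0$ multiplies $C(E_0)=\int_M|\nabla(L+2i\sqrt{E_0})^{-1}F|^2\,dx$ by $c^2$ and hence divides $\beta$ by $c^2$; the rescaled $F$ still lies in $C^\infty(M)$ with mean zero, so all standing hypotheses remain in force. Letting $c$ range over $(0,\infty)$ sweeps out every positive value of $\beta$, and applying the first step for each such choice yields $\zeta^C_\beta\stackrel{d}{=}\zeta^G_\beta$ for all $\beta>0$.

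All the substance of the corollary is packed into the two prior convergence theorems, so there is no new analytic obstacle here. The only points that need to be kept straight are (i) that the two identifications really use the \emph{same} parameterization $\beta=8E_0/C(E_0)$ of the coupling --- this is built into the way Theorem \ref{Carousel}'s SDE (\ref{SDEeq}) time-changes into the Valk\'o--Vir\'ag SDE (\ref{Sinebeta}) and into the identification in \cite{KN} --- and (ii) that the family $F\mapsto cF$ genuinely realizes the whole half-line of $\beta$-values without violating the hypotheses of subsection~1.1. Both checks are routine, and the corollary follows.
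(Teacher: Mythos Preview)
Your proof is correct and follows essentially the same approach as the paper: combine (\ref{Circular}) with Corollary~\ref{ConvergetoGaussianbeta}, invoke uniqueness of weak limits, then vary the model to hit every $\beta>0$. The only cosmetic difference is that the paper mentions varying either $E_0$ or $F$, whereas you spell out the explicit scaling $F\mapsto cF$; your added remarks about the tilde and the consistency of the $\beta$-parameterization are welcome clarifications of points the paper leaves implicit.
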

This is known for 
$\beta = 1, 2, 4$. 
Valk\'o-Vir\'ag
have a direct proof of this fact by showing that these two descriptions are equivalent\cite{V}. 

The method of proof of 
Theorem \ref{clock} - \ref{Schtau}
is essentially the same as that in \cite{KN} : 
we write the Laplace transform of 
$\xi_L$
in terms of the Pr\"ufer variables, and study the behavior of the relative Pr\"ufer phase. 
The idea of proof of 
Theorem \ref{Carousel}
is the same as that in 
\cite{VV, KVV}
with different techniques : 
we identify the scaling limit of the relative Pr\"ufer phase as the solution to a SDE, and show that  
$t \uparrow 1$ 
limit of this solution gives the counting function of 
$\xi_{\infty}$. 
The outline of this paper is as follows : 
Section 2 is the preparation of the basic notations and tools.
In Section 3 - 5, 
we prove 
Theorem \ref{clock}, \ref{2ndLimitTheorem}, 
Theorem \ref{Schtau}, 
and 
Theorem \ref{Carousel}
respectively. 
In 
Appendix, 
we recall the techniques used in \cite{KU, KN}. 
In what follows, 
$C$
denotes positive constants which is subject to change from line to line.
%
%
\section{Preparation}
For general 1-dim Schr\"odinger operator 
$H = - \frac {d^2}{dt^2} + q$, 
let 
$x_t$
be the solution to the equation 
$H x_t = \kappa^2 x_t$, $x_0=0$, 
$(\kappa > 0)$
which we write in the (modified) Pr\"ufer variables :  
\begin{equation}
\left( \begin{array}{c}
x_t \\ x'_t /\kappa
\end{array} \right)
=
r_t
\left( \begin{array}{c}
\sin \theta_t \\ \cos \theta_t
\end{array} \right), 
\quad
\theta_0 = 0.
\label{Prufer}
\end{equation}
We define 
$\tilde{\theta}_t(\kappa)$
by 
\[
\theta_t (\kappa) = \kappa t + \tilde{\theta}_t (\kappa).
\]
Then it follows that 
\begin{eqnarray}
r_t(\kappa)
&=&
\exp \left(
\frac {1}{2\kappa} Im 
\int_0^t 
q(s) e^{2i \theta_s(\kappa)} ds
\right)
\label{r-eq}
\\
\tilde{\theta}_t (\kappa)
&=&
\frac {1}{2 \kappa}
\int_0^t
Re (e^{2i \theta_s(\kappa)} -1 ) q(s)ds, 
\quad
\tilde{\theta}_0(\kappa) = 0
\label{theta-eq}
\\
\frac {\partial \theta_t(\kappa)}{\partial \kappa}
&=&
\int_0^t 
\frac {r_s^2}{r_t^2} ds
+
\frac {1}{2 \kappa^2}
\int_0^t 
\frac {r_s^2}{r_t^2}
q(s)
(1 - Re \; e^{2i \theta_s(\kappa)}) ds.
\label{theta-kappa}
\end{eqnarray}
Since 
$r_t$
is bounded from below (Theorem \ref{free}, \ref{Delocalized} and \cite{KU} Proposition 2.1), 
for any closed interval 
$I \subset (0, \infty)$
we have 
$\inf_{\kappa \in I}\frac {\partial \theta_t(\kappa)}{\partial \kappa} > 0$
for sufficiently large 
$t > 0$
so that  
$\theta_t(\kappa)$
is strictly-increasing as a function of 
$\kappa \in I$. 
Set 
$q(t) := \lambda_L F(X_t)$
and let 
$\theta_{t, L}(\kappa)$, 
$r_{t, L}(\kappa)$
be the corresponding Pr\"ufer variables.
Let 
\beq
\kappa_0 &:=& \sqrt{E_0}
\\
\Psi_{t,L}(x)
&:=&
\theta_{t,L} \left(
\kappa_0 + \frac xL
\right) - \theta_{t,L}(\kappa_0), 
\quad
0 \le t \le L
\eeq
be the relative Pr\"ufer phase.
Further write 
$\theta_{L, L}(\kappa_0)$
as 
\beq
\theta_{L, L} (\kappa_0) 
&=&
m(\kappa_0, L) \pi + \phi(\kappa_0, L), 
\quad
m(\kappa_0, L) \in {\bf N}, 
\quad
\phi(\kappa_0, L) \in [0,  \pi).
\eeq
Then 
we have the following representation of the Laplace transform of 
$\xi_L$
in terms of Pr\"ufer variables \cite{KS}.  
\begin{lemma}
\label{Laplace}
\beq
{\bf E}[ e^{- \xi_L (f)} ]
&=&
{\bf E} \left[
\exp \left(
- \sum_{n \ge n(L) - m(\kappa_0, L)}
f \left(
( \Psi_{L, L}^{-1})
( n \pi -  \phi(\kappa_0, L))
\right)
\right)
\right]
\eeq
for 
$f \in C_c^+({\bf R})$, 
where
$\xi_L(f) := \int_{\bf R} f d \xi_L$. 
\end{lemma}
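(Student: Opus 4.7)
The plan is to derive the identity pointwise in the random sample (before taking expectation) by using Sturm-type oscillation theory to convert the eigenvalue equation into a condition on the Pr\"ufer angle. Since $x_{t,L} = r_{t,L} \sin \theta_{t,L}$ with $r_{t,L} > 0$, a positive value $\kappa^2$ is a Dirichlet eigenvalue of $H_L$ iff $x_{L,L}(\kappa) = 0$, i.e.\ iff $\theta_{L,L}(\kappa) \in \pi \mathbf{Z}$. By the remark just before the lemma, $\kappa \mapsto \theta_{L,L}(\kappa)$ is strictly increasing on any closed subinterval of $(0,\infty)$ for $L$ large, so the map $\kappa \mapsto \theta_{L,L}(\kappa)$ sets up a bijection between positive eigenvalues and integers hit by the Pr\"ufer angle. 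The standard Sturm oscillation theorem (applied in the $\lambda = \kappa^2$ variable extended to $\mathbf{R}$) pins down the counting: writing $\kappa_n := \sqrt{E_n(L)}$ for $n \ge n(L)$, one has $\theta_{L,L}(\kappa_n) = n \pi$.

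Next, for each $n \ge n(L)$, set $y_n := L(\kappa_n - \kappa_0)$, so $\kappa_n = \kappa_0 + y_n / L$. By the definition of $\Psi_{L,L}$ and the decomposition $\theta_{L,L}(\kappa_0) = m(\kappa_0, L)\pi + \phi(\kappa_0, L)$,
\[
\Psi_{L,L}(y_n) \;=\; \theta_{L,L}(\kappa_n) - \theta_{L,L}(\kappa_0) \;=\; \bigl(n - m(\kappa_0,L)\bigr)\pi - \phi(\kappa_0, L).
\]
By monotonicity of $\Psi_{L,L}$ (inherited from that of $\theta_{L,L}$), it is invertible on an interval containing every $\bigl(n - m(\kappa_0,L)\bigr)\pi - \phi(\kappa_0,L)$ for $n \ge n(L)$, and hence $y_n = \Psi_{L,L}^{-1}\!\bigl((n - m(\kappa_0,L))\pi - \phi(\kappa_0,L)\bigr)$.

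Substituting into the definition of $\xi_L(f) = \sum_{n \ge n(L)} f(L(\kappa_n - \kappa_0))$ and reindexing the sum via $n' := n - m(\kappa_0, L)$ yields
\[
\xi_L(f) \;=\; \sum_{n' \ge n(L) - m(\kappa_0, L)} f\bigl(\Psi_{L,L}^{-1}(n' \pi - \phi(\kappa_0, L))\bigr)
\]
pathwise; taking expectation of $e^{-\xi_L(f)}$ gives the stated formula.

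The only delicate point is the indexing calibration in the first step: checking that the label $n$ of the positive eigenvalue matches the integer $n$ appearing in $\theta_{L,L}(\kappa_n) = n\pi$, even when $H_L$ has negative eigenvalues. This is done by extending $\theta_L$ monotonically in the spectral parameter $\lambda$ to all of $\mathbf{R}$ and invoking the Sturm comparison/oscillation theorem, so that as $\lambda$ passes through each eigenvalue (negative or positive) the phase $\theta_L(\lambda)$ increments by exactly $\pi$; the positive eigenvalues $E_n(L)$ with $n \ge n(L)$ then correspond to the crossings $\theta_{L,L}(\kappa) = n\pi$ for $\kappa > 0$. Everything else is bookkeeping.
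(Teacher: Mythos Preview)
Your proof is correct and follows essentially the same approach as the paper: both invoke the Sturm oscillation theorem to get $\theta_{L,L}(\sqrt{E_n(L)}) = n\pi$, then rewrite this in terms of $\Psi_{L,L}$ via the decomposition $\theta_{L,L}(\kappa_0) = m(\kappa_0,L)\pi + \phi(\kappa_0,L)$ and invert. The paper's proof is a one-line appeal to Sturm oscillation, whereas you spell out the monotonicity and the indexing calibration (including the extension to negative spectrum) that the paper leaves implicit; this is added detail, not a different route.
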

\begin{proof}
Let 
\[
x_n = L \left(
\sqrt{E_n(L)} - \kappa_0
\right)
\]
be the $n$-th atom of 
$\xi_L$. 
By Sturm oscillation theorem, 
$x = x_n$
if and only if 
$\theta_{L, L}(\kappa_0+\frac xL) = n \pi$ 
so that we have 
\beq
\{ x_n \}_{n \ge n(L)}
&=&
\left\{ x 
\, | \,
 \Psi_{L, L}(x) =  n \pi - \phi(\kappa_0, L) , 
\quad
n \ge n(L) - m(\kappa_0, L)
\right\}.
\eeq
\QED
\end{proof}
%
\section{Extended case}
Throughout this section we set 
$\alpha > \frac 12$
to study the extended case. 
\subsection{Proof of Theorem \ref{clock}}
First of all, we study the behavior of the following quantity. 
\begin{equation}
J_{t, L}(\kappa)
:=
\lambda_L
\int_0^t 
e^{2i \theta_{s,L}(\kappa)}  F(X_s) ds, 
\quad
\kappa \ge 0
\label{Jdefine}
\end{equation}
where we set 
$\theta_{t, L}(0) := 0$
for convenience. 
\begin{lemma}
\label{J1}
Let 
$\kappa \ge 0$.
\\
(1)
\[
\sup_{0 \le t \le L} | J_{t, L}(\kappa) | 
\stackrel{L \to \infty}{\to} 
0
\]
in probability, compact uniformly for 
$\kappa \in (0, \infty)$. 
\\
(2)
\begin{eqnarray}
L^{\alpha - \frac 12}J_{t, L}(\kappa)
&=&
L^{\alpha - \frac 12} Y_{t, L}(\kappa) + o(1)
\nonumber
\end{eqnarray}
as 
$L \to \infty$, 
where 
\begin{eqnarray}
Y_{t, L} (\kappa)
&:=&
\lambda_L
\int_0^t 
e^{2i \theta_{s, L}(\kappa)} d M_s (\kappa)
\label{Y}
\end{eqnarray}
and 
$M_s(\kappa)$
is a complex martingale defined in Lemma \ref{PartialIntegration}.
\end{lemma}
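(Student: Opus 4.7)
The plan is to reduce the oscillatory integral $J_{t,L}(\kappa)$ to a martingale plus negligible terms via the Kotani--Ushiroya integration-by-parts device (which is what Lemma \ref{PartialIntegration} provides). Set $g_\kappa := (L + 2i\kappa)^{-1} F$, where $L$ is the generator of $(X_t)$, so that $F = L g_\kappa + 2i\kappa g_\kappa$. Applying It\^o's formula to $e^{2i\theta_{s,L}(\kappa)} g_\kappa(X_s)$ and using $\theta'_{s,L} = \kappa + \tilde{\theta}'_{s,L}$ together with (\ref{theta-eq}) yields the decomposition
\beq
J_{t,L}(\kappa)
&=&
\lambda_L \bigl[ e^{2i\theta_{t,L}(\kappa)} g_\kappa(X_t) - g_\kappa(X_0) \bigr]
- 2i \lambda_L \int_0^t \tilde{\theta}'_{s,L}(\kappa)\, e^{2i\theta_{s,L}(\kappa)} g_\kappa(X_s)\,ds
- Y_{t,L}(\kappa).
\eeq

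I would then estimate each of the three pieces separately. The boundary term is $O(\lambda_L) = O(L^{-\alpha})$ uniformly in $t$, since $g_\kappa$ is bounded on the compact manifold $M$. For the integral remainder, (\ref{theta-eq}) gives $|\tilde{\theta}'_{s,L}(\kappa)| \le \lambda_L \|F\|_\infty / \kappa$, so its contribution is $O(\lambda_L^2 L) = O(L^{1-2\alpha})$, which vanishes when $\alpha > 1/2$. The martingale term is controlled via It\^o's isometry, ${\bf E}|Y_{t,L}(\kappa)|^2 \le C \lambda_L^2 L = O(L^{1-2\alpha})$, and Doob's $L^2$ maximal inequality promotes this to $\sup_{0 \le t \le L} |Y_{t,L}(\kappa)| \to 0$ in probability. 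Together these three bounds give (1). For (2), multiplying the identity by $L^{\alpha - 1/2}$ turns the boundary term into $O(L^{-1/2})$ and the remainder into $O(L^{1/2 - \alpha})$, both $o(1)$, so only $L^{\alpha - 1/2} Y_{t,L}(\kappa)$ survives to order $o(1)$.

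The main obstacle will be the compact uniformity in $\kappa \in (0,\infty)$. The pointwise-in-$\kappa$ estimates above are routine, but upgrading them requires joint continuity of $g_\kappa$, $\nabla g_\kappa$ and their sup norms in $\kappa$ (smoothness of the resolvent on the compact manifold $M$), equicontinuity in $\kappa$ of $\theta_{s,L}(\kappa)$ coming from (\ref{theta-kappa}), and tightness of the complex martingale family $\kappa \mapsto Y_{t,L}(\kappa)$ via a Kolmogorov-type moment bound. This is the same package of estimates worked out in \cite{KU,KN} that the appendix evidently recalls; I would invoke it rather than redo it.
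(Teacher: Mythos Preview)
Your approach is essentially identical to the paper's: the same integration-by-parts via Lemma~\ref{PartialIntegration} producing boundary term $O(\lambda_L)$, drift remainder $O(\lambda_L^2 L)$, and martingale $Y_{t,L}$ controlled by Doob's inequality, with (2) read off from the same decomposition. The only slip is the sign in front of $Y_{t,L}(\kappa)$ in your displayed identity (with the paper's convention for $M_s(\kappa)$ in Lemma~\ref{PartialIntegration} it should be $+Y_{t,L}$, which is what (2) requires); and the paper treats $\kappa=0$ in one line using $\langle F\rangle=0$, which you did not mention but is immediate.
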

\begin{proof}
(1)
We first assume that 
$\kappa > 0$.
By
Lemma \ref{PartialIntegration}
\beq
J_{t, L}(\kappa)
&=&
\lambda_L 
\Biggl\{
\left[
e^{2i \theta_{s,L}(\kappa)} g_{\kappa} (X_s)
\right]_0^t
\\
&& - 
\frac {2i}{2 \kappa}
\int_0^t Re \; 
(e^{2i \theta_{s,L}(\kappa)} - 1)
\lambda_L F(X_s) 
e^{2i \theta_{s,L}(\kappa)} 
g_{\kappa}(X_s) ds
\\
&& + 
\int_0^t 
e^{2i \theta_{s,L}(\kappa)}
d M_s(\kappa)
\Biggr\}
\\
&=:& I_{t, L} + II_{t, L} + Y_{t, L}.
\eeq
Then
$I_{t, L} = O(\lambda_L) = O(L^{-\alpha})$, 
$II_{t, L} \le
C \lambda_L^2 t
=
O(L^{-2 \alpha + 1})$
and 
$Y_{t, L}(\kappa)$
is a martingale satisfying 
\[
\langle Y_{\cdot,L}(\kappa), Y_{\cdot,L}(\kappa) \rangle_t, 
\quad
\langle Y_{\cdot,L}(\kappa), \overline{Y_{\cdot,L}(\kappa)} \rangle_t
\le 
C \lambda_L^2 t.
\]
Therefore by martingale inequality we have 
\beq
{\bf E}[ \sup_{0 \le t \le L} | Y_{t, L}(\kappa) |^2 ]
&\le&
(Const.)
{\bf E}[  | Y_{L, L}(\kappa) |^2 ]
=
O(L^{- 2\alpha+1}).
\eeq
A standard argument 
using Chebishev's inequality yields the conclusion.
The proof for 
$\kappa = 0$
is similar except that 
$\langle F \rangle = 0$ 
and 
$II_{t, L} = 0$.
\\
(2)
It easily follows from the argument above. 
\QED
\end{proof}
\begin{lemma}
\label{thetatilde}
For 
$\kappa \ge 0$, 
\beq
\sup_{0 \le t \le L} 
|\tilde{\theta}_{t, L}(\kappa)| 
\stackrel{L \to \infty}{\to} 
0
\eeq
in probability.
\end{lemma}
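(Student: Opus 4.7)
The plan is to recognize $\tilde{\theta}_{t,L}(\kappa)$ as a simple linear combination of the objects $J_{t,L}(\kappa)$ controlled in Lemma \ref{J1}, so that the desired uniform smallness is an immediate consequence of Lemma \ref{J1}(1).

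First I would specialize equation (\ref{theta-eq}) to $q(s) = \lambda_L F(X_s)$ and split the integrand as $e^{2i\theta_{s,L}(\kappa)}F(X_s) - F(X_s)$. Taking real parts, this gives the identity
\[
\tilde{\theta}_{t,L}(\kappa) \;=\; \frac{1}{2\kappa}\Bigl(\mathrm{Re}\, J_{t,L}(\kappa) \;-\; J_{t,L}(0)\Bigr),
\qquad \kappa > 0,
\]
where the convention $\theta_{t,L}(0):=0$ in (\ref{Jdefine}) makes $J_{t,L}(0) = \lambda_L \int_0^t F(X_s)\,ds$.

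Next I would invoke Lemma \ref{J1}(1), which asserts that $\sup_{0 \le t \le L}|J_{t,L}(\kappa)|\to 0$ in probability, with the convergence compact-uniform in $\kappa\in(0,\infty)$, and that the same holds for $\kappa=0$ (handled in the ``$\langle F\rangle=0$'' branch of the proof of Lemma \ref{J1}). Since the factor $\frac{1}{2\kappa}$ is bounded on any compact subset of $(0,\infty)$, we obtain
\[
\sup_{0\le t \le L}\bigl|\tilde{\theta}_{t,L}(\kappa)\bigr| \;\le\; \frac{1}{2\kappa}\Bigl(\sup_{0\le t \le L}|J_{t,L}(\kappa)| + \sup_{0\le t \le L}|J_{t,L}(0)|\Bigr)\;\xrightarrow{L\to\infty}\;0
\]
in probability, which is the desired statement. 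The edge case $\kappa=0$ is trivial under the same convention $\theta_{t,L}(0)=0$, giving $\tilde{\theta}_{t,L}(0)\equiv 0$.

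There is essentially no obstacle: the only point requiring care is that one cannot pull $\frac{1}{2\kappa}$ out of a limit uniformly across $\kappa\ge 0$, but the statement is for fixed $\kappa$, and the compact-uniform version of Lemma \ref{J1}(1) is more than enough. The content of the lemma is therefore just a bookkeeping translation of the oscillatory-integral estimate already established for $J_{t,L}$ via the martingale decomposition of Lemma \ref{PartialIntegration}.
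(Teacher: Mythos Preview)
Your proposal is correct and follows exactly the paper's own argument: rewrite $\tilde{\theta}_{t,L}(\kappa)=\frac{1}{2\kappa}\,\mathrm{Re}\bigl(J_{t,L}(\kappa)-J_{t,L}(0)\bigr)$ using (\ref{theta-eq}) and then apply Lemma \ref{J1}(1). The additional remarks about the $\kappa=0$ edge case and the boundedness of $1/(2\kappa)$ are fine but not needed beyond what the paper already uses.
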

\begin{proof}
By
(\ref{theta-eq})
$
\tilde{\theta}_{t, L}(\kappa)
=
\frac {1}{2 \kappa}
Re
\left(
J_{t, L}(\kappa) - J_{t, L}(0)
\right)
$. 
Then 
the conclusion follows from 
Lemma \ref{J1}(1).
\QED
\end{proof}
{\it Proof of Theorem \ref{clock}}\\
By
Lemma \ref{thetatilde} 
and (A), 
\beq
\Psi_{L,L}(x)
&=&
x + \tilde{\theta}_{L,L}(\kappa_0 + \frac xL)
-
\tilde{\theta}_{L,L}(\kappa_0)
=
x + o(1)
\\
\theta_{L_j, L_j }(\kappa_0)
&=&
\kappa_0 L_j 
+
\tilde{\theta}_{L_j, L_j}(\kappa_0)
=
m_j \pi + \beta + o(1)
\eeq
in probability.
Hence for any subsequence of 
$\{ L_j \}$, 
we can further find a subsequence 
$\{ L_{j_k} \}$
of that such that 
\[
\lim_{k \to \infty}
\phi(\kappa_0, L_{j_k})
=
\beta, 
\quad
a.s.
\]
By using the fact that 
$
\lim_{k \to \infty} (n(L_{j_k}) - m_{L_{j_k}})
=
- \infty
$
and 
Lemma \ref{Inverse}, 
we have 
\beq
\lim_{k \to \infty}
{\bf E}[ e^{- \xi_{L_{j_k}}(f)} ]
=
\exp \left(
- \sum_{n \in {\bf Z}}
f 
(n \pi - \beta)
\right). 
\eeq
Since this holds for any subsequence of 
$\{ L_j \}$, 
we arrive at the conclusion. 
\QED
%

\subsection{Behavior of solutions}
We study the behavior of the solution 
$x_t$
to the 
Schr\"odinger equation
$H_L x_t = \kappa^2 x_t$.
\begin{theorem}
\label{free}
For a solution 
$x_{t, L}$
to the equation 
$H_L x_{t, L} = \kappa^2 x_{t, L}$
$(\kappa > 0, x_{0, L} = 0)$, 
let 
$r_{t, L}$, $\theta_{t, L}$
be the corresponding Pr\"ufer variables defined in 
(\ref{Prufer}).
Then we have 
\beq
&&\sup_{0 \le t \le L} | r_{t, L}(\kappa) - 1 |
\to_P 0
\\
&&\sup_{0 \le t \le L} 
| \theta_{t, L}(\kappa) - \kappa t | 
\to_P 0
\eeq
in probability so that 
$x_{t, L}$
approaches to the free solution. 
\end{theorem}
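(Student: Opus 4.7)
The plan is to obtain both uniform estimates as immediate corollaries of the lemmas already proved in this section. Once $q(s) = \lambda_L F(X_s)$ is substituted into (\ref{r-eq}) and (\ref{theta-eq}), the quantities $r_{t,L}(\kappa)$ and $\tilde\theta_{t,L}(\kappa) := \theta_{t,L}(\kappa) - \kappa t$ become elementary functionals of the oscillatory integral $J_{t,L}(\kappa)$ defined in (\ref{Jdefine}), which Lemma \ref{J1} has already shown to be uniformly small on $[0,L]$.

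The phase statement is a direct restatement: since $\theta_{t,L}(\kappa) - \kappa t = \tilde\theta_{t,L}(\kappa)$ by definition, Lemma \ref{thetatilde} gives $\sup_{0 \le t \le L}|\theta_{t,L}(\kappa) - \kappa t| \to_P 0$. For the modulus, substituting $q(s) = \lambda_L F(X_s)$ into (\ref{r-eq}) yields
\[
r_{t,L}(\kappa) = \exp\left( \frac{1}{2\kappa}\, \mathrm{Im}\, J_{t,L}(\kappa) \right),
\]
so Lemma \ref{J1}(1) combined with the elementary bound $|e^{z} - 1| \le |z| e^{|z|}$ delivers $\sup_{0 \le t \le L}|r_{t,L}(\kappa) - 1| \to_P 0$.

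No real obstacle lies in these deductions themselves; the substance has been pushed into Lemma \ref{J1}. There, integration by parts against the resolvent $g_\kappa = (L + 2i\kappa)^{-1} F$ (Lemma \ref{PartialIntegration}) trades the fast oscillation in $J_{t,L}$ for an extra factor of $\lambda_L$ together with a martingale term $Y_{t,L}(\kappa)$, and Doob's maximal inequality then controls this martingale on $[0,L]$ via the bound $\lambda_L^2 L = L^{1 - 2\alpha}$, which is precisely where the hypothesis $\alpha > \tfrac{1}{2}$ becomes essential.
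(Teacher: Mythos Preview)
Your proof is correct and follows essentially the same route as the paper: the paper's own argument is the one-line ``It easily follows from (\ref{r-eq}), (\ref{theta-eq}) and Lemma \ref{J1}(1)'', and you have spelled out exactly that deduction, citing Lemma \ref{thetatilde} (itself an immediate consequence of (\ref{theta-eq}) and Lemma \ref{J1}(1)) for the phase and writing $r_{t,L}(\kappa)=\exp\bigl(\tfrac{1}{2\kappa}\,\mathrm{Im}\,J_{t,L}(\kappa)\bigr)$ for the modulus.
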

\begin{proof}
It easily follows from 
(\ref{r-eq}), (\ref{theta-eq})
and Lemma \ref{J1}(1).
\QED
\end{proof}
%
\subsection{Second Limit Theorem}
\begin{lemma}
\label{Eigenvalue}
\[
\sqrt{E_{m_j+n} (L_j)} 
=
\kappa_0 + 
\frac {n\pi - \beta + o(1)}{L_j}
\]
in probability.
\end{lemma}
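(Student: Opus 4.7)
The plan is to invert the relation $\theta_{L_j,L_j}(\kappa)=(m_j+n)\pi$ that characterizes $\sqrt{E_{m_j+n}(L_j)}$ via Sturm oscillation, exactly as in the proof of Theorem~\ref{clock}. Writing $\kappa = \kappa_0 + x/L_j$ and using the relative Prüfer phase,
\[
\theta_{L_j,L_j}\!\left(\kappa_0 + \tfrac{x}{L_j}\right) = \theta_{L_j,L_j}(\kappa_0) + \Psi_{L_j,L_j}(x),
\]
so $\sqrt{E_{m_j+n}(L_j)} = \kappa_0 + x_n/L_j$ where $x_n$ is the unique solution of
\[
\theta_{L_j,L_j}(\kappa_0) + \Psi_{L_j,L_j}(x_n) = (m_j+n)\pi.
\]
(Uniqueness is guaranteed for large $j$ because $\theta_{t,L}(\kappa)$ is strictly increasing in $\kappa$ on compacts, as noted after \eqref{theta-kappa}.)

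The next step is to substitute the asymptotics already collected in the proof of Theorem~\ref{clock}. By Lemma~\ref{thetatilde} together with assumption (A),
\[
\theta_{L_j,L_j}(\kappa_0) = \kappa_0 L_j + \tilde\theta_{L_j,L_j}(\kappa_0) = m_j\pi + \beta + o(1)
\]
in probability, and by the same application of Lemma~\ref{thetatilde} to the shifted Prüfer phase,
\[
\Psi_{L_j,L_j}(x) = x + \tilde\theta_{L_j,L_j}\!\left(\kappa_0+\tfrac{x}{L_j}\right) - \tilde\theta_{L_j,L_j}(\kappa_0) = x + o(1)
\]
in probability. Inserting these two expansions into the defining equation for $x_n$ yields $x_n = n\pi - \beta + o(1)$ in probability, which is the desired formula.

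The one subtle point is that the statement of Lemma~\ref{J1}(1) gives convergence in probability compact uniformly in $\kappa$, and we need the resulting $o(1)$ in $\Psi_{L_j,L_j}$ to be uniform in $x$ on each compact set so that we may legitimately invert. This is immediate, however, because $\kappa_0 + x/L_j$ eventually lies in a fixed compact interval around $\kappa_0$ for $|x|\le R$, and the compact-uniform control in Lemma~\ref{J1}(1) (hence in Lemma~\ref{thetatilde}) then gives $\sup_{|x|\le R}|\Psi_{L_j,L_j}(x)-x|\to 0$ in probability. Combined with strict monotonicity, a standard sub-subsequence argument (passing to an a.s.\ convergent subsequence and back) converts the probabilistic $o(1)$ into the claimed convergence of $x_n$. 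Apart from this uniformity bookkeeping, no new analytic input beyond what was used for Theorem~\ref{clock} is required.
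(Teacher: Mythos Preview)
Your proof is correct and follows the same idea as the paper: use Sturm oscillation $\theta_{L_j,L_j}(\sqrt{E_{m_j+n}(L_j)})=(m_j+n)\pi$, the decomposition $\theta_{t,L}(\kappa)=\kappa t+\tilde\theta_{t,L}(\kappa)$, assumption (A), and Lemma~\ref{thetatilde} (with the compact uniformity from Lemma~\ref{J1}(1)). The paper's presentation is marginally more direct in that it computes $(\sqrt{E_{m_j+n}(L_j)}-\kappa_0)L_j$ by the identity $\kappa L_j=\theta_{L_j,L_j}(\kappa)-\tilde\theta_{L_j,L_j}(\kappa)$ evaluated at the eigenvalue, rather than setting up and inverting $\Psi_{L_j,L_j}$; but this is only a cosmetic difference and your uniformity discussion correctly handles the one point the paper leaves implicit.
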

\begin{proof}
This lemma follows from Lemma \ref{thetatilde} and the following computation. 
\beq
&&
\left(
\sqrt{E_{m_j+n}(L_j)} - \kappa_0
\right)
L_j 
\\
&=&
\theta_{L_j, L_j}
\left(
\sqrt{E_{m_j+n}(L_j)}
\right)
-
\kappa_0 L_j
-
\tilde{\theta}_{L_j, L_j}
\left(
\sqrt{E_{m_j+n}(L_j)}
\right)
\\
&=&
(m_j+n) \pi 
- 
\left(
m_j \pi + \beta + o(1)
\right).
\eeq
\QED
\end{proof}
For 
$c_1, c_2, d_1, d_2 \in {\bf R}$, 
set 
\beq
\kappa_c &:=& \kappa_0 + \frac cn, 
\quad
c \in {\bf R}
\\
\Theta_t^{(n)}(c_1, c_2)
 &:=& 
\tilde{\theta}_{nt, n}
\left(
\kappa_{c_1}
\right)
-
\tilde{\theta}_{nt, n}
\left(
\kappa_{c_2}
\right)
\\
C_t(c_1, c_2 ; d_1, d_2)
&:=&
\frac {
C(E_0) }
{8 E_0}
Re 
\int_0^t
(e^{2i c_1 u} - e^{2i c_2 u})
(e^{-2i d_1 u} - e^{-2i d_2 u})
du.
\eeq
We study the behavior of 
$\Theta_t^{(n)}(c_1, c_2)$
as 
$n$
tends to infinity, for fixed
$c_1, c_2$. 
\begin{lemma}
\label{Gauss}
As 
$n \to \infty$, 
$\{ n^{\alpha - \frac 12} \Theta_t^{(n)}(c_1, c_2) \}_{t, c_1, c_2}$
converges to the Gaussian system
$\{ G(t, c_1, c_2) \}_{t, c_1, c_2}$
with covariance 
$\{ C_{t \wedge t'}
(c_1, c_2 ; d_1, d_2) \}_{t, c_1, c_2, d_1, d_2}$.
\end{lemma}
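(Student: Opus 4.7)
The plan is to reduce the difference $\Theta^{(n)}_t(c_1,c_2)$ to a continuous complex martingale, compute its limiting covariations, and apply the continuous martingale central limit theorem. By the Pr\"ufer equation (\ref{theta-eq}) and Lemma \ref{J1}(2),
$$
n^{\alpha-\frac{1}{2}}\Theta^{(n)}_t(c_1,c_2) \;=\; \frac{1}{2\kappa_0}\,\mathrm{Re}\bigl(Z^{(n)}_t(c_1)-Z^{(n)}_t(c_2)\bigr) + o(1)
$$
in probability, where
$$
Z^{(n)}_t(c) \;:=\; n^{\alpha-\frac{1}{2}}Y_{nt,n}(\kappa_c) \;=\; n^{-1/2}\int_0^{nt} e^{2i\theta_{s,n}(\kappa_c)}\,dM_s(\kappa_c)
$$
is a continuous complex martingale in $t$. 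The contributions of $J_{\cdot,\cdot}(0)$ cancel to leading order because $\frac{1}{2\kappa_{c_j}} = \frac{1}{2\kappa_0} + O(1/n)$ while $J_{nt,n}(0) = O(n^{-\alpha+1/2})$ by the CLT for $\int_0^{nt} F(X_s)\,ds$ (which itself follows from the resolvent construction $Lg = F$).

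Next I would determine the limits of the quadratic covariations of $Z^{(n)}$. Using Theorem \ref{free} and Lemma \ref{thetatilde}, $\theta_{s,n}(\kappa_c) - \theta_{s,n}(\kappa_d) = (c-d)s/n + o(1)$, so after the change of variables $s = nu$ and ergodic averaging of $|\nabla g_{\kappa_0}(X_{nu})|^2$ against the continuous weight $e^{2i(c-d)u}$,
$$
\bigl\langle Z^{(n)}(c),\overline{Z^{(n)}(d)}\bigr\rangle_t \;\longrightarrow\; C(E_0)\int_0^t e^{2i(c-d)u}\,du
$$
in probability. For the pseudo-covariation $\langle Z^{(n)}(c), Z^{(n)}(d)\rangle_t$, the integrand carries the rapidly oscillating factor $e^{4i\kappa_0 s}$; solving the Poisson-type equation $(L+4i\kappa_0)\tilde h = \nabla g_{\kappa_c}\cdot\nabla g_{\kappa_d}$ on $M$ and applying It\^o's formula to $e^{4i\kappa_0 s}\tilde h(X_s)$ expresses the relevant integral as a boundary term plus a martingale of size $O(\sqrt n)$, so after the $n^{-1}$ prefactor this covariation tends to $0$.

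With these deterministic bracket limits, the continuous martingale CLT for vector-valued martingales yields joint convergence in distribution of $\{Z^{(n)}_t(c)\}_{t,c}$ to a circular complex Gaussian process whose covariance structure matches the computed limits. Taking real parts and using the identity ${\bf E}[\mathrm{Re}\,Z_1\cdot\mathrm{Re}\,Z_2] = \frac{1}{2}\mathrm{Re}\,{\bf E}[Z_1\overline{Z_2}]$ valid for circular complex Gaussians with vanishing pseudo-covariance, the covariance of $G(t,c_1,c_2) = \frac{1}{2\kappa_0}\mathrm{Re}(Z_t(c_1)-Z_t(c_2))$ becomes
$$
\frac{1}{(2\kappa_0)^2}\cdot\frac{C(E_0)}{2}\,\mathrm{Re}\int_0^{t\wedge t'}(e^{2ic_1 u}-e^{2ic_2 u})(e^{-2id_1 u}-e^{-2id_2 u})\,du,
$$
which equals $C_{t\wedge t'}(c_1,c_2;d_1,d_2)$ since $\kappa_0^2 = E_0$.

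The main technical obstacle will be the two-scale ergodic averaging: making the convergence $\int_0^t e^{2i(c-d)u}|\nabla g_{\kappa_0}(X_{nu})|^2\,du \to C(E_0)\int_0^t e^{2i(c-d)u}\,du$ rigorous and uniform in $c,d$ over compact sets, while simultaneously controlling the errors from replacing $g_{\kappa_c}$ by $g_{\kappa_0}$ and from the phase approximations $\theta_{s,n}(\kappa_c)-\theta_{s,n}(\kappa_d) \approx (c-d)s/n$. Once this two-scale limit is established, the rapid-oscillation bound for the pseudo-covariation, the martingale CLT, and the algebraic identification of the covariance are essentially routine.
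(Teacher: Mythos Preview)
Your proposal is correct and follows essentially the same route as the paper: reduce $n^{\alpha-1/2}\Theta^{(n)}_t$ to the real part of a continuous complex martingale via Lemma~\ref{J1}(2), compute the limiting brackets (the pseudo-covariation vanishes by the rapid $e^{4i\kappa_0 s}$ oscillation handled through the resolvent/It\^o trick, which is exactly Lemma~\ref{PartialIntegration}), and invoke the martingale CLT. The only organizational difference is that the paper works directly with the difference martingale $Z^{(n)}_t(\kappa_1,\kappa_2):=n^{\alpha-1/2}(Y_{nt,n}(\kappa_1)-Y_{nt,n}(\kappa_2))$ rather than with the individual $Z^{(n)}_t(c)$, but your pairwise bracket computations recombine to the same limiting covariance, and your ``two-scale averaging'' obstacle is precisely what Lemma~\ref{PartialIntegration} is designed to dispatch.
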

\begin{proof}
Set 
$\kappa_j := \kappa_0 + \frac {c_j}{n}$, 
$\kappa'_j := \kappa_0 + \frac {d_j}{n}$, 
$j=1,2$. 
By
Lemma \ref{J1}(2)
\beq
n^{\alpha - \frac 12}
\Theta_t^{(n)}(c_1, c_2)
&=&
n^{\alpha - \frac 12}
\frac {1}{2 \kappa}
Re
\left(
J_{nt, n}(\kappa_1) - J_{nt, n}(\kappa_2)
\right)
+
O(n^{- \frac 12})
\\
&=&
n^{\alpha - \frac 12}
\frac {1}{2 \kappa}
Re
\left(
Y_{nt, n}(\kappa_1)- Y_{nt, n}(\kappa_2)
\right)
+ o(1)
\eeq
where 
$Y_{t,n}(\kappa)$
is defined in 
(\ref{Y}).
Set 
\[
Z^{(n)}_t(\kappa_1, \kappa_2) := n^{\alpha - \frac 12} 
\left(
Y_{nt, n}(\kappa_1) - Y_{nt, n}(\kappa_2)
\right).
\]
By Lemma \ref{PartialIntegration} we have
\beq
\langle Z^{(n)}(\kappa_1, \kappa_2), 
Z^{(n)}(\kappa'_1, \kappa'_2) 
\rangle_{t}
&=&
o(1)
\\
\langle Z^{(n)}(\kappa_1, \kappa_2), 
\overline{Z^{(n)}(\kappa'_1, \kappa'_2)}
\rangle_{t}
&=&
C(E_0)
\int_0^t 
\left(
e^{2i c_1 u} - e^{2i c_2 u}
\right)
\left(
e^{-2i d_1 u} - e^{-2i d_2 u}
\right)
du
+
o(1).
\eeq
It then suffices 
to use the martingale central limit theorem.
\QED
\end{proof}
{\it Proof of Theorem \ref{2ndLimitTheorem}}\\
Let 
$b_n$
such that 
$
\sqrt{E_{m_j+n}(L_j)}
=
\kappa_0
+
\frac {b_n}{L_j}.
$
Then by Lemma \ref{Eigenvalue}, 
$
b_n = n \pi - \beta + o(1)
$
in probability.
Taking difference between 
\beq
(m_j + n + 1)\pi
&=&
\sqrt{E_{m_j+n+1}(L_j)} L_j 
+
\tilde{\theta}_{L_j, L_j}
\left(
\sqrt{E_{m_j+n+1}(L_j)}
\right)
\\
(m_j + n )\pi
&=&
\sqrt{E_{m_j+n}(L_j)} L_j 
+
\tilde{\theta}_{L_j, L_j}
\left(
\sqrt{E_{m_j+n}(L_j)}
\right)
\eeq
yields
\beq
\left\{
\left(
\sqrt{E_{m_j+n+1}(L_j)}
-
\sqrt{E_{m_j+n}(L_j)}
\right)
L_j
-
\pi
\right\}
L_j^{\alpha - \frac 12}
=
- L_j^{\alpha - \frac 12}
\Theta_1^{(L_j)}(b_{n+1}, b_n).
\eeq
By using 
Lemma \ref{Gauss}
and 
Skorohard's theorem, we obtain the conclusion. 
The statement of covariance
follows from the following computation.  
\beq
&&
Re 
\int_0^1 
\left(
e^{2i ((n+1) \pi - \beta)s}
-
e^{2i (n \pi - \beta)s}
\right)
\left(
e^{-2i ((n'+1) \pi - \beta)s}
-
e^{-2i (n' \pi - \beta)s}
\right)
ds
\\
&=&
\frac {1}{\pi}
\int_0^{2 \pi}
\cos ((n-n') \theta) 
(1 - \cos \theta) 
d \theta.
\eeq
\QED

\section{Critical Case}
In this section we set
$\alpha = \frac 12$.
%
\subsection{Preliminaries}
\begin{lemma}
\label{Apriori1}
Let 
$J_{t, L}(\kappa)$, $\kappa \ge 0$
be the one defined in (\ref{Jdefine}). \\
(1)
For 
$\kappa > 0$ : 
\beq
J_{t, L}(\kappa)
&=&
- \frac {i}{2 \kappa}
\langle F g_{\kappa} \rangle \cdot 
\lambda_L^2 \cdot t
+
Y_{t, L}(\kappa)
+
Z_{t, L}(\kappa)
+
O(\lambda_L) 
+
O(\lambda_L^3 \cdot t)
\eeq
where 
$Y_{t, L}(\kappa)$, 
$Z_{t, L}(\kappa)$
are martingales such that 
\beq
Y_{t, L}(\kappa)
&:=&
- \frac {2i}{2 \kappa} 
\cdot
\lambda_L^2
\left(
\frac 12 K_{4, 3}(\kappa) + \frac 12 K_{0, 3}(\kappa)
- K_{2, 3}(\kappa)
\right)
\\
K_{\beta, 3}(\kappa)
&:=&
\lambda_L^2
\int_0^t 
e^{i \beta \theta_{s, L}(\kappa)}
d \widetilde{M}_s^{(\beta)}(\kappa), 
\quad
\beta = 0, 2, 4
\\
Z_{t, L}(\kappa)
&:=&
\lambda_L
\int_0^t 
e^{2i \theta_{s, L}(\kappa)}
d M_s(\kappa).
\eeq
(2)
For 
$\kappa = 0$ : 
\beq
J_{t, L}(0)
&:=&
Z_{t, L} + O(\lambda_L), 
\quad
Z_{t, L}(0) := \lambda_L M_t
\eeq
where 
$g_{\kappa}$,  
$M_s(\kappa)$, 
$M_s$, 
$\widetilde{M}_s^{(\beta)}(\kappa)$
are defined in 
Lemma \ref{PartialIntegration}.
\end{lemma}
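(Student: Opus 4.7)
The plan is to iterate the Kotani--Ushiroya integration-by-parts device (Lemma \ref{PartialIntegration}) \emph{twice}. The first iteration already appeared in the proof of Lemma \ref{J1} and was enough when $\alpha>\tfrac12$, because the second-order remainder is of size $\lambda_L^{2}t = O(L^{1-2\alpha})\to 0$. In the critical regime $\alpha=\tfrac12$ one has $\lambda_L^{2}t\asymp 1$ on $[0,L]$, so the remainder is no longer negligible and must itself be expanded one step further in order to extract the deterministic drift together with the new martingale pieces $K_{\beta,3}(\kappa)$.

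Concretely, for $\kappa>0$, I would first apply It\^o's formula to $e^{2i\theta_{s,L}(\kappa)}g_{\kappa}(X_s)$ with $g_{\kappa}=(L+2i\kappa)^{-1}F$. Using $(L+2i\kappa)g_{\kappa}=F$ together with the $\theta$-equation \eqref{theta-eq}, the $2i\kappa$-terms cancel and one obtains
\[
J_{t,L}(\kappa)=\lambda_L\bigl[e^{2i\theta_{s,L}(\kappa)}g_{\kappa}(X_s)\bigr]_0^{t}
-\tfrac{i\lambda_L^{2}}{\kappa}\!\int_{0}^{t}\!e^{2i\theta_{s,L}(\kappa)}(Fg_{\kappa})(X_s)\mathrm{Re}(e^{2i\theta_{s,L}(\kappa)}-1)\,ds + Z_{t,L}(\kappa).
\]
The boundary term is $O(\lambda_L)$ and the last term is exactly $Z_{t,L}(\kappa)$; the middle integral $II_{t,L}$ carries the crucial $O(\lambda_L^{2}t)$ contribution. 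To analyse it I would write $\mathrm{Re}(e^{2i\theta}-1)=\tfrac12 e^{2i\theta}+\tfrac12 e^{-2i\theta}-1$, so that $II_{t,L}$ splits into three oscillating integrals with phase indices $\beta\in\{4,0,2\}$ applied to the smooth function $Fg_{\kappa}$ on $M$. A second application of Lemma \ref{PartialIntegration}, now with resolvent $(L+i\beta\kappa)^{-1}$ acting on $Fg_{\kappa}-\delta_{\beta 0}\langle Fg_{\kappa}\rangle$, rewrites each piece as its constant part (only present for $\beta=0$), plus a martingale of the form $K_{\beta,3}$, plus boundary and cubic-order corrections of orders $O(\lambda_L)$ and $O(\lambda_L^{3}t)$ respectively. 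Collecting the $\beta=0$ constant contribution produces precisely the drift $-\tfrac{i}{2\kappa}\langle Fg_{\kappa}\rangle\lambda_L^{2}t$, while the three martingales combine into the stated expression for $Y_{t,L}(\kappa)$.

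The case $\kappa=0$ is much easier: since $\theta_{s,L}(0)\equiv 0$ and $\langle F\rangle=0$, a single application of Lemma \ref{PartialIntegration} with $g=L^{-1}F$ yields $J_{t,L}(0)=\lambda_L[g(X_s)]_0^{t}+\lambda_L M_t$, which is $Z_{t,L}(0)+O(\lambda_L)$. I expect the main obstacle to be purely bookkeeping: one has to verify that in the second iteration every by-product term (boundary terms arising from the resolved functions $(L+i\beta\kappa)^{-1}(Fg_{\kappa}-\langle Fg_{\kappa}\rangle)$, quadratic-variation cross terms from $d\theta_s$ and $dX_s$, and cubic remainders coming from inserting the $\theta$-equation into the $\mathrm{Re}(e^{2i\theta}-1)$ factor) is majorised uniformly in $t\le L$ by either $O(\lambda_L)$ or $O(\lambda_L^{3}t)$. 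The fact that $F,g_{\kappa}\in C^{\infty}(M)$ on a compact manifold, together with the boundedness of $\mathrm{Re}(e^{2i\theta}-1)$, should make all these estimates routine once the algebraic structure has been organised as above.
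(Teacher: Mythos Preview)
Your proposal is correct and follows essentially the same route as the paper: a first application of Lemma~\ref{PartialIntegration} produces the boundary term $O(\lambda_L)$, the martingale $Z_{t,L}(\kappa)$, and the second-order integral $II$; expanding $\mathrm{Re}(e^{2i\theta}-1)$ splits $II$ into the three pieces $K_\beta(\kappa)$ with $\beta\in\{0,2,4\}$, and a second application of Lemma~\ref{PartialIntegration} to each $K_\beta$ yields the drift $-\tfrac{i}{2\kappa}\langle Fg_\kappa\rangle\lambda_L^2 t$ (from the $\beta=0$ mean), the martingales $K_{\beta,3}$, and remainders $O(\lambda_L^2)+O(\lambda_L^3 t)$. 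The $\kappa=0$ case via a single integration by parts using $\langle F\rangle=0$ is also exactly what the paper does.
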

\begin{proof}
(1)
By
Lemma \ref{PartialIntegration}
\beq
J_{t, L}(\kappa)
&=&
\lambda_L
\Biggl\{
\left[
e^{2i \theta_{s, L}(\kappa)} g_{\kappa}(X_s)
\right]_0^t
\\
&& - 
\frac {2i}{2 \kappa}
\int_0^t 
Re
\left(
e^{2i \theta_{s, L}(\kappa)}-1
\right)
e^{2i \theta_{s, L}(\kappa)}
\lambda_L F(X_s) g_{\kappa}(X_s) ds
\\
&& +
\int_0^t 
e^{2i \theta_{s, L}(\kappa)}
dM_s (\kappa) 
\Biggr\}
\\
&=:&
I + II + III.
\eeq
Clearly, 
$I = O(\lambda_L)$ 
and 
$III = Z_{t, L}( \kappa )$. 
For the second term $II$, 
\beq
II
&=&
- \frac {2i}{2 \kappa}
\lambda_L^2
\int_0^t
\left(
\frac {
e^{4i \theta_{s, L}(\kappa)}+1
}
{2}
-e^{2i \theta_{s, L}(\kappa)}
\right)
F(X_s) g_{\kappa}(X_s) ds
\\
&=:&
- \frac {2i}{2 \kappa}
\left(
\frac 12 K_4 (\kappa) + \frac 12 K_0 (\kappa)
- K_2 (\kappa)
\right)
\eeq
where 
\beq
K_{\beta}(\kappa)
&:=&
\lambda_L^2
\int_0^t 
e^{i \beta \theta_{s, L}(\kappa)}
F(X_s) g_{\kappa}(X_s) ds, 
\quad
\beta = 0, 2, 4.
\eeq
For 
$\beta = 2, 4$
we use Lemma \ref{PartialIntegration}
again 
\beq
K_{\beta}(\kappa)
&=&
\left[
\lambda_L^2 
e^{i \beta \theta_{s, L}(\kappa)}
h_{\kappa, \beta}(X_s)
\right]_0^t
\\
&& - 
\frac {i \beta}{2 \kappa}
\cdot
\lambda_L^2
\cdot
\int_0^t
Re \left(
e^{2i \theta_{s, L}(\kappa)}-1
\right)
e^{i \beta \theta_{s, L}(\kappa)}
\lambda_L
F(X_s) g_{\kappa}(X_s) ds
\\
&& +
\lambda_L^2
\int_0^t 
e^{i \beta \theta_{s, L}(\kappa)}
d \widetilde{M}_s^{(\beta)}(\kappa)
\\
&=:&
K_{\beta, 1} + K_{\beta, 2} + K_{\beta, 3}.
\eeq
We have 
$K_{\beta, 1} 
= 
O(\lambda_L^2) = O(L^{- 2 \alpha})$, 
$K_{\beta, 2}
=
O(\lambda_L^3 \cdot t)
=
O(L^{-3 \alpha} \cdot t)$.
Similarly for 
$\beta = 0$, 
\beq
K_0 (\kappa)
&=&
\lambda_L^2\langle F g_{\kappa} \rangle \cdot t
+
\lambda_L^2
\left[ h_{0,\kappa}(X_s) \right]_0^t
+
\lambda_L^2
\widetilde{M}_t
\\
&=:&
K_{0, 1} + K_{0,2} + K_{0, 3}.
\eeq
We then have 
$K_{0, 2}
=
O(\lambda_L^2)
=
O(L^{-2 \alpha})$. 
Putting together 
\beq
II
&=&
- \frac {2i}{2 \kappa} \cdot
\lambda_L^2 \langle F g_{\kappa} \rangle \cdot t
- \frac {2i}{2 \kappa} 
\cdot
\lambda_L^2
\left(
\frac 12 K_{4, 3} + \frac 12 K_{0, 3}
- K_{2, 3}
\right)
\\
&& \qquad
+
O(\lambda_L^2) 
+
O(\lambda_L^3 \cdot t)
\eeq
proving (1). \\
(2)
It immediately follows from 
Lemma \ref{PartialIntegration} and the fact that 
$\langle F \rangle = 0$.
\QED
\end{proof}
\begin{lemma}
\label{Exponential}
For any 
$\gamma > 0$, 
\beq
{\bf E}\left[
\sup_{0 \le t \le L}
e^{\gamma (Y_{t, L}(\kappa) + Z_{t, L}(\kappa))}
\right]
\le 
C_{\gamma}.
\eeq
\end{lemma}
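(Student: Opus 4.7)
The strategy is to view $Y_{t,L}(\kappa)$ and $Z_{t,L}(\kappa)$ as continuous (complex-valued) martingales whose quadratic variations are uniformly bounded on $[0,L]$ in the critical scaling $\alpha = \tfrac12$, and then to apply the classical exponential supermartingale / sub-Gaussian tail bound. Since $\lambda_L^2 L = 1$, everything should come out bounded uniformly in $L$, and the bound on $\mathbf{E}[\sup_t e^{\gamma(\cdot)}]$ follows by integrating the tail estimate against $e^{\gamma a}$.

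First I would verify the quadratic-variation bounds. For $Z_{t,L}(\kappa) = \lambda_L \int_0^t e^{2i\theta_{s,L}(\kappa)}\,dM_s(\kappa)$, the bounds on $d\langle M(\kappa),\overline{M(\kappa)}\rangle_s$ supplied by Lemma \ref{PartialIntegration} give
$$\langle Z_{\cdot,L}(\kappa),\,\overline{Z_{\cdot,L}(\kappa)}\rangle_t \;\le\; C\,\lambda_L^2\, t \;\le\; C, \qquad 0\le t\le L,$$
and similarly for $\langle Z,Z\rangle_t$. For $Y_{t,L}(\kappa)$, which is $\lambda_L^2$ times a linear combination of the iterated stochastic integrals $K_{\beta,3}(\kappa)=\lambda_L^2\int_0^t e^{i\beta\theta_{s,L}(\kappa)}\,d\widetilde M^{(\beta)}_s(\kappa)$, the same computation gives a much smaller bound of order $\lambda_L^{8}\,t \le \lambda_L^{6}$. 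Hence the complex martingale $N_t := Y_{t,L}(\kappa)+Z_{t,L}(\kappa)$ has $\langle N,\overline N\rangle_t, \langle N,N\rangle_t \le C_0$ uniformly in $L$.

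Next I would reduce to a real scalar martingale by splitting $N_t$ into its real and imaginary parts (the relevant scalar estimate is then applied to each of them, which suffices since $|e^{\gamma N_t}|=e^{\gamma\,\mathrm{Re}\,N_t}$). For the real continuous martingale $A_t := \mathrm{Re}\,N_t$ with $\langle A\rangle_t\le C_0$, the process $\exp\!\bigl(\gamma' A_t - \tfrac{\gamma'^2}{2}\langle A\rangle_t\bigr)$ is a positive supermartingale of initial mean $1$ for every $\gamma'>0$. Doob's maximal inequality applied to this supermartingale, combined with the deterministic bound $\gamma'^2\langle A\rangle_L/2 \le \gamma'^2 C_0/2$, yields the sub-Gaussian tail estimate
$$\mathbf{P}\!\left(\sup_{0\le t\le L} A_t \ge a\right) \;\le\; \exp\!\left(-\frac{a^2}{2C_0}\right)$$
after optimizing in $\gamma'$. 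Integrating against $e^{\gamma a}$ then produces $\mathbf{E}[\sup_{t\le L} e^{\gamma A_t}]\le C_\gamma$, and repeating the argument for $\mathrm{Im}\,N_t$ concludes the proof.

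The main obstacle is the first step: carefully tracking the powers of $\lambda_L$ through the nested integration-by-parts of Lemma \ref{PartialIntegration} to confirm that both martingales really do have quadratic variations controlled by $\lambda_L^2 t \le 1$ on the relevant time interval. Once this bookkeeping is in hand, the exponential estimate is a routine application of the Novikov / Doob toolbox (essentially the technique recalled in the Appendix from \cite{KU, KN}).
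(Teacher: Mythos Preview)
Your argument is correct, but it follows a different route from the paper. The paper does not pass through a sub-Gaussian tail bound. Instead it applies It\^o's formula directly to $e^{\gamma(Y_{t,L}+Z_{t,L})}$, bounds the increments of the bracket processes by $(Const.)(\lambda_L^2+\lambda_L^3+\lambda_L^4)\,ds$, and obtains for $f_\gamma(t):=\mathbf{E}\bigl[e^{\gamma(Y_{t,L}+Z_{t,L})}\bigr]$ the integral inequality $f_\gamma(t)\le 1+C_L\int_0^t f_\gamma(s)\,ds$ with $C_L\cdot L$ bounded; Gronwall then gives $\sup_{t\le L}f_\gamma(t)\le e^{C}$. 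To upgrade from $\sup_t \mathbf{E}[\cdot]$ to $\mathbf{E}[\sup_t(\cdot)]$, the paper applies Doob's $L^2$ maximal inequality to the stochastic-integral term in the It\^o expansion (using the already-proved bound on $f_{2\gamma}$) and substitutes back. Your approach, by contrast, exploits from the outset that the total bracket $\langle N,\overline N\rangle_L$ is bounded by a deterministic constant $C_0$ (the key point $\lambda_L^2 L=1$), invokes the Dol\'eans--Dade exponential supermartingale to get $\mathbf{P}(\sup_t \mathrm{Re}\,N_t\ge a)\le e^{-a^2/(2C_0)}$, and then integrates the tail. This is somewhat slicker and bypasses Gronwall entirely; the paper's version is more elementary in that it uses only It\^o, Gronwall, and the $L^2$ Doob inequality, without appealing to the exponential-martingale machinery. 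Both rest on exactly the same quadratic-variation bookkeeping you identify as the ``main obstacle.'' One small remark: since $|e^{\gamma N_t}|=e^{\gamma\,\mathrm{Re}\,N_t}$, the imaginary-part repetition you mention at the end is unnecessary.
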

\begin{proof}
By Ito's formula 
\begin{eqnarray}
&&
e^{\gamma (Y_{t,L}(\kappa)+Z_{t,L}(\kappa))}
\nonumber
\\
&=&
1 
+
\gamma
\int_0^t 
e^{\gamma(Y_{s,L}(\kappa)+Z_{s,L}(\kappa))}
d (Y_{\cdot, L}+ Z_{\cdot, L})_s
\nonumber
\\
&& +
\frac {\gamma^2}{2}
\int_0^t 
e^{\gamma(Y_{s,L}(\kappa)+Z_{s,L}(\kappa))}
d \langle Y_{\cdot, L}+ Z_{\cdot, L},
Y_{\cdot, L}+ Z_{\cdot, L} \rangle_s
\nonumber
\\
&& +
\frac {\gamma^2}{2} 
\int_0^t 
e^{\gamma(Y_{s,L}(\kappa)+Z_{s,L}(\kappa))}
d \langle Y_{\cdot, L}+ Z_{\cdot, L},
\overline{
Y_{\cdot, L}+ Z_{\cdot, L}
} 
\rangle_s
\label{Exponential-ex}
\end{eqnarray}
we note that
\beq
&&
| 
d \langle Y_{\cdot, L}+ Z_{\cdot, L},
Y_{\cdot, L}+ Z_{\cdot, L} \rangle_s
|
\le
(Const.) 
(\lambda_L^2 + \lambda_L^3 + \lambda_L^4) ds
\\
&&
| 
d \langle Y_{\cdot, L}+ Z_{\cdot, L},
\overline{
Y_{\cdot, L}+ Z_{\cdot, L}
} \rangle_s
|
\le
(Const.) 
(\lambda_L^2 + \lambda_L^3 + \lambda_L^4) ds.
\eeq
Setting 
\beq
f_{\gamma}(t) 
&:=&
{\bf E}
\left[
e^{ \gamma(Y_{t,L}(\kappa)+Z_{t,L}(\kappa)) }
\right]
\eeq
we have by 
(\ref{Exponential-ex}), 
\beq
f_{\gamma}(t)
& \le &
1 + C_L \int_0^t f_{\gamma}(s) ds, 
\quad
C_L
:=
C\left(
\lambda_L^2 + \lambda_L^3 + \lambda_L^4
\right).
\eeq
By Grownwall's inequality
$
f_{\gamma}(t) \le e^{C_L \cdot t}
$
yielding 
\beq
\sup_{0 \le t \le L} f_{\gamma}(t)
\le
e^{C}.
\eeq
By martingale inequality, 
\beq
&&
{\bf E}\left[
\sup_{0 \le t \le L}
\left|
\int_0^t 
e^{\gamma(Y_{s,L}(\kappa)+Z_{s,L}(\kappa))}
d \left(
Y_{\cdot, L}+ Z_{\cdot, L}
\right)_s
\right|^2
\right]
\\
& \le &
(Const.)
{\bf E}\left[
\left|
\int_0^L 
e^{\gamma(Y_{s,L}(\kappa)+Z_{s,L}(\kappa))}
d \left(
Y_{\cdot, L}+ Z_{\cdot, L}
\right)_s
\right|^2
\right]
\\
&\le&
(Const.)
{\bf E}\left[
\int_0^L
e^{2\gamma(Y_{t,L}(\kappa)+Z_{t,L}(\kappa))}
d \langle 
Y_{\cdot, L}+ Z_{\cdot, L},
Y_{\cdot, L}+ Z_{\cdot, L}
\rangle_s
\right]
\\
&&
+(Const.)
{\bf E}\left[
\int_0^L
e^{2\gamma(Y_{t,L}(\kappa)+Z_{t,L}(\kappa))}
d \langle 
Y_{\cdot, L}+ Z_{\cdot, L},
\overline{
Y_{\cdot, L}+ Z_{\cdot, L}
}
\rangle_s
\right]
\\
& \le &
(Const.)
\int_0^L
f_{2\gamma}(s) \lambda_L^2 ds
\le
(Const.).
\eeq
Substituting to 
(\ref{Exponential-ex}), we arrive at the conclusion.
\QED
\end{proof}
%
\subsection{Proof of Theorem \ref{Schtau}}
We set 
\beq
\Psi_L(c)
&:=&
2 c + 2 \tilde{\theta}_{L, L}
\left(
\kappa_0 + \frac cL
\right)
=
2 \theta_{L, L}
\left(
\kappa_0 + \frac cL
\right)
-
2 \kappa_0 L.
\eeq
By definition, 
$\Psi_L(c)$
is increasing with respect to 
$c$
for large 
$L$. 
As Lemma \ref{Laplace}
we have 
\begin{lemma}
Writing 
$\kappa_0 L_j$
as 
\[
\kappa_0 L_j
=
m_j \pi +  \beta_j, 
\quad
m_j \in {\bf Z}, 
\quad
\beta_j \in [0, \pi), 
\]
we have
\beq
{\bf E}[ e^{- \xi_{L_j} (f)} ]
=
{\bf E}
\left[
\exp \left(
- \sum_{n \ge n(L_j) - m_j}
f 
\left(
\Psi_{L_j}^{-1}
(2 n \pi - 2 \beta_j)
\right)
\right)
\right].
\eeq
\end{lemma}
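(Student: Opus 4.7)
The plan is to mimic the proof of Lemma \ref{Laplace} with the new parameterization $\Psi_L(c) = 2\theta_{L,L}(\kappa_0 + c/L) - 2\kappa_0 L$. The only real content is a change of variables: we reindex the atoms of $\xi_{L_j}$ by the integer $n$ such that $2\theta_{L_j,L_j}(\sqrt{E_n(L_j)}) = 2n\pi$.

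First, I would invoke the Sturm oscillation theorem exactly as in Lemma \ref{Laplace}: setting $x_n := L_j(\sqrt{E_n(L_j)} - \kappa_0)$ for the $n$-th atom of $\xi_{L_j}$, the eigenvalue equation is equivalent to $\theta_{L_j,L_j}(\kappa_0 + x_n/L_j) = n\pi$. Doubling and subtracting $2\kappa_0 L_j = 2m_j\pi + 2\beta_j$ immediately yields
\[
\Psi_{L_j}(x_n) = 2n\pi - 2\kappa_0 L_j = 2(n - m_j)\pi - 2\beta_j.
\]
Shifting the index $n \mapsto n + m_j$, the constraint $n \ge n(L_j)$ becomes $n \ge n(L_j) - m_j$, and the atoms are exactly $\{ \Psi_{L_j}^{-1}(2n\pi - 2\beta_j) \}_{n \ge n(L_j) - m_j}$.

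Second, I would check that this inversion is legitimate, i.e.\ that $\Psi_{L_j}$ is strictly increasing for large $L_j$. This is precisely the statement, already used implicitly in the definition of $\Psi_L$, that $\theta_{t,L}(\kappa)$ is strictly increasing in $\kappa$ on any compact $\kappa$-interval for $t$ large; see the discussion after equation (\ref{theta-kappa}) combined with the lower bound on $r_t$ from Theorem \ref{Delocalized}. Hence $\Psi_{L_j}^{-1}$ is well-defined on its range for all sufficiently large $j$.

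Finally, summing $f(x_n)$ over $n \ge n(L_j)$, rewriting in terms of the shifted index, and taking expectations produces the claimed identity. There is no substantive obstacle here — the lemma is essentially a bookkeeping restatement of Lemma \ref{Laplace} adapted to the rescaling $c \mapsto 2c$ and the new decomposition $\kappa_0 L_j = m_j\pi + \beta_j$ (in place of the decomposition of $\theta_{L,L}(\kappa_0)$), which is the form best suited to the SDE analysis of $\Psi_L$ developed in the rest of the section.
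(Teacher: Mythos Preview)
Your proposal is correct and follows essentially the same approach as the paper, which simply states ``As Lemma \ref{Laplace}'' and gives no further argument. Your write-up spells out the index shift $n\mapsto n+m_j$ and the monotonicity check explicitly, but these are exactly the bookkeeping steps implicit in the paper's one-line reference.
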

It then suffices to study the behavior of 
$\Psi_{L_j}(c)$
as 
$j \to \infty$. 
Replacing 
$L_j$
by 
$n$, 
we set 
\beq
\kappa_c 
&:=& 
\kappa_0 + \frac cn, 
\quad
c \in {\bf R}
\\
\Psi_t^{(n)}(c)
&:=&
2 ct + 2 \tilde{\theta}_{nt, n}
\left(
\kappa_c 
\right), 
\quad
0 \le t \le 1. 
\eeq
For simplicity we set 
$
J^{(n)}_t(\kappa)
:=
J_{nt, n}(\kappa).
$
By 
(\ref{theta-eq}), 
we have 
\beq
\tilde{\theta}_{nt, n}(\kappa_c)
&=&
\frac {1}{2 \kappa}
Re \;
\left(
J_t^{(n)}(\kappa_c) - J_t^{(n)}(0)
\right)
+ O( \lambda_n ).
\eeq
By 
Lemma \ref{Apriori1}
and the fact that 
$\langle Y_{\cdot, n},Y_{\cdot, n} \rangle_t$, 
$\langle Y_{\cdot, n}, \overline{Y_{\cdot, n}} \rangle_t
=
O(\lambda_n^4 \cdot nt)
=
O(n^{-1})$, 
we have 
\beq
J_t^{(n)}(\kappa_c)
-
J_t^{(n)}(0)
&=&
W_t^{(n)}(c)
- \frac {i}{2 \kappa_0}
\langle F g_{\kappa_0} \rangle t
+
o(1)
\eeq
in probability, where
\beq
W_t^{(n)}(c)
&:=&
Z_{nt, n}(\kappa_c) - Z_{nt, n}(0).
\eeq
Set 
$\varphi := [ g, g ]$, 
$\varphi_{\kappa_0} := [ g_{\kappa_0}, g_{-\kappa_0} ]$.
Then by Lemma \ref{PartialIntegration}
\begin{eqnarray}
\langle
W^{(n)}(c), W^{(n)}(d)
\rangle_t
&=&
\langle \varphi \rangle t + o(1)
\label{WW}
\\
\langle W^{(n)}(c), \overline{ W^{(n)}(d) }\rangle_t
&=&
\langle \varphi_{\kappa_0} \rangle
\int_0^t 
e^{i (\Psi_u^{(n)}(c) - \Psi_u^{(n)}(d))}
du
+ \langle \varphi \rangle t + o(1)
\label{WWbar}
\end{eqnarray}
in probability.
The following estimate
\beq
{\bf E} \left[
| W_t^{(n)}(c) - W_s^{(n)}(c) |^2 
\right]
&=&
{\bf E}\left[
| \langle 
W_t^{(n)}(c), \overline{W_t^{(n)}(c)} \rangle_t
-
\langle W_s^{(n)}(c), \overline{W_s^{(n)}(c)} \rangle_s |^2
\right]
\\
& \le &
\lambda_n^2 \langle \varphi_{\kappa_0} \rangle
\int_{ns}^{nt} du
+
\langle \varphi \rangle (t-s) + o(1)
\\
& \le &
C(t-s) + o(1), 
\eeq
together with martingale inequality and Kolmogorov's theorem implies that the sequence of processes
$(W_t^{(n)}(c))_{0 \le t \le 1}$
is tight. 
Therefore by taking subsequences further 
we may assume  
\beq
W_t^{(n)}(c) \to W_t (c)
\quad
\Psi_t^{(n)}(c) \to \Psi_t(c), 
\quad
a.s.
\eeq
Letting
$n \to \infty$
in 
(\ref{WW}), (\ref{WWbar}), 
martingale 
$W(c)$
satisfies
\beq
\langle W(c), W(d) \rangle_t
&=&
\langle \varphi \rangle t
\\
\langle W(c), \overline{W(d)} \rangle_t
&=&
\langle \varphi_{\kappa_0} \rangle
\int_0^t 
e^{i (\Psi_u(c) - \Psi_u (d))} du
+
\langle \varphi \rangle t
\eeq
so that we have 
\beq
W_t(c)
=
\sqrt{
\frac {\langle \varphi_{\kappa_0} \rangle}{2}
}
\int_0^t e^{i \Psi_s(c)} d Z_s
+
\sqrt{ \langle \varphi \rangle }
B_t
\eeq
where 
$Z_t$, $B_t$
are mutually independent, complex and standard Brownian motions respectively. 
Since 
\[
\Psi_t(c) = 2ct 
+
\frac {1}{\kappa_0} Re \; 
\left(
W_t(c) - \frac {i}{2 \kappa_0} \cdot t \cdot 
\langle F g_{\kappa_0} \rangle
\right)
\]
we have 
\beq
d \Psi_t (c)
&=&
2c dt + 
\frac {1}{\kappa_0}
\Biggl\{
\sqrt{
\frac {\langle \varphi_{\kappa_0} \rangle}{2}
}
Re \; 
\left(
e^{i \Psi_s(c)} d Z_s
\right)
+
\sqrt{ \langle \varphi \rangle }
d B_t
- Re \; 
\frac {i}{2\kappa_0}\langle F g_{\kappa_0} \rangle
dt
\Biggr\}.
\eeq
Similar arguments show that 
$\Psi_t(c_1), \cdots, \Psi_t(c_m)$
jointly satisfy the SDE (\ref{SchtauSDE}). 
This determines the process
$\Psi_t(c)$
uniquely, which is strictly-increasing by SDE comparison theorem.
That 
$\Psi_t^{(n)}(c) \to \Psi_t(c)$
also in the sense of strictly-increasing function valued process 
follows from \cite{KN}, Proposition 9.2. 
By Lemma \ref{Inverse}
we finish the proof of Theorem \ref{Schtau}.
%

%
\subsection{Behavior of Solutions for Critical Case}
\begin{theorem}
\label{Delocalized}
Suppose that 
$x_t$
is the solution to the Schr\"odinger equation : 
$H x_t = \kappa^2 x_t$, $\kappa > 0$, 
%
and let 
$r_t(\kappa), \theta_t(\kappa)$
be the corresponding Pr\"ufer variables. 
Then we can find 
$C_{\kappa} < \infty$
such that for any 
$a>0$, 
\beq
{\bf P}\left(
\frac 1a \le r_t \le a
\;
\mbox{ for any }
t \in [0,L]
\right)
\ge 
1 - \frac {C_{\kappa}}{a}.
\eeq
\end{theorem}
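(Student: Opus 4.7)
The plan is to read off the required bound from equation (\ref{r-eq}) together with the decomposition of $J_{t,L}(\kappa)$ established in Lemma \ref{Apriori1} and the exponential moment estimate in Lemma \ref{Exponential}. By (\ref{r-eq}) we have
\[
\log r_t(\kappa) = \frac{1}{2\kappa}\,\mathrm{Im}\,J_{t,L}(\kappa),
\]
and Lemma \ref{Apriori1}(1) gives
\[
J_{t,L}(\kappa) = -\frac{i}{2\kappa}\langle F g_\kappa\rangle\,\lambda_L^{2}\, t \;+\; Y_{t,L}(\kappa) + Z_{t,L}(\kappa) \;+\; O(\lambda_L)+O(\lambda_L^{3} t).
\]
At the critical scaling $\lambda_L = L^{-1/2}$, and for $0\le t\le L$, the deterministic drift is bounded by $|\langle F g_\kappa\rangle|/(2\kappa)$, while the error terms are $O(L^{-1/2})$. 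Hence
\[
\sup_{0\le t\le L}\bigl|\log r_t(\kappa)\bigr| \;\le\; C_\kappa^{(0)} \;+\; \frac{1}{2\kappa}\sup_{0\le t\le L}\bigl|\mathrm{Im}\bigl(Y_{t,L}(\kappa)+Z_{t,L}(\kappa)\bigr)\bigr|
\]
with $C_\kappa^{(0)}$ independent of $L$.

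Next I would invoke Lemma \ref{Exponential} with $\gamma=1/(2\kappa)$ to bound $\mathbf{E}[\sup_t r_t(\kappa)]$, and apply it again with the roles of $Y,Z$ replaced by $-Y,-Z$ (the proof via Itô's formula is symmetric under sign-flip because the quadratic variations only depend on $|\gamma|$) to bound $\mathbf{E}[\sup_t 1/r_t(\kappa)]$. Both expectations are then $\le C_\kappa$ uniformly in $L$. Markov's inequality yields
\[
\mathbf{P}\!\left(\sup_{0\le t\le L} r_t(\kappa) > a\right) \le \frac{C_\kappa}{a}, \qquad
\mathbf{P}\!\left(\inf_{0\le t\le L} r_t(\kappa) < \frac{1}{a}\right) \le \frac{C_\kappa}{a},
\]
and a union bound gives the stated probability estimate after relabelling $C_\kappa$.

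The one genuinely non-trivial step is the control of the martingales $Y_{t,L}$ and $Z_{t,L}$ uniformly in $L$, and this is precisely what Lemma \ref{Exponential} delivers. The delicate point there is that the quadratic variations of $Y+Z$ are of order $\lambda_L^{2}\,dt$ and the resulting Grönwall constant $C_L=C(\lambda_L^{2}+\lambda_L^{3}+\lambda_L^{4})$ satisfies $C_L\cdot L = O(1)$ at the critical scale — so the exponential moments stay bounded as $L\to\infty$. Once this is in hand, the remainder of the argument is a straightforward application of Doob's $L^2$-inequality (already carried out inside Lemma \ref{Exponential}) and Markov's inequality. No use of the detailed structure of $Y$ and $Z$ is needed beyond the martingale property and the quadratic variation estimates listed in the proof of Lemma \ref{Exponential}.
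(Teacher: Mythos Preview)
Your proposal is correct and follows essentially the same route as the paper: write $\log r_t$ via (\ref{r-eq}), decompose $J_{t,L}$ through Lemma \ref{Apriori1} into a uniformly bounded deterministic part plus the martingales $Y_{t,L}+Z_{t,L}$, invoke Lemma \ref{Exponential} (and its sign-symmetric variant) for the exponential moments, and conclude by Markov/Chebyshev. Your explicit remark that the Gr\"onwall constant $C_L\cdot L=O(1)$ at the critical scale, and that the proof of Lemma \ref{Exponential} is symmetric under $\gamma\mapsto-\gamma$, spells out steps the paper leaves implicit.
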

%
%
\begin{proof}
By
(\ref{r-eq}), 
Lemma \ref{Apriori1}, \ref{Exponential}, 
\beq
\log r_t &=& B(t) + Y(t) + Z(t), 
\eeq
where
\beq
&&
M:=\sup_{0 \le t \le L}| B(t) | < \infty, 
\quad
{\bf E}[ \sup_{0 \le t \le L} e^{Y_t + Z_t} ] 
 \le  C_1, 
\quad
{\bf E}[ \sup_{0 \le t \le L} e^{-(Y_t + Z_t)} ] 
\le  C_2. 
\eeq
Then 
the conclusion follows from Chebyshev's inequality. 
\QED
\end{proof}
%

\section{Decaying potentials with critical rate}
For the proof of 
Theorem \ref{Carousel}, 
we solve the eigenvalue equation
$H x_t = k^2 x_t$
with 
$x_L=0$, $x'_L / \kappa = 1$. 
Here we suppose that the distribution of 
$X_0$ 
is uniform on 
$M$.
Then 
$(X_t)_{t \in {\bf R}}$
is stationary, so that if 
$( Y_t )_{t \in {\bf R}}$
is a independent copy of 
$(X_t)$,  
we may replace 
$H_n$ 
by the following operator. 
\[
\hat{H}_n
:= - \frac {d^2}{d t^2} + \hat{V}(t), 
\quad
\hat{V}(t) := a(n-t) F(Y_t)
\quad
\mbox{ on }
\quad
L^2 (0,n).
\]
Moreover by \cite{KN}, 
$\xi_{\infty} = \lim_{n \to \infty}\xi_n$
is uniquely determined as far as 
$a(t) = t^{- \frac 12}(1+o(1))$. 
Therefore, without loss of generality, we may suppose that 
\beq
a(s) = 
\frac {1}{\sqrt{s}}, 
\quad 
s \ge 1.
\eeq
Furthermore we set 
\beq
\kappa_0 &:=& \sqrt{E_0}, \quad
\kappa_c := \kappa_0 + \frac cn, 
\quad
c \in {\bf R}
\\
\Psi_t^{(n)}(c)
&:=&
2 \theta_{nt} 
\left(
\kappa_c
\right) - 
2 \theta_{nt} (\kappa_0).
\eeq
%
\subsection{A priori estimate}
In this subsection we show the following theorem. 
\begin{theorem}
\label{Apriori}
For any fixed 
$T < 1$
we have
\beq
\Psi_t^{(n)}(c)
&=&
2 c t 
+ 
\frac {1}{2 \kappa_0} Re \; V_t^{(n)}(c)
+
\delta_{nt}(\kappa_c) - \delta_{nt}(\kappa_0)
+
O(n^{- \frac 12}), 
\quad
0 \le t \le T
\eeq
where 
\beq
V_t^{(n)}(c)
&:=&
Y_t^{(n)}(\kappa_c) - V_t^{(n)}(\kappa_0)
\\
Y_t^{(n)}(\kappa)
&:=&
\int_0^{nt} a(n-s) e^{2i \theta_s(\kappa)} dM_s(\kappa).
\eeq
The concrete form of 
$\delta_{nt}(\kappa)$
is given in Lemma \ref{J} below.
Moreover
\beq
&&
{\bf E} \left[
\sup_{0 \le t \le T}
| V_t^{(n)}(c) |
\right]
\le
C 
\frac {1}{\sqrt{1-T}}
\\
&&
{\bf E} \left[
\sup_{0 \le t \le T}
| \delta_{nt}(\kappa_c) - \delta_{nt}(\kappa_0) |^2
\right]
\stackrel{n \to \infty}{\to} 0.
\eeq
\end{theorem}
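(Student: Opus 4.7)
The starting point is the identity
$\Psi_t^{(n)}(c) = 2ct + 2\tilde\theta_{nt}(\kappa_c) - 2\tilde\theta_{nt}(\kappa_0)$,
combined with
$2\kappa\, \tilde\theta_{nt}(\kappa) = Re\bigl(J_t^{(n)}(\kappa) - J_t^{(n)}(0)\bigr)$
where $J_t^{(n)}(\kappa) := \int_0^{nt} a(n-s)\, e^{2i\theta_s(\kappa)} F(X_s)\, ds$. So the theorem is really a statement about the oscillatory integrals $J_t^{(n)}(\kappa)$. I would apply Lemma \ref{PartialIntegration} to $J_t^{(n)}(\kappa)$ exactly as in Lemma \ref{Apriori1}, writing $F = (L + 2i\kappa) g_{\kappa}$ and using It\^o to replace $e^{2i\theta_s}F(X_s)\,ds$ by $d[e^{2i\theta_s} g_\kappa(X_s)]$ + drift from the $\theta_s$ dynamics + a martingale increment $dM_s(\kappa)$. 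Because the weight $a(n-s)$ now depends on $s$, I would first perform a Stieltjes-type integration by parts in $s$ to peel off the $a(n-s)$ factor, producing a boundary term at $s = nt$ and a Riemann-Stieltjes integral against $da(n-s)$.

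The output of this computation is a decomposition $J_t^{(n)}(\kappa) = Y_t^{(n)}(\kappa) + \Delta_{nt}(\kappa)$, where $Y_t^{(n)}(\kappa) = \int_0^{nt} a(n-s) e^{2i\theta_s(\kappa)} dM_s(\kappa)$ is the martingale term and $\Delta_{nt}(\kappa)$ collects the boundary, drift, and iterated partial-integration pieces. Taking the difference at $\kappa_c$ and $\kappa_0$, replacing $1/\kappa_c$ by $1/\kappa_0 + O(1/n)$, and noting that $|J_t^{(n)}(\kappa)|$ is $O(1)$ up to a log factor over $t \le T$, the mismatch from this replacement is $O(n^{-1})$; I would absorb it (and the $J_t^{(n)}(0)$ piece, which has no $e^{2i\theta}$ oscillation and is handled by $\langle F\rangle = 0$) into the $O(n^{-1/2})$ remainder. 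This produces exactly the asserted form with $V_t^{(n)}(c) = Y_t^{(n)}(\kappa_c) - Y_t^{(n)}(\kappa_0)$ and $\delta_{nt}(\kappa) = \frac{1}{2\kappa_0}Re\,\Delta_{nt}(\kappa)$.

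For the bound on $V^{(n)}$, I would apply Doob's maximal inequality to the complex martingale $V^{(n)}(c)$ and compute its quadratic covariation via Lemma \ref{PartialIntegration}. Since $|e^{2i\theta_s(\kappa_c)} - e^{2i\theta_s(\kappa_0)}| \le 2$ and $a(n-s)^2 = (n-s)^{-1} \le (n(1-T))^{-1}$ for $0 \le s \le nT$, the crude estimate $\int_0^{nT} a(n-s)^2\, ds \le T/(1-T)$ gives $E[\sup_{t \le T}|V_t^{(n)}(c)|^2] \le C/(1-T)$, hence the $C/\sqrt{1-T}$ bound. For the $\delta$-bound, I would use that $\delta_{nt}(\kappa)$ is built from $g_\kappa, h_{\kappa,\beta}$ and their derivatives, which depend smoothly on $\kappa$ uniformly for $\kappa$ near $\kappa_0$; combined with $\kappa_c - \kappa_0 = c/n$, each deterministic term in $\delta_{nt}(\kappa_c) - \delta_{nt}(\kappa_0)$ is $O(c/n)$ times a boundary or drift factor that remains $L^2$-integrable under the $t \le T$ cutoff, and each martingale-type piece has quadratic variation scaling like $(c/n)^2 \cdot \int_0^{nT} a(n-s)^{\text{something}} ds \to 0$.

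The main obstacle is the book-keeping in Step 2: $\delta_{nt}(\kappa)$ has a concrete form given by Lemma \ref{J}, and one must verify that each of its summands — boundary terms at $s = nt$, a Stieltjes-type term against $da(n-s)$, and the iterated $K_{\beta,j}$-type pieces of Lemma \ref{Apriori1} — really cancels between $\kappa_c$ and $\kappa_0$ in $L^2$ when $t \le T$. The dangerous piece is the boundary contribution $a(n-nt) e^{2i\theta_{nt}(\kappa)} g_\kappa(X_{nt})$, whose magnitude blows up as $t \uparrow 1$ but is bounded by $(n(1-T))^{-1/2}$ uniformly for $t \le T$; its difference at $\kappa_c$ and $\kappa_0$ is then controlled using $|\theta_{nt}(\kappa_c) - \theta_{nt}(\kappa_0)| \le Ct + o(1)$ from Step 1 and smoothness of $\kappa \mapsto g_\kappa$. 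Once this verification is in place the estimates are routine.
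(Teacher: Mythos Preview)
Your strategy coincides with the paper's: write $\Psi_t^{(n)}(c)$ in terms of $J_t^{(n)}(\kappa_c)-J_t^{(n)}(\kappa_0)$, apply Lemma~\ref{PartialIntegration} to $J_t^{(n)}(\kappa)$ to split off the martingale $Y_t^{(n)}(\kappa)$, and control $V^{(n)}$ via Doob and the crude bound $\int_0^{nT}a(n-s)^2\,ds\le C/(1-T)$.

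One point in your $\delta$-estimate needs correction. You argue that the deterministic pieces of $\delta_{nt}(\kappa_c)-\delta_{nt}(\kappa_0)$ are $O(c/n)$ by smoothness of $\kappa\mapsto g_\kappa,h_{\kappa,\beta}$. But $\delta_{nt}(\kappa)$ also depends on $\kappa$ through $e^{i\beta\theta_s(\kappa)}$, and $\theta_s(\kappa_c)-\theta_s(\kappa_0)=\tfrac12\Psi_{s/n}^{(n)}(c)$ is $O(1)$, not $O(c/n)$ (since $\partial_\kappa\theta_s\sim s$ can be of order $n$). The paper sidesteps this: the boundary and $da(n-s)$ pieces (its $\delta^{(1)}$) are each $O((n(1-t))^{-1/2})$ \emph{individually}, so no differencing is needed there; the drift $\int_0^{nt}a(n-s)^2 F g_\kappa\,ds$ with \emph{no} oscillating factor (its $C_t^{(n)}(\kappa)$) is the one place where your $O(c/n)\cdot\log n$ smoothness argument is valid; and the remaining martingale pieces $Z_\beta^{(n)}(\kappa)=\int_0^{nt}a(n-s)^2 e^{i\beta\theta_s}d\widetilde M_s$ have quadratic-variation difference bounded by $C\int_0^{nT}a(n-s)^4\,ds=O(1/n)$, which vanishes without any appeal to $\kappa_c-\kappa_0$ being small. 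With that adjustment your outline is complete.
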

First of all, by 
(\ref{theta-eq})
it is easy to see 
\begin{eqnarray}
\Psi_t^{(n)}(c)
&=&
2ct + 
\frac {1}{\kappa_0} Re \;
\left(
J_t^{(n)}(\kappa_c) - J_t^{(n)}(\kappa_0)
\right)
+
O(n^{-\frac 12})
\label{Decomposition}
\end{eqnarray}
where
\beq
J_t^{(n)}(\kappa)
&:=&
\int_0^{nt}
e^{2i \theta_s(\kappa)} a(n-s) F(Y_s) ds.
\nonumber
\eeq
We decompose this integral by using Lemma \ref{PartialIntegration}. 
The result is :  
%
\begin{lemma}
\label{J}
\beq
J_t^{(n)}(\kappa)
&=&
C_t^{(n)}(\kappa)
+
\delta_{nt}(\kappa)
+ Y_t^{(n)}(\kappa), 
\quad
\kappa > 0
\eeq
where 
\beq
C_t^{(n)}(\kappa)
&:=&
- \frac {i}{2 \kappa}
\int_0^{nt}
a(n-s)^2 F(Y_s) g_{\kappa}(Y_s) ds
\\
\delta_{nt}(\kappa)
&:=&
\delta_{nt}^{(1)}(\kappa) + \delta_{nt}^{(2)} (\kappa)
\\
\delta_{nt}^{(1)}(\kappa)
&:=&
\left[
a(n-s) e^{2i \theta_s(\kappa)} g_{\kappa}(Y_s)
\right]_0^{nt}
- \int_0^{nt}
( a(n-s) )' e^{2i \theta_s(\kappa)} g_{\kappa}(Y_s) ds
\\
\delta_{nt}^{(2)}(\kappa)
&:=&
\frac {i}{\kappa}
\int_0^{nt}
\left(
\frac {e^{2i \theta_s(\kappa)}}{2} -1 
\right)
e^{2i \theta_s(\kappa)}
a(n-s)^2 F(Y_s) g_{\kappa}(Y_s) ds
\\
Y_t^{(n)}(\kappa)
&:=&
\int_0^{nt} a(n-s) e^{2i \theta_s(\kappa)} dM_s(\kappa).
\eeq
\end{lemma}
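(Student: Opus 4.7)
The plan is to derive the identity by combining It\^o's formula for the semi-martingale $e^{2i\theta_s(\kappa)}g_\kappa(Y_s)$ with a deterministic integration-by-parts against the weight $a(n-s)$. The function $g_\kappa=(L+2i\kappa)^{-1}F$ is well-defined and smooth for $\kappa>0$ by ellipticity of $L$ on the compact manifold $M$ and satisfies $Lg_\kappa+2i\kappa g_\kappa=F$; the process $M_s(\kappa)$ is the martingale part in the Dynkin decomposition $g_\kappa(Y_s)=g_\kappa(Y_0)+\int_0^s Lg_\kappa(Y_u)\,du+M_s(\kappa)$, which is the $\kappa$-dependent martingale appearing in Lemma \ref{PartialIntegration}.

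Using $d\theta_s=\kappa\,ds+\frac{1}{2\kappa}Re[(e^{2i\theta_s}-1)q(s)]\,ds$ from (\ref{theta-eq}) and vanishing of cross variations (since $\theta_s$ is of finite variation in $s$), It\^o's formula gives
\[
d\bigl(e^{2i\theta_s}g_\kappa(Y_s)\bigr)=e^{2i\theta_s}F(Y_s)\,ds+\frac{i}{\kappa}e^{2i\theta_s}g_\kappa(Y_s)\,Re[(e^{2i\theta_s}-1)q(s)]\,ds+e^{2i\theta_s}\,dM_s(\kappa).
\]
Solving this identity for $e^{2i\theta_s}F(Y_s)\,ds$, multiplying by $a(n-s)$, and integrating on $[0,nt]$ decomposes $J_t^{(n)}(\kappa)$ into three pieces, each of which I will identify with one of the terms in the statement.

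The first piece, $\int_0^{nt}a(n-s)\,d(e^{2i\theta_s}g_\kappa(Y_s))$, is handled by deterministic Stieltjes integration-by-parts against the smooth weight $a(n-\cdot)$, using $\frac{d}{ds}a(n-s)=-a'(n-s)$; this produces the boundary plus derivative contributions that assemble into $\delta^{(1)}_{nt}(\kappa)$. The second piece, after substituting $q(s)=a(n-s)F(Y_s)$ and using the trigonometric identity $e^{2i\theta}(\cos 2\theta-1)=\frac{1}{2}+\frac{1}{2}e^{4i\theta}-e^{2i\theta}$, splits into a non-oscillatory mean contribution that is precisely $C_t^{(n)}(\kappa)$ and an oscillatory remainder equal to $\delta^{(2)}_{nt}(\kappa)$. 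The third piece is, by construction, the martingale $Y_t^{(n)}(\kappa)$. Collecting these yields the stated four-term decomposition.

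Since the lemma is a purely algebraic decomposition with no estimates attached, there is no serious obstacle. The only points requiring care are the trigonometric identity that separates $C_t^{(n)}$ from $\delta^{(2)}_{nt}$, the bookkeeping of signs arising from differentiating $a(n-s)$, and the verification that cross variations vanish in It\^o's formula (which reduces to $\theta_s$ being absolutely continuous in $s$, driven deterministically given the trajectory $Y$). The substantive analytic content, namely the supremum bounds on $V_t^{(n)}$ and on the differences $\delta_{nt}(\kappa_c)-\delta_{nt}(\kappa_0)$ asserted in Theorem \ref{Apriori}, is what subsequently uses this decomposition rather than being part of the lemma itself.
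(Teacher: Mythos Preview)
Your approach is correct and coincides with the paper's: the paper simply invokes Lemma~\ref{PartialIntegration}, and your It\^o computation for $d\bigl(e^{2i\theta_s}g_\kappa(Y_s)\bigr)$ followed by integration by parts against the weight $a(n-\cdot)$ is exactly how that lemma is derived in this special case. One small caveat on sign conventions: in the paper's Appendix $M_s(\kappa)=M_s(F,2\kappa)$ is defined via $\int_0^t e^{2i\kappa s}F(X_s)\,ds=[e^{2i\kappa s}g_\kappa(X_s)]_0^t+\int_0^t e^{2i\kappa s}\,dM_s(\kappa)$, so $dM_s(\kappa)$ is the \emph{negative} of the martingale increment in your Dynkin decomposition of $g_\kappa(Y_s)$; with that convention the third piece indeed comes out as $+Y_t^{(n)}(\kappa)$.
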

To compute 
$J_t^{(n)}(\kappa_c) - J_t^{(n)}(\kappa_0)$
we estimate the difference of them : 
\begin{lemma}
\label{CDelta}
For 
$0 \le t < 1$
\beq
&(1)& \quad
\left|
C_t^{(n)}(\kappa_c) - C_t^{(n)}(\kappa_0)
\right|
\le
C \,\frac {\log n}{n}
\\
&(2)& \quad
\delta^{(1)}_{nt}(\kappa_c) - \delta^{(1)}_{nt}(\kappa_0)
=
O(n^{- \frac 12}(1-t)^{-\frac 12})
\\
&(3)&\quad
\delta_{nt}^{(2)}(\kappa)
=
- \frac {i}{2 \kappa} Z_4^{(n)}(\kappa) + \frac {i}{\kappa} Z_2^{(n)}(\kappa)
+
O(n^{-\frac 12}(1-t)^{-\frac 12})
\eeq
where
\beq
Z_{\beta}^{(n)}(\kappa)
&:=&
\int_0^{nt} a(n-s)^2 e^{i \beta \theta_s(\kappa)} 
d \tilde{M}_s (\kappa).
\eeq
\end{lemma}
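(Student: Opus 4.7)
The plan is to establish the three bounds by direct computation, using (i) the smoothness $\|g_{\kappa_c} - g_{\kappa_0}\|_{\infty} = O(1/n)$ (since $g_\kappa = (L+2i\kappa)^{-1}F$ is smooth in $\kappa$ and $\kappa_c - \kappa_0 = c/n$), together with (ii) the elementary bounds (via $u = n-s$)
\[
\int_0^{nt} a(n-s)^2\,ds = \log\frac{1}{1-t} \le \log n, \qquad \int_0^{nt}|(a(n-s))'|\,ds = a(n(1-t)) - a(n) \le C(n(1-t))^{-1/2}.
\]
For (1), the integrand of $C_t^{(n)}(\kappa_c) - C_t^{(n)}(\kappa_0)$ reduces, by smoothness in $\kappa$, to a quantity bounded by $(C/n)\,a(n-s)^2$, and the first integral estimate gives $C \log n/n$.

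For (2), split $\delta_{nt}^{(1)}(\kappa_c) - \delta_{nt}^{(1)}(\kappa_0)$ into its two boundary contributions and its integral term. At $s=0$, the contribution is $-a(n)[g_{\kappa_c}(Y_0) - g_{\kappa_0}(Y_0)] = O(n^{-3/2})$. At $s=nt$, it is $a(n(1-t))$ times the uniformly bounded difference $e^{2i\theta_{nt}(\kappa_c)}g_{\kappa_c}(Y_{nt}) - e^{2i\theta_{nt}(\kappa_0)}g_{\kappa_0}(Y_{nt})$, giving $O((n(1-t))^{-1/2})$; note that the exponentials need not be close even though $\kappa_c - \kappa_0 \to 0$. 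The integral against $(a(n-s))'$ obeys the same bound by the second integral estimate above, and combining gives the claim.

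For (3), expand $(e^{2i\theta}/2 - 1)e^{2i\theta} = \frac{1}{2}e^{4i\theta} - e^{2i\theta}$ so that
\[
\delta_{nt}^{(2)}(\kappa) = \frac{i}{2\kappa}\int_0^{nt} e^{4i\theta_s(\kappa)} a(n-s)^2 F(Y_s) g_{\kappa}(Y_s)\,ds - \frac{i}{\kappa}\int_0^{nt} e^{2i\theta_s(\kappa)} a(n-s)^2 F(Y_s) g_{\kappa}(Y_s)\,ds,
\]
and apply Lemma \ref{PartialIntegration} to each integral ($\beta = 2, 4$) in exactly the same way as was done for $K_\beta$ inside the proof of Lemma \ref{Apriori1}. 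This extracts the martingale $Z_\beta^{(n)}(\kappa)$ together with a boundary term of size $a(n(1-t))^2 = (n(1-t))^{-1}$, an $a'$-type correction of size $(n(1-t))^{-1/2}$, and a quadratic correction coming from the SDE for $\theta_s$ bounded by $\int_0^{nt} a(n-s)^3\,ds = O((n(1-t))^{-1/2})$. Using $(n(1-t))^{-1} \le (n(1-t))^{-1/2}$ whenever $n(1-t) \ge 1$, every non-martingale term is $O(n^{-1/2}(1-t)^{-1/2})$, which yields the stated identity (with the coefficients of $Z_2^{(n)}, Z_4^{(n)}$ read off from the expansion above and the definition of $Z_\beta^{(n)}$). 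The main obstacle is precisely the blow-up of the $s=nt$ boundary factor $a(n(1-t)) = (n(1-t))^{-1/2}$ as $t \uparrow 1$; this is the source of the $(1-t)^{-1/2}$ in every error term and the reason Theorem \ref{Apriori} is stated only for $t \le T < 1$.
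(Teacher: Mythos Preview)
Your proposal is correct and follows essentially the same approach as the paper. The paper in fact only writes out (3), declaring (1) and (2) immediate; your treatment of (3) via the expansion $(\tfrac{1}{2}e^{4i\theta}-e^{2i\theta})$ into the two integrals $D_4,D_2$ and a second application of Lemma~\ref{PartialIntegration} to each is exactly the paper's argument, with the same error terms $\int_0^{nt}(a(n-s)^2)'\,ds=O(n^{-1}(1-t)^{-1})$ and $\int_0^{nt}a(n-s)^3\,ds=O(n^{-1/2}(1-t)^{-1/2})$. (A minor remark: your expansion gives coefficients $+\tfrac{i}{2\kappa}$ and $-\tfrac{i}{\kappa}$ for $Z_4^{(n)},Z_2^{(n)}$, the opposite sign of the stated lemma; your algebra is correct and this is an inconsequential sign slip in the paper's statement.)
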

\begin{proof}
It is sufficient to show (3). 
\beq
\delta_{nt}^{(2)}(\kappa)
&=&
- \frac {i}{2 \kappa} D_4^{(n)}(\kappa) + \frac {i}{\kappa} D_2^{(n)}(\kappa)
\eeq
where we set 
\beq
D_{\beta}^{(n)}(\kappa)
&:=&
\int_0^{nt}
a(n-s)^2 e^{i \beta \theta_s(\kappa)} 
F(Y_s) g_{\kappa} (Y_s) ds, 
\quad
\beta = 2, 4.
\eeq
Thus it suffices to estimate
$D_{\beta}(\kappa)$.
By
Lemma \ref{PartialIntegration}
\beq
D_{\beta}^{(n)}(\kappa)
&=&
\left[
a(n-s)^2 e^{i \beta \theta_s(\kappa)} h_{\kappa, \beta}(Y_s) 
\right]_0^{nt}
\\
&& - 
\int_0^{nt} ( a(n-s)^2 )' e^{i \beta \theta_s(\kappa)}
h_{\kappa, \beta}(Y_s) ds
\\
&& - 
\frac {i \beta}{2 \kappa}
\int_0^{nt}
Re \left( e^{2i \theta_s(\kappa)} - 1 \right)
e^{i \beta \theta_s(\kappa)} 
a(n-s)^3 F(Y_s) h_{\kappa, \beta}(Y_s) ds
\\
&&+
\int_0^{nt} a(n-s)^2 e^{i \beta \theta_s(\kappa)} 
d \widetilde{M}_s^{(\beta)} (\kappa).
\eeq
Since
$
\int_0^{nt} 
(a(n-s)^2)' ds
=
O(n^{-1}(1-t)^{-1})
$
and
$
\int_0^{nt} 
a(n-s)^3 ds
=
O(n^{-\frac 12}(1-t)^{-\frac 12})
$, 
we have 
\beq
D_{\beta}^{(n)}(\kappa)
=
Z_{\beta}^{(n)}(\kappa)
+
O(n^{- \frac 12}(1-t)^{-\frac 12}).
\eeq
\QED
\end{proof}
We next estimate
$V_t^{(n)}(c)$.
\begin{lemma}
\label{V}
(1)
If 
$0 < t < 1$, 
we have
\beq
{\bf E}\left[ 
\langle 
V^{(n)}(c), \overline{V^{(n)}}(c) 
\rangle_t
\right]
\le
\frac {C}{1-t}
\eeq
(2)
For fixed 
$0 < T < 1$, 
we have
\[
{\bf E}\left[ 
\sup_{0 \le t \le T} | V_t^{(n)}(c) | 
\right]
\le
(const.)
\frac {C}{\sqrt{1 - T}}
\]
\end{lemma}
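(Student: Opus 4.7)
The plan is to exploit the fact that $V^{(n)}(c)$ is a complex-valued martingale in $t$, given as a stochastic integral against the martingales $M_s(\kappa_c)$ and $M_s(\kappa_0)$ coming from Lemma \ref{PartialIntegration}. Part (1) is a direct computation of the joint bracket, and part (2) follows from (1) plus Doob's maximal inequality and Cauchy--Schwarz. So the real content is to control the mixed bracket $d \langle M(\kappa), \overline{M(\kappa')}\rangle_s$ uniformly, and then to integrate the scalar factor $a(n-s)^2$ against $ds$.

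For part (1), I would write
\[
V_t^{(n)}(c) \;=\; \int_0^{nt} a(n-s) \bigl[ e^{2i \theta_s(\kappa_c)} dM_s(\kappa_c) - e^{2i \theta_s(\kappa_0)} dM_s(\kappa_0) \bigr],
\]
and expand $\langle V^{(n)}(c), \overline{V^{(n)}(c)} \rangle_t$ as a sum of four terms corresponding to the pairings of $M(\kappa_c), \overline{M(\kappa_c)}, M(\kappa_0), \overline{M(\kappa_0)}$. By Lemma \ref{PartialIntegration}, each of the densities $d\langle M(\kappa), \overline{M(\kappa')}\rangle_s/ds$ and $d\langle M(\kappa), M(\kappa')\rangle_s/ds$ is bounded by a constant independent of $\kappa, \kappa'$ in a compact neighbourhood of $\kappa_0$. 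Hence
\[
\langle V^{(n)}(c), \overline{V^{(n)}(c)} \rangle_t \;\le\; C \int_0^{nt} a(n-s)^2 \, ds.
\]
Using $a(u)^2 = 1/u$ for $u\ge 1$ and the boundedness of $a$ near the origin, the integral is $-\log(1-t) + O(1)$, which is dominated by $C/(1-t)$ for $0 < t < 1$. Taking expectations finishes part (1).

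For part (2), since $V^{(n)}(c)$ is a complex continuous martingale, $|V_t^{(n)}(c)|^2$ is a nonnegative submartingale, so Doob's $L^2$ maximal inequality combined with the $L^2$--isometry yields
\[
\mathbf{E}\!\left[ \sup_{0\le t \le T} |V_t^{(n)}(c)|^2 \right] \;\le\; 4\, \mathbf{E}\!\left[ |V_T^{(n)}(c)|^2 \right] \;=\; 4\, \mathbf{E}\!\left[ \langle V^{(n)}(c), \overline{V^{(n)}(c)}\rangle_T \right] \;\le\; \frac{C}{1-T}.
\]
Applying Cauchy--Schwarz then gives $\mathbf{E}\bigl[\sup_{0\le t\le T} |V_t^{(n)}(c)|\bigr] \le C/\sqrt{1-T}$. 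The main (and only nontrivial) obstacle is the uniform--in--$\kappa$ bound on the densities of the mixed brackets $\langle M(\kappa), \overline{M(\kappa')}\rangle$; once that is in hand, the estimates reduce to a one--dimensional integral of $a(n-s)^2$ and a standard maximal inequality.
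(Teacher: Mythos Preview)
Your proof is correct and follows the same overall strategy as the paper: bound the bracket $\langle V^{(n)}(c),\overline{V^{(n)}(c)}\rangle_t$ by $C\int_0^{nt}a(n-s)^2\,ds$ and then invoke Doob's maximal inequality for part~(2). Your execution is in fact slightly more direct than the paper's: the paper first expands the bracket as $\int_0^{nt}a(n-s)^2\,|e^{i\Psi_s^{(n)}(c)}-1|^2\,\varphi_{\kappa_0}(Y_s)\,ds$ (up to $O(n^{-1/2}(1-t)^{-1/2})$), then applies Lemma~\ref{PartialIntegration} once more to replace $\varphi_{\kappa_0}(Y_s)$ by $\langle\varphi_{\kappa_0}\rangle$ plus a martingale, and only after taking expectation uses the crude bound $|e^{i\Psi}-1|^2\le 4$; you bypass this by bounding the bracket density pointwise from the start. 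The paper's longer route is not wasted, however: the precise identification of the leading term is exactly what is needed later in Theorem~\ref{SDE} to derive the limiting SDE, so the extra work here is recycled.
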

\begin{proof}
\beq
&&\langle V^{(n)}(c), \overline{V^{(n)}(c)} \rangle_t
\\
&=&
\int_0^{nt}
a(n-s)^2 
\left|
e^{2i (\theta_s(\kappa_c) - \theta_s(\kappa_0))} - 1
\right|^2 
[ g_{\kappa_0}, \overline{g}_{\kappa_0} ]ds
+
O( n^{- \frac 12} (1-t)^{- \frac 12} )
\eeq
Set
$
\varphi_{\kappa_0} := [ g_{\kappa_0}, \overline{g_0}_{\kappa} ]
$.
We compute 
by using Lemma \ref{PartialIntegration}
\beq
&&\langle V^{(n)}(c), \overline{V^{(n)}(c)} \rangle_t
\\
&=&
\langle \varphi_{\kappa_0} \rangle
\int_0^{nt} a(n-s)^2 
\left|
e^{2i (\theta_s(\kappa_c) - \theta_s(\kappa_0))} - 1
\right|^2 
ds
+
W_t^{(n)}(\kappa_0) 
+
O( n^{- \frac 12} (1-t)^{- \frac 12} )
\eeq
where
\beq
&&
W_t^{(n)}(\kappa_0)
:=
\int_0^{nt} 
a(n-s)^2 \left|
e^{2i (\theta_s(\kappa_c) - \theta_s(\kappa_0))} - 1
\right|^2 
d M_s(\varphi_{\kappa_0}, 0)
\\
&&
\langle W^{(n)}(\kappa_0), W^{(n)}(\kappa_0) \rangle_t
=
O(n^{-1}(1-t)^{-1}).
\eeq
Taking expectation, 
martingale term vanishes and we have 
\beq
&&
{\bf E} \left[
\langle V^{(n)}(c), \overline{V^{(n)}}(c) \rangle_t
\right]
\\
&=&
\langle \varphi_{\kappa_0} \rangle
\int_0^{nt} a(n-s)^2
{\bf E} \left[
\left|
e^{2i (\theta_s(\kappa_c) - \theta_s(\kappa_0))}-1
\right|^2
\right]
ds
+
O(n^{- \frac 12}(1-t)^{- \frac 12})
\\
& \le &
C
\int_0^t 
\frac {n}{n(1-u)}du
+ 
O(n^{- \frac 12}(1-t)^{- \frac 12})
\\
& \le &
\frac {C}{1-t}
+ 
O(n^{- \frac 12}(1-t)^{- \frac 12}).
\eeq
(2)
follows (1) and the martingale inequality. 
\QED
\end{proof}
\begin{lemma}
\label{Delta}
If 
$0 < T < 1$, 
\beq
{\bf E} \left[
\sup_{0 \le t \le T}
| \delta_{nt}(\kappa_c) - \delta_{nt}(\kappa_0) |^2
\right]
\stackrel{n \to \infty}{\to} 0.
\eeq
\end{lemma}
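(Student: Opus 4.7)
The plan is to split $\delta_{nt}(\kappa) = \delta_{nt}^{(1)}(\kappa) + \delta_{nt}^{(2)}(\kappa)$ as in Lemma \ref{J}, treat the two differences separately, and reduce each one to either a deterministic bound from Lemma \ref{CDelta} or a martingale estimate for the $Z_\beta^{(n)}(\kappa)$ processes. A key remark used throughout is that, for $t \in [0,T]$ with $T < 1$ fixed, any bound of the form $O(n^{-1/2}(1-t)^{-1/2})$ is dominated by $C n^{-1/2}(1-T)^{-1/2}$, so after squaring and taking a supremum over $[0,T]$ it contributes only $O(n^{-1}(1-T)^{-1}) \to 0$ to the expectation.

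For the $\delta^{(1)}$ difference, Lemma \ref{CDelta}(2) already supplies a pointwise (almost-sure) bound of exactly the above form, so the corresponding contribution to ${\bf E}[\sup_{t \le T}|\cdot|^2]$ tends to zero by the observation above.

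For the $\delta^{(2)}$ difference, Lemma \ref{CDelta}(3) expresses $\delta_{nt}^{(2)}(\kappa)$ as $-\tfrac{i}{2\kappa}Z_4^{(n)}(\kappa) + \tfrac{i}{\kappa}Z_2^{(n)}(\kappa)$ plus an $O(n^{-1/2}(1-t)^{-1/2})$ error, with the error handled as above. Since $1/\kappa_c - 1/\kappa_0 = O(n^{-1})$, the triangle inequality reduces the problem to showing
$$
{\bf E}\Bigl[\sup_{0 \le t \le T} |Z_\beta^{(n)}(\kappa)|^2\Bigr] \stackrel{n \to \infty}{\to} 0
\qquad (\beta = 2, 4, \ \kappa \in \{\kappa_c, \kappa_0\}).
$$
Recall that $Z_\beta^{(n)}(\kappa) = \int_0^{nt} a(n-s)^2 e^{i\beta \theta_s(\kappa)}\, d\widetilde{M}_s^{(\beta)}(\kappa)$ is a martingale in $t$, and the intensity of $\widetilde{M}^{(\beta)}(\kappa)$ is bounded (this is where Lemma \ref{PartialIntegration} enters). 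Using $a(n-s)^2 = (n-s)^{-1}$ for $n - s \ge 1$, I would compute
$$
\bigl\langle Z_\beta^{(n)}(\kappa), \overline{Z_\beta^{(n)}(\kappa)} \bigr\rangle_t
\le C \int_0^{nt} \frac{ds}{(n-s)^2} = \frac{Ct}{n(1-t)} \le \frac{CT}{n(1-T)},
$$
and Doob's $L^2$-inequality then yields ${\bf E}[\sup_{0 \le t \le T}|Z_\beta^{(n)}(\kappa)|^2] = O(n^{-1}) \to 0$.

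The only real subtlety, rather than a genuine obstacle, is that $Z_\beta^{(n)}(\kappa_c)$ and $Z_\beta^{(n)}(\kappa_0)$ are driven by distinct complex martingales $\widetilde{M}^{(\beta)}(\kappa_c)$ and $\widetilde{M}^{(\beta)}(\kappa_0)$, so the smallness of $\kappa_c - \kappa_0 = c/n$ cannot be exploited inside a single integrand. This is why the argument estimates each $Z_\beta^{(n)}(\kappa)$ separately instead of attempting to control the difference via a cancellation; fortunately, each $Z_\beta^{(n)}(\kappa)$ is already $O(n^{-1/2})$ in $L^2$ thanks to the $a^2$-decay of the weight, so no cancellation is needed.
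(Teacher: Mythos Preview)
Your proof is correct and follows the same overall strategy as the paper: use Lemma~\ref{CDelta} to isolate the martingale pieces $Z_\beta^{(n)}(\kappa)$, then exploit the $a(n-s)^4$ decay so that Doob's inequality gives an $O(n^{-1})$ bound on $[0,T]$.

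The one place where you diverge from the paper is in the handling of the martingale part. The paper estimates the \emph{difference} $Z_\beta^{(n)}(\kappa_c)-Z_\beta^{(n)}(\kappa_0)$ directly, obtaining a quadratic-variation bound of the form
\[
C\int_0^{nT} a(n-s)^4\,{\bf E}\bigl[\,|e^{i\beta\theta_s(\kappa_c)}-e^{i\beta\theta_s(\kappa_0)}|^2\,\bigr]\,ds \;+\; O(n^{-2}(1-T)^{-2}),
\]
where the error absorbs the discrepancy between the driving martingales $\widetilde{M}^{(\beta)}(\kappa_c)$ and $\widetilde{M}^{(\beta)}(\kappa_0)$. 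You instead observe that each $Z_\beta^{(n)}(\kappa)$ is already $O(n^{-1/2})$ in $L^2$ on its own, so no cancellation between the two is needed and the issue of distinct driving martingales is sidestepped entirely. Your route is a bit more elementary; the paper's retains the phase-difference factor, which in principle gives a sharper rate but is not needed for the lemma as stated. Either way the conclusion follows.
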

\begin{proof}
By
Lemma \ref{CDelta}, 
it suffices to show that the martingale part converges to 
$0$, 
that is,
\beq
&&
{\bf E}\left[
\sup_{0 \le t \le T}
\left|
Z_{\beta}^{(n)}(\kappa_c) - Z_{\beta}^{(n)}(\kappa_0)
\right|^2
\right]
\\
& \le &
C\int_0^{nT}
a(n-s)^4 
{\bf E}\left[
\left|
e^{i \beta \theta_s(\kappa_c)}
-
e^{i \beta \theta_s(\kappa_0)}
\right|^2
\right]
ds
+
O( n^{-2} (1-T)^{-2} )
\\
&&
\to 0.
\eeq
\QED
\end{proof}
By combining 
(\ref{Decomposition}), 
Lemma \ref{J}, \ref{CDelta}, \ref{V}
and \ref{Delta}, 
we obtain 
Theorem \ref{Apriori}.
%
\subsection{Tightness}
\begin{lemma}
\label{Kolmogorov}
For
$0 \le s < t  \le T < 1$
we have
\beq
{\bf E} \left[
\left|
V_t^{(n)}(c) - V_s^{(n)}(c) 
\right|^4
\right]
\le (const.) (t-s)^2.
\eeq
\end{lemma}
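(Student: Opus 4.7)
The plan is to apply the Burkholder--Davis--Gundy (BDG) inequality to the continuous complex martingale $V^{(n)}(c) = Y^{(n)}(\kappa_c) - Y^{(n)}(\kappa_0)$ and then to exploit a \emph{pathwise} bound on the increment of its quadratic variation that is valid on $[0,T]$ with $T<1$. Splitting into real and imaginary parts and using $(a+b)^2\ge a^2+b^2$ for $a,b\ge 0$, BDG gives
\[
\mathbf{E}\!\left[|V_t^{(n)}(c)-V_s^{(n)}(c)|^4\right]
\le C\,\mathbf{E}\!\left[\bigl(\langle V^{(n)}(c),\overline{V^{(n)}(c)}\rangle_t-\langle V^{(n)}(c),\overline{V^{(n)}(c)}\rangle_s\bigr)^{2}\right].
\]

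Next, using bilinearity of the bracket and the covariation structure of $M_s(\kappa)$ from Lemma~\ref{PartialIntegration} (exactly as in the computation at the start of the proof of Lemma~\ref{V}), the increment of $\langle V^{(n)}(c),\overline{V^{(n)}(c)}\rangle$ between $s$ and $t$ is
\[
\int_{ns}^{nt} a(n-u)^{2}\,\Phi_u\,du,
\]
where $\Phi_u$ is a finite linear combination of terms $e^{\pm 2i\theta_u(\kappa_c)\pm 2i\theta_u(\kappa_0)}\,[g_{\kappa_\bullet},\overline{g_{\kappa_\bullet}}](Y_u)$ with $\kappa_\bullet\in\{\kappa_c,\kappa_0\}$. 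Since $M$ is compact and $g_\kappa$ depends smoothly on $\kappa$ near $\kappa_0$, each bracket is bounded by a constant independent of $n$ and $c$ (for $|c|$ bounded), so $|\Phi_u|\le K$.

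Finally, for $u\in[ns,nt]\subset[0,nT]$ one has $a(n-u)^{2}=(n-u)^{-1}\le (n(1-T))^{-1}$, so the bracket increment admits the deterministic bound
\[
0\le \langle V^{(n)}(c),\overline{V^{(n)}(c)}\rangle_t-\langle V^{(n)}(c),\overline{V^{(n)}(c)}\rangle_s
\le \frac{K}{n(1-T)}\cdot n(t-s)=\frac{K(t-s)}{1-T}.
\]
Squaring and substituting into the BDG estimate yields $\mathbf{E}[|V_t^{(n)}(c)-V_s^{(n)}(c)|^4]\le C(T)(t-s)^2$, which is the claim.

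The only nontrivial step is the explicit identification of $\Phi_u$, i.e.\ expanding the four covariations $\langle Y^{(n)}(\kappa),\overline{Y^{(n)}(\kappa')}\rangle$ for $\kappa,\kappa'\in\{\kappa_c,\kappa_0\}$ via Lemma~\ref{PartialIntegration} and verifying uniform boundedness of the resulting smooth factors; this is essentially already carried out in Lemma~\ref{V}, so I expect no genuine obstacle beyond notation. Observe that this argument makes crucial use of $T<1$: as $T\to 1$ the constant blows up, in agreement with the a priori estimates of Lemma~\ref{V}.
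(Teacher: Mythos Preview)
Your argument is correct and is essentially the same as the paper's: apply BDG to reduce the fourth moment to the second moment of the bracket increment, identify that increment as $\int_{ns}^{nt} a(n-u)^{2}\,\Phi_u\,du$ with $\Phi_u$ uniformly bounded, and then use the pathwise bound $a(n-u)^{2}\le (n(1-T))^{-1}$ on $[ns,nt]$ to get $\le C(T)(t-s)$ before squaring. The paper's write-up is terser (its first displayed inequality is really the BDG step with the bracket increment abbreviated), but the content is identical.
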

\begin{proof}
\beq
&&
{\bf E} \left[
\left|
V_t^{(n)}(c) - V_s^{(n)}(c) 
\right|^4
\right]
\\
& \le&
C{\bf E} \left[ | V_t^{(n)}(c) - V_s^{(n)}(c) |^2 \right]^2
\\
& \le &
C
{\bf E}\left[
\int_{ns}^{nt} a(n-u)^2 
\left[
e^{2i \theta_u(\kappa_c)} g_{\kappa_c} - e^{2i \theta_u(\kappa_0)} g_{\kappa_0},
e^{2i \theta_u(\kappa_c)} g_{\kappa_c} - e^{2i \theta_u(\kappa_0)} g_{\kappa_0}
\right]du
\right]^2
\\
& \le &
C
\left(
\int_{ns}^{nt} a(n-u)^2 du 
\right)^2 
\le 
C (t-s)^2.
\eeq
\QED
\end{proof}
By using these lemmas, 
we can show the tightness. 
\begin{theorem}
\label{Tightness}
For any 
$c \in {\bf R}$
$\{ \Psi_t^{(n)}(c) \}_{0 \le t < 1}$
is tight.
In fact, for any 
$0 < T < 1$, 
we have 
\beq
&(1)&\;
\lim_{A \to \infty} {\bf P}\left( 
\sup_{0 \le t \le T}
|\Psi_t^{(n)}(c)| 
\ge A
\right) = 0, 
\\
&(2)&\;
\lim_{\delta \downarrow 0}
\limsup_{n \to \infty}
{\bf P}\left(
\sup_{ | t - s | < \delta,\; 0 \le s, t < T }
| \Psi_t^{(n)}(c) - \Psi_s^{(n)}(c) | > \rho
\right) = 0, 
\quad
\forall \rho > 0. 
\eeq
\end{theorem}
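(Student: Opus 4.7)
The plan is to read both parts off the decomposition provided by Theorem \ref{Apriori}, namely
$\Psi_t^{(n)}(c) = 2ct + \frac{1}{2\kappa_0} Re \, V_t^{(n)}(c) + (\delta_{nt}(\kappa_c) - \delta_{nt}(\kappa_0)) + O(n^{-1/2})$
on $[0,T]$ for each fixed $T<1$. The deterministic drift $2ct$ is harmless on $[0,T]$, the $\delta$-difference vanishes uniformly in $t$ in $L^2$ by Lemma \ref{Delta}, and the $O(n^{-1/2})$ remainder is trivially negligible. Consequently the entire analysis reduces to tightness of the martingale $V^{(n)}_\cdot(c)$ on $[0,T]$.

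For statement (1), I would apply Markov's inequality to the decomposition above: the drift contributes at most $2|c|T$ deterministically, and Lemma \ref{V}(2) gives ${\bf E}[\sup_{0\le t\le T}|V_t^{(n)}(c)|] \le C/\sqrt{1-T}$ uniformly in $n$, so $\sup_n {\bf P}(\sup_{0\le t\le T}|V_t^{(n)}(c)| \ge A) \to 0$ as $A \to \infty$. The $\delta$-part is controlled by Lemma \ref{Delta} and the $O(n^{-1/2})$ term by its very form, giving (1).

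For statement (2), the heart of the matter is the Kolmogorov--Chentsov type fourth-moment estimate ${\bf E}[|V_t^{(n)}(c) - V_s^{(n)}(c)|^4] \le C(t-s)^2$ supplied by Lemma \ref{Kolmogorov}, uniformly in $n$ on $[0,T]$. This is precisely the standard hypothesis with $p=4$ and $\beta = 1$, and the usual chaining argument produces a uniform-in-$n$ modulus of continuity estimate for $V^{(n)}$ (in effect, uniform H\"older continuity of any exponent $<1/4$). The drift contributes at most $2|c|\delta$, the oscillation of the $\delta$-part is dominated by $2\sup_t|\delta_{nt}(\kappa_c)-\delta_{nt}(\kappa_0)|$ which tends to $0$ in $L^2$ by Lemma \ref{Delta}, and the $O(n^{-1/2})$ term is dominated by a deterministic vanishing sequence. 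Assembling these contributions and letting $\delta \downarrow 0$ and then $n \to \infty$ yields (2).

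The main point to watch is the blow-up of the constants like $(1-T)^{-1/2}$ as $T \uparrow 1$, which forces one to fix $T<1$ and work on $[0,T]$ rather than on all of $[0,1)$; this is exactly why the theorem is phrased for arbitrary $T<1$, and the passage to the half-open interval will only be handled later, when the limiting SDE for $\Psi_t(c)$ is identified. Apart from this, the argument is essentially bookkeeping on top of Theorem \ref{Apriori} and Lemmas \ref{V}, \ref{Kolmogorov}, \ref{Delta}, with no genuinely difficult step.
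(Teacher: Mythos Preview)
Your proposal is correct and follows essentially the same approach as the paper: the paper's own proof is a two-line reference to Theorem \ref{Apriori} for part (1) and to Theorem \ref{Apriori} together with Lemma \ref{Kolmogorov} for part (2), which is exactly the decomposition-plus-moment-bounds strategy you describe (the bounds you cite from Lemmas \ref{V} and \ref{Delta} are already packaged into the statement of Theorem \ref{Apriori}). Your write-up simply unpacks the bookkeeping that the paper leaves implicit.
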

\begin{proof}
(1)
follows from 
Theorem \ref{Apriori}, 
and 
(2)
follows from Theorem \ref{Apriori} and Lemma \ref{Kolmogorov}.
\QED
\end{proof}
%
\subsection{Derivation of SDE}
By
Theorem \ref{Tightness}
$\Psi_t^{(n)}(c)$
have a limit point
$\Psi_t(c)$.
Then by Skorohard's theorem, we may assume
\[
\Psi_t^{(n)}(c) \to \Psi_t(c), 
\quad
a.s.
\]
for some subsequence. 
\begin{theorem}
\label{SDE}
$\Psi_t(c)$
satisfies the following SDE. 
\beq
d \Psi_t (c)
&=&
2c dt + 
\frac {D}{\sqrt{1-t}}
Re
\left[
\left( 
e^{i \Psi_t(c)}-1
\right)
d Z_t
\right], 
\quad
0 \le t < 1
\eeq
where 
$
D :=
\frac {
\sqrt{ \langle \varphi_{\kappa_0} \rangle }
}
{\sqrt{2} \kappa_0}.
$
\end{theorem}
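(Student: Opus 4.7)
The strategy is to pass to the limit in the a priori decomposition of Theorem \ref{Apriori} and identify the limiting complex martingale as an explicit stochastic integral against a complex Brownian motion. Combining Theorem \ref{Apriori} with Lemma \ref{Delta}, and noting that the correct coefficient of $\mathrm{Re}\,V^{(n)}_t(c)$ consistent with (\ref{Decomposition}) (together with $\Psi^{(n)}_t(c) = 2ct + 2(\tilde\theta_{nt,n}(\kappa_c) - \tilde\theta_{nt,n}(\kappa_0))$ and Lemma \ref{CDelta}(1)) is $1/\kappa_0$, one obtains, uniformly on $[0,T]$ for each fixed $T<1$,
$$
\Psi^{(n)}_t(c) = 2ct + \frac{1}{\kappa_0}\,\mathrm{Re}\,V^{(n)}_t(c) + o(1)
$$
in probability. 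The drift $2c\,dt$ is thereby identified, and it remains only to characterise the scaling limit of the martingale $V^{(n)}(c)$.

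The core step is to compute the quadratic co-variations of $V^{(n)}(c)$. Using the integration-by-parts technique of Lemma \ref{V} (once paired with $\overline{V^{(n)}}$, once with $V^{(n)}$), and exploiting that the non-conjugated pairing produces a non-resonant oscillation $e^{4i\theta_s(\kappa)}$ which can be reduced by a further application of Lemma \ref{PartialIntegration} exactly as in Lemma \ref{CDelta}(3), one establishes
\beq
\langle V^{(n)}(c), \overline{V^{(n)}(c)}\rangle_t &=& \langle\varphi_{\kappa_0}\rangle\int_0^{nt} a(n-s)^2 \bigl|e^{i\Psi^{(n)}_{s/n}(c)}-1\bigr|^2 \, ds + o(1) \\
\langle V^{(n)}(c), V^{(n)}(c)\rangle_t &=& o(1)
\eeq
uniformly on $[0,T]$. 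With $a(n-s)^2 = 1/(n-s)$ the change of variables $s = nu$ turns the first expression into $\langle\varphi_{\kappa_0}\rangle\int_0^t |e^{i\Psi^{(n)}_u(c)}-1|^2/(1-u)\,du$, and dominated convergence combined with the a.s.\ convergence $\Psi^{(n)}_t(c) \to \Psi_t(c)$ (from Skorokhod's theorem applied to the tight sequence of Theorem \ref{Tightness}) yields
$$
\langle V^{(n)}(c), \overline{V^{(n)}(c)}\rangle_t \longrightarrow \langle\varphi_{\kappa_0}\rangle\int_0^t \frac{|e^{i\Psi_u(c)}-1|^2}{1-u}\,du.
$$

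Since the limiting continuous complex martingale $V_t(c)$ satisfies $\langle V,V\rangle_t\equiv 0$ together with the prescribed absolutely continuous $\langle V,\overline V\rangle_t$, the martingale representation theorem (on a possibly enlarged probability space) provides a standard complex Brownian motion $Z_t$ with
$$
V_t(c) \;=\; \sqrt{\frac{\langle\varphi_{\kappa_0}\rangle}{2}}\int_0^t \frac{e^{i\Psi_u(c)}-1}{\sqrt{1-u}}\,dZ_u,
$$
where the integrand is adapted because $\Psi_u(c) = 2cu + \kappa_0^{-1}\mathrm{Re}\,V_u(c)$. Substituting into the decomposition of $\Psi_t(c)$ and taking differentials gives the asserted SDE with $D = \sqrt{\langle\varphi_{\kappa_0}\rangle}/(\sqrt{2}\,\kappa_0)$. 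Joint convergence for several parameters $c_1,\dots,c_m$ is obtained by running the same argument on the vector martingale $(V(c_1),\dots,V(c_m))$, after computing the cross-brackets $\langle V(c_i), \overline{V(c_j)}\rangle_t$ in exactly the same way.

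The main obstacle is proving $\langle V^{(n)}(c), V^{(n)}(c)\rangle_t = o(1)$: the computation of Lemma \ref{V} must be redone for the non-conjugated pairing, and one must verify that the oscillatory weight $e^{4i\theta_s(\kappa)}$ in the resulting expression can be absorbed via integration by parts into a boundary term of order $O(n^{-1/2}(1-t)^{-1/2})$ plus a martingale of vanishing bracket, paralleling Lemma \ref{CDelta}(3). A secondary technical point is that uniform estimates are available only on $[0,T]$ with $T<1$ because of the $(1-t)^{-1}$ singularity, but this is harmless: the limit $\Psi_t(c)$ is constructed consistently on each such subinterval by Theorem \ref{Tightness}, and extends by patching to a strictly-increasing function-valued process on all of $[0,1)$, which is precisely what is needed for the counting-function identification in Theorem \ref{Carousel}.
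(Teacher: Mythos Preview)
Your proposal is correct and follows essentially the same route as the paper: reduce via the a priori decomposition to the martingale $V^{(n)}(c)$, compute its brackets using Lemma \ref{PartialIntegration}, pass to the limit by tightness and Skorokhod, and represent the limiting martingale as a stochastic integral against a complex Brownian motion. The only cosmetic difference is that the paper computes the cross-brackets $\langle V^{(n)}(c), V^{(n)}(d)\rangle_t$ and $\langle V^{(n)}(c), \overline{V^{(n)}(d)}\rangle_t$ for two parameters $c,d$ directly (which simultaneously handles the joint convergence), whereas you do the diagonal case first and defer the joint statement; your remark that the coefficient in Theorem \ref{Apriori} should read $1/\kappa_0$ rather than $1/(2\kappa_0)$ is also consistent with the paper's own usage in the proof.
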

\begin{proof}
By 
Theorem \ref{Apriori}, 
\beq
\Psi_t^{(n)}(c)
&=&
2 ct + \frac {1}{\kappa_0} Re \; V_t^{(n)}(c) + o(1)
\eeq
in probability.
Let 
$c, d \in {\bf R}$.
By 
Lemma \ref{PartialIntegration}
we have
\beq
&&
\langle V^{(n)}(c), V^{(n)}(d) \rangle_t
\\
&=&
\int_0^{nt} 
a(n-s)^2 
\left(
e^{2i \theta_s(\kappa_c)} - e^{2i \theta_s(\kappa_0)}
\right)
\left(
e^{2i \theta_s(\kappa_d)} - e^{2i \theta_s(\kappa_0)}
\right)
\varphi_{\kappa_0}(Y_s) ds
+
O(n^{-1}(1-t)^{-1})
\\
&=&
o(1)
\eeq
\beq
&&
\langle V^{(n)}(c), \overline{V^{(n)}(d)} \rangle_t
\\
&=&
\langle \varphi_{\kappa_0} \rangle
\int_0^{nt}
a(n-s)^2 
\left(
e^{2i ( \theta_s(\kappa_c) - \theta_s(\kappa_0) )}-1
\right)
\left(
e^{-2i ( \theta_s(\kappa_d) - \theta_s(\kappa_0) )}-1
\right)
ds
+ o(1)
\\
&=&
\langle \varphi_{\kappa_0} \rangle
\int_0^{t}
na(n-ns)^2 
\left(
e^{i\Psi_s^{(n)}(c)}-1
\right)
\left(
e^{-i\Psi_s^{(n)}(d)}-1
\right)
ds
+ o(1)
\\
& = &
\langle \varphi_{\kappa_0} \rangle
\int_0^{t}
\frac {1}{1-s}
\left(
e^{i\Psi_s(c)}-1
\right)
\left(
e^{-i\Psi_s(d)}-1
\right)
ds
+ o(1)
\eeq
in probability.
Therefore
\[
V_t(c) := \lim_{n \to \infty} V_t^{(n)}(c)
\]
is a 
$L^2$-continuous martingale
such that 
\beq
&&
\langle V(c), V(d) \rangle_t = 0
\\
&&
\langle V(c), \overline{V(d)} \rangle_t
=
\langle \varphi_{\kappa_0} \rangle
\int_0^t 
\frac {1}{1-s}
\left(
e^{i\Psi_s(c)}-1
\right)
\left(
e^{-i\Psi_s(d)}-1
\right)
ds
\eeq
Hence
$V_t(c)$
satisfies 
\[
d V_t
=
\sqrt{
\frac {
\langle \varphi_{\kappa_0}\rangle
}{2}
}
\frac {1}{\sqrt{1-t}}
\left(
e^{i \Psi_t(c)} - 1
\right)
d Z_t
\]
Since
$
\Psi_t(c)
=
2 ct + \frac {1}{\kappa_0} Re \; V_t(c)
$, 
we are done. 
\QED
\end{proof}
%
%
\subsection{Behavior of 
$\theta_{n-n^{\beta}}$}
Let 
\[
\{ x \}_{2 \pi {\bf Z}} := x - 
\max \{ 2 \pi k \, | \, 
k \in {\bf Z}, 2 \pi k \le x \}.
\]
\begin{theorem}
\label{uniform}
For
$0 < \beta < 1$ 
and 
$\kappa > 0$, 
$\{ 2\theta_{n-n^{\beta}}(\kappa) \}_{2 \pi {\bf Z}}$
converges to the uniform distribution on 
$[0, 2 \pi)$.
\end{theorem}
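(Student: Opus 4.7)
My plan is to apply Weyl's equidistribution criterion: convergence of $\{2\theta_{n-n^\beta}(\kappa)\}_{2\pi{\bf Z}}$ to the uniform law on $[0,2\pi)$ is equivalent to
\[
\lim_{n\to\infty} {\bf E}\bigl[e^{2ik\theta_{n-n^\beta}(\kappa)}\bigr] = 0 \quad \mbox{for every } k \in {\bf Z}\setminus\{0\}.
\]
Writing $\theta_t(\kappa) = \kappa t + \tilde\theta_t(\kappa)$ and factoring out the deterministic phase $e^{2ik\kappa(n-n^\beta)}$ of modulus one, this reduces to proving ${\bf E}[e^{2ik\tilde\theta_{n-n^\beta}(\kappa)}] \to 0$ for each $k \neq 0$.

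The mechanism is that $\tilde\theta_{n-n^\beta}(\kappa)$ fluctuates at the scale $\sqrt{\log n}$, so its characteristic function decays polynomially. Concretely, I would decompose $\tilde\theta$ via the integration-by-parts machinery of Lemma \ref{PartialIntegration} in the style of Lemma \ref{J}, but for a single phase rather than a difference:
\[
2\kappa\,\tilde\theta_{nt}(\kappa) = Re\bigl(Y^{(n)}_t(\kappa) - Y^{(n)}_t(0)\bigr) + R^{(n)}_t,
\]
where $Y^{(n)}_t(\kappa) = \int_0^{nt} a(n-s)e^{2i\theta_s(\kappa)}dM_s(\kappa)$ is a complex martingale and $R^{(n)}_t$ collects the boundary terms $\delta^{(1)}$, the drift $C^{(n)}$ and the quadratic correction $\delta^{(2)}$. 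The crucial bracket computation is
\[
\langle Y^{(n)}(\kappa),\overline{Y^{(n)}(\kappa)}\rangle_{1-n^{\beta-1}}
= \langle\varphi_\kappa\rangle \int_0^{n-n^\beta}\!a(n-s)^2\,ds + o(\log n)
= (1-\beta)\langle\varphi_\kappa\rangle \log n + o(\log n).
\]

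With this in hand, I would apply Itô's formula to $\exp(ik\kappa^{-1}Re(Y^{(n)}_t(\kappa)-Y^{(n)}_t(0)))$ and take expectation. The martingale term vanishes and the bracket produces a Gr\"onwall-type differential inequality whose integration up to $t = 1-n^{\beta-1}$ yields
\[
\Bigl|{\bf E}\bigl[e^{ik\kappa^{-1}Re(Y^{(n)}_{1-n^{\beta-1}}(\kappa)-Y^{(n)}_{1-n^{\beta-1}}(0))}\bigr]\Bigr| \le C_k\,n^{-c_k(1-\beta)}+o(1),
\]
with $c_k > 0$ proportional to $k^2\langle\varphi_\kappa\rangle/\kappa^2$, which vanishes for $k\neq 0$. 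The contribution of the remainder $R^{(n)}$ is then handled by Cauchy--Schwarz together with the exponential-moment estimate of Lemma \ref{Exponential} (adapted to the critical-decay setting), so it multiplies the Fourier coefficient only by a uniformly bounded factor.

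The principal obstacle I foresee is the bookkeeping for the non-martingale remainder $R^{(n)}$: the correction $\delta^{(2)}_{nt}(\kappa)$ contains an oscillating integral $\int_0^{nt} a(n-s)^2 e^{i\beta\theta_s(\kappa)} F g_\kappa(Y_s)\,ds$ whose exponential moment is not immediate. In Section~5.1 the analogous term was killed by the cancellation inside differences $\kappa_c-\kappa_0$, a tool unavailable here; instead, a further partial integration of the type used for $D^{(n)}_\beta$ in Lemma \ref{CDelta} reduces $\delta^{(2)}$ to an additional martingale of smaller quadratic variation plus an $O(n^{-1/2}(1-t)^{-1/2})$ remainder. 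Verifying that these residual contributions are genuinely subdominant to the $n^{-c_k(1-\beta)}$ main decay, uniformly in $t\le 1-n^{\beta-1}$, is the core technical task, but should follow from the techniques already assembled in Section~5.
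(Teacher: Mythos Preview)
Your reduction via Weyl's criterion to ${\bf E}[e^{2ik\tilde\theta_{n-n^\beta}(\kappa)}]\to 0$ matches the paper's first step, but the subsequent plan has a concrete gap. The separation you propose---bound the Fourier coefficient of the martingale part and then control the remainder $R^{(n)}$ by Cauchy--Schwarz and Lemma~\ref{Exponential}---does not work as stated: both $e^{ik\kappa^{-1}Re(Y(\kappa)-Y(0))}$ and $e^{ik\kappa^{-1}R^{(n)}}$ have modulus one, so Cauchy--Schwarz yields only the trivial bound, and Lemma~\ref{Exponential} concerns \emph{real} exponentials (it controls $r_t$) and is irrelevant here. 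Moreover $R^{(n)}$ is not a perturbation: it contains the drift $C_t^{(n)}(\kappa)$ whose leading part $-\tfrac{i}{2\kappa}\langle Fg_\kappa\rangle\int_0^{nt}a(n-s)^2\,ds$ is of order $\log n$. One can rescue this step by showing $R^{(n)}-D_n\to 0$ in $L^1$ for a \emph{deterministic} $D_n$ (so that $e^{ikD_n/\kappa}$ factors out as a harmless unimodular constant), and the estimates behind Lemma~\ref{CDelta} do give this. Even then, the Gr\"onwall argument for the martingale part alone does not close cleanly: $d\langle N\rangle_s$ carries random factors $[g_\kappa,\overline{g_\kappa}](Y_s)$ and oscillating factors $e^{2i\theta_s},e^{4i\theta_s}$, and eliminating them via Lemma~\ref{PartialIntegration} reintroduces the full phase $\tilde\theta_s$ into the equation.

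The paper avoids the decomposition entirely. It writes $e^{2mi\tilde\theta_{nt}} = 1 + \tfrac{mi}{\kappa}\int_0^{nt}Re(e^{2i\theta_s}-1)\,e^{2mi\tilde\theta_s}a(n-s)F(Y_s)\,ds$ directly from the equation for $\tilde\theta$, and applies Lemma~\ref{PartialIntegration} twice \emph{carrying the factor $e^{2mi\tilde\theta_s}$ along}. Taking expectation kills the martingale terms and yields the closed Volterra equation $\rho_t^{(n)} = 1 + C_m\int_0^t (1-u)^{-1}\rho_u^{(n)}\,du + O(n^{-\beta/2})$ for $\rho_t^{(n)}:={\bf E}[e^{2mi\tilde\theta_{nt}}]$, with an explicit constant $C_m$ built from $\langle Fg_{\pm\kappa}\rangle$ and $\langle Fg\rangle$. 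The solution behaves like $(1-t)^{-C_m}$, and the decisive step---which your outline never isolates---is the verification $Re\,C_m<0$ via the spectral representation $\langle Fg_\kappa\rangle=\int(\lambda+2i\kappa)^{-1}d\sigma_F(\lambda)$ with $\sigma_F$ supported in $(-\infty,0]$; this gives $\rho^{(n)}_{1-n^{\beta-1}}\to 0$.
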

\begin{proof}
It suffices to show 
\[
\lim_{n \to \infty}
{\bf E}[
e^{2mi \tilde{\theta}_{n-n^{\beta}}(\kappa)}
]
\to 0, 
\quad
m \ne 0.
\]
In what follows, 
we omit the $\kappa$-dependence. 
Set
\[
t =t_n= 1 - n^{\beta -1}.
\]
We then have
\beq
e^{2mi \tilde{\theta}_{n-n^{\beta}}(\kappa)}
&=&
1 + 
\int_0^{nt} 
2mi \frac {1}{2 \kappa}
Re \left( e^{2i \theta_s} - 1 \right)
e^{2mi \tilde{\theta}_s} 
a(n-s) F(Y_s) ds
\\
&=&
1 + \frac{mi}{\kappa}
\int_0^{nt}
\left(
\frac {
e^{2i \kappa s + (2m+2)i \tilde{\theta}_s}
+
e^{-2i \kappa s + (2m-2)i \tilde{\theta}_s}
}
{2}
- e^{2mi \tilde{\theta}_s}
\right)
a(n-s) F(Y_s) ds
\\
&=:&
1 + I + II + III.
\eeq
By
Lemma \ref{PartialIntegration}, 
\beq
I
&=&
\frac {mi}{2 \kappa}
\Biggl\{
\left[
a(n-s) e^{(2m+2)i \tilde{\theta}_s + 2i \kappa s}
g_{\kappa} (Y_s)
\right]_0^{nt}
\\
&& - 
\int_0^{nt}
( a(n-s) )' e^{(2m+2) i \tilde{\theta}_s + 2i \kappa s}
g_{\kappa} (Y_s) ds
\\
&& - 
\frac{(2m+2)i}{2 \kappa}
\int_0^{nt}
a(n-s)^2 
Re \left( e^{2i \theta_s} - 1 \right)
e^{(2m+2)i \tilde{\theta}_s + 2i \kappa s}
g_{\kappa} (Y_s) F(Y_s) ds
\\
&& + 
\int_0^{nt} a(n-s) e^{(2m+2) i \tilde{\theta}_s + 2i \kappa s}
d M_s(\kappa)
\Biggr\}
\\
&=:&
I_1 + I_2 + I_3 + I_4.
\eeq
Since 
$n(1-t) = n^{\beta}$, 
we have 
$
I_1, I_2 = O(n^{- \frac {\beta}{2}}).
$
We further compute 
$I_3$
by using Lemma \ref{PartialIntegration} : 
\beq
I_3 
&=&
\frac {mi}{2 \kappa}
\cdot
\frac {-(2m+2)i}{2 \kappa} 
\cdot
\frac 12
\int_0^{nt}
a(n-s)^2 e^{2mi \tilde{\theta}_s} F(Y_s) g_{\kappa}(Y_s) ds
+
O(n^{- \frac {\beta}{2}})
\\
&=&
\frac {mi}{2 \kappa}
\cdot
\frac {-(2m+2)i}{2 \kappa} 
\cdot
\frac 12
\langle F g_{\kappa} \rangle
\int_0^{nt}
a(n-s)^2 e^{im \tilde{\theta}_s} ds
+
O(n^{- \frac {\beta}{2}})
+
(martingale)
\eeq
Putting all together, we have 
\beq
I
&=&
\frac {mi}{2 \kappa} 
\cdot
\frac {-(2m+2)i}{2 \kappa} 
\cdot
\frac 12
\langle F g_{\kappa} \rangle
\int_0^{nt}
a(n-s)^2 e^{2mi \tilde{\theta}_s} ds
+
O(n^{- \frac {\beta}{2}}) + (martingale)
\eeq
By computing 
$II, III$
in a similar manner we obtain
\begin{eqnarray}
&&
e^{2mi \tilde{\theta}_{nt}}
\nonumber
\\
&=&
1 +
C_m
\int_0^{nt} 
a(n-s)^2 e^{2mi \tilde{\theta}_s} ds
+
O(n^{- \frac {\beta}{2}}) + (martingale)
\nonumber
\\
&=&
1 + C_m
\int_0^t n a(n-nu)^2 e^{2mi \tilde{\theta}_{nu}} du
+ O(n^{- \frac {\beta}{2}}) + (martingale)
\label{rho}
\end{eqnarray}
where 
\beq
C_m 
&:=&
\left(
\frac {m(2m+2)}{2(2 \kappa)^2}
\langle F g_{\kappa} \rangle
+
\frac {m(2m-2)}{2(2 \kappa)^2}
\langle F g_{- \kappa} \rangle
+
\frac {m^2}{\kappa^2} 
\langle F g \rangle
\right).
\eeq
Let 
$\sigma_F$
be the spectral measure of 
$L$
associated to 
$F$. 
Because
\beq
\langle F g_{\kappa} \rangle
=
\int_M F (L+2i \kappa)^{-1} F  dx
=
\int_{-\infty}^{0}
\frac {1}{\lambda + 2i\kappa}
d \sigma_F(\lambda), 
\eeq
$
Re \langle F g_{\kappa} \rangle
= 
Re \langle F g_{-\kappa} \rangle < 0$ 
and  
$
Re \langle F g \rangle < 0
$
so that 
\[
Re \; C_m < 0, 
\quad
m \ne 0.
\]
Take expectation on 
(\ref{rho})
and set
\beq
\rho_t^{(n)}
&:=&
{\bf E}[ e^{2mi \tilde{\theta}_{nt}} ] 
\\
f_n(t) 
&:=&
n a(n-nt)^2.
\eeq
Then 
\beq
\rho_t^{(n)}
&=&
1 + C_m \int_0^{t}
f_n(u) 
\rho_u^{(n)} du
+
g_n (u).
\eeq
where
\[
g_n (u) = O(n^{- \frac {\beta}{2}}), 
\quad
0 \le u \le 1 - n^{\beta-1}.
\]
It follows that
\beq
\rho^{(n)}_t
&=&
1 + C_m \int_0^t 
\left( f_n(s) + f_n(s) g_n(s) \right)
\exp \left(
C_m \int_s^t f_n(u) du
\right)
ds 
+ g_n(t)
\\
&=:&
A+B+C+D.
\eeq
The first two terms are equal to 
\beq
A+B
&=&
1 + 
C_m \int_0^t f_n(s)
\exp \left(
C_m \int_s^t f_n(u) du
\right)
ds
\\
&=&
1 + 
\left[
- 
\exp \left(
C_m \int_s^t f_n(u) du
\right)
\right]_0^t
\\
&=&
\exp \left(
C_m \int_0^t f_n(u) du
\right).
\eeq
Because
$Re \; C_m < 0$, 
we have
\[
\exp \left(
C_m \int_0^{1 - n^{\beta-1}}
f_n (u)du
\right)
=
\exp \left(
C_m
\int_0^{1 - n^{\beta-1}}
\frac {1}{1-s}
ds
\right)
\stackrel{n \to \infty}{\to}0.
\]
Similarly, 
$C, D = O(n^{- \frac {\beta}{2}})$. 
Therefore 
$\lim_{n \to \infty} \rho_t^{(n)} = 0$. 
\QED
\end{proof}
%

\subsection{Proof of Theorem \ref{Carousel}}
Take 
$0 < \epsilon < 1$, 
$0 < \beta < 1$
arbitrary, and let 
\[
l_1 = n(1 - \epsilon), 
\quad
l_2 = n - n^{\beta}
\]
so that 
$0 < l_1 < l_2$.
Moreover let 
$a_*$, $a^*=a_*+1 \in 2 \pi {\bf Z}$
satisfying 
\[
a_*  
\le 
\Psi_{1-\epsilon}^{(n)}(c)
< a^*.
\]
\begin{lemma}
\label{keylemma}
For
$n \gg 1$
and for 
$l_1 \le l \le l_2$
we have
\beq
\Psi_{l/n}^{(n)}(c) 
- 
\Psi_{1-\epsilon}^{(n)}(c)
&=&
2 \cdot \frac cn 
\left(
1 
+
O(n^{- \frac {\beta}{2}})
\right)
(l - l_1)
\\
&&+
\frac {1}{\kappa_0}
\int_{l_1}^{l} Re 
\left(
e^{2i \theta_s(\kappa_c)} - e^{2i \theta_s(\kappa_0)}
\right)
a(n-s) F(Y_s) ds.
\eeq
Thus by the comparison theorem, 
\beq
\Psi_{l/n}^{(n)}(c)
\ge
a_*, 
\quad
l_1 \le l \le l_2
\eeq
for sufficiently large 
$n$. 
\end{lemma}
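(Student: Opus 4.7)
To prove the identity, I would start from the paper's definition $\Psi_t^{(n)}(c) = 2\theta_{nt}(\kappa_c) - 2\theta_{nt}(\kappa_0)$. Using $\theta_t(\kappa) = \kappa t + \tilde{\theta}_t(\kappa)$ together with $(\kappa_c - \kappa_0)n = c$, this becomes $\Psi_t^{(n)}(c) = 2ct + 2\tilde{\theta}_{nt}(\kappa_c) - 2\tilde{\theta}_{nt}(\kappa_0)$. Substituting the integral formula (\ref{theta-eq}) for $\tilde{\theta}$ with $q(s) = a(n-s) F(Y_s)$ and taking the difference at $t = l/n$ and $t = 1-\epsilon$ (so the integrals now run over $[l_1,l]$), the linear piece contributes exactly $2c(l-l_1)/n$. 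The two integral pieces can be combined by pulling out $1/\kappa_0$: the main part becomes $\frac{1}{\kappa_0}\int_{l_1}^{l} Re\bigl(e^{2i\theta_s(\kappa_c)} - e^{2i\theta_s(\kappa_0)}\bigr) a(n-s) F(Y_s)\, ds$, and a residual $E := \bigl(\frac{1}{\kappa_c} - \frac{1}{\kappa_0}\bigr)\int_{l_1}^{l} Re\bigl(e^{2i\theta_s(\kappa_c)}-1\bigr) a(n-s) F(Y_s)\, ds$ remains to be absorbed.

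The plan is to absorb $E$ into the $(1+O(n^{-\beta/2}))$ prefactor by a crude pointwise estimate, without invoking any partial-integration or martingale machinery. We have $\bigl|\frac{1}{\kappa_c} - \frac{1}{\kappa_0}\bigr| = |c|/(n\kappa_c\kappa_0) = O(c/n)$, while the integrand is bounded in absolute value by $2\|F\|_\infty\, a(n-s)$. On the range $s \in [l_1,l_2]$ we have $n-s \ge n^\beta$, hence $a(n-s) \le n^{-\beta/2}$. Therefore $|E| \le O(c/n)\cdot 2\|F\|_\infty\, n^{-\beta/2}(l-l_1) = O(n^{-\beta/2})\cdot \frac{2c(l-l_1)}{n}$, exactly the error factored out in the lemma. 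This is where the choice $l_2 = n - n^\beta$ (rather than going all the way to $n$) is crucial: it is the only ingredient that controls $a(n-s)$ uniformly.

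For the comparison consequence, the main integrand can be rewritten as $Re\bigl(e^{2i\theta_s(\kappa_0)}(e^{i\Psi_{s/n}^{(n)}(c)}-1)\bigr) a(n-s) F(Y_s)$, which vanishes whenever $\Psi_{s/n}^{(n)}(c) \in 2\pi\mathbb{Z}$. Reading the identity as a pathwise integral equation for the $C^1$ function $\psi(l) := \Psi_{l/n}^{(n)}(c)$ and differentiating gives $\psi'(l) = \frac{2c}{n}(1+O(n^{-\beta/2})) + \frac{1}{\kappa_0} Re\bigl(e^{2i\theta_l(\kappa_0)}(e^{i\psi(l)}-1)\bigr)a(n-l) F(Y_l)$. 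If $\psi$ were to dip strictly below $a_*$ on $[l_1,l_2]$, then by continuity one could pick a first-crossing point $l^\ast$ with $\psi(l^\ast) = a_\ast$ and $\psi'(l^\ast) \le 0$; but at $l^\ast$ the oscillatory term vanishes and we are left with $\psi'(l^\ast) = \frac{2c}{n}(1+O(n^{-\beta/2})) > 0$ for $c>0$ and $n$ large, a contradiction. The \emph{main obstacle} in the argument is essentially cosmetic: one just needs to spot the correct algebraic rearrangement that isolates the $1/\kappa_0$ prefactor and to note that the critical integrand factors as $e^{2i\theta_s(\kappa_0)}(e^{i\psi(s)}-1)$, so the ``comparison theorem'' reduces to an elementary first-crossing argument rather than a nontrivial SDE comparison.
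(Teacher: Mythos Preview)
Your proof is correct and follows essentially the same route as the paper. The identity is obtained exactly as in the paper: decompose $\Psi_{l/n}^{(n)}(c)-\Psi_{1-\epsilon}^{(n)}(c)$ via (\ref{theta-eq}), split off the $\frac{1}{\kappa_0}$-integral, and bound the residual $E=\bigl(\tfrac{1}{\kappa_c}-\tfrac{1}{\kappa_0}\bigr)\int_{l_1}^{l}\cdots$ by the crude pointwise estimate $a(n-s)\le n^{-\beta/2}$ on $[l_1,l_2]$; this is precisely the paper's computation. For the inequality $\Psi_{l/n}^{(n)}(c)\ge a_*$, the paper simply invokes ``the comparison theorem'' without elaboration, whereas you spell out the underlying first-crossing argument (the integrand factors as $e^{2i\theta_s(\kappa_0)}(e^{i\psi}-1)$ and hence vanishes on $2\pi{\bf Z}$, forcing $\psi'>0$ there for $c>0$); your version is more explicit but the content is the same.
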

\begin{proof}
By (\ref{theta-eq})
we have
\beq
&&
\Psi_{l/n}^{(n)} (c)
- \Psi_{1-\epsilon}^{(n)}(c)
=
2 \cdot \frac cn \cdot (l - l_1)
\\
&&\qquad+
\left(
\frac {2}{2 \kappa_c}
-
\frac {2}{2\kappa_0}
\right)
\int_{l_1}^{l} 
Re \left(
e^{2i \theta_s(\kappa_c)} - 1
\right)
a(n-s) F(Y_s) ds
\\
&&\qquad+
\frac {2}{2 \kappa_0}
\int_{l_1}^{l} Re 
\left(
e^{2i \theta_s(\kappa_c)} - e^{2i \theta_s(\kappa_0)}
\right)
a(n-s) F(Y_s) ds.
\eeq
Then the following estimate yields the conclusion. 
\beq
&&
\left(
\frac {2}{2 \kappa_c}
-
\frac {2}{2\kappa_0}
\right)
\int_{l_1}^{l} 
Re \left(
e^{2i \theta_s(\kappa_c)} - 1
\right)
a(n-s) F(Y_s) ds
\\
& \le &
\frac Cn
\int_{l_1}^{l} \frac {1}{\sqrt{n-s}} ds
\le
\frac Cn
\cdot
\frac {1}{n^{\beta/2}}
\cdot
(l - l_1).
\eeq
\QED
\end{proof}
The following lemma 
is an straightforward  consequence of 
(\ref{Decomposition}), 
Lemma \ref{J} and \ref{CDelta}.
\begin{lemma}
\label{Difference}
For 
$l_1 \le l \le l_2$, 
\beq
&&
\Psi_{l/n}^{(n)} (c)  - \Psi_{1-\epsilon}^{(n)}(c)
\\
&=&
2 \cdot \frac cn \cdot (l - l_1)
+
\frac {1}{\kappa_0}
Re
\left\{
\int_{l_1}^{l}
\left(
e^{2i \theta_s(\kappa_c)} - e^{2i \theta_s(\kappa_0)}
\right)
a(n-s) 
d M_s(\kappa_0)
\right\}
\\
&& \quad 
+ O(n^{\frac {\beta}{2}-1}) + O(n^{-\frac {\beta}{2}+\epsilon'}), 
\quad
0 < \epsilon' \ll 1.
\eeq
\end{lemma}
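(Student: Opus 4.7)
The plan is to chain (\ref{Decomposition}) with Lemma \ref{J} and then absorb surplus terms into the stated error bounds via Lemma \ref{CDelta}. First, I apply (\ref{Decomposition}) at $t = l/n$ and at $t = 1 - \epsilon = l_1/n$ and subtract, obtaining
\beq
\Psi_{l/n}^{(n)}(c) - \Psi_{1-\epsilon}^{(n)}(c)
&=& \frac{2c(l-l_1)}{n} + \frac{1}{\kappa_0}\, Re\, \Bigl[J_t^{(n)}(\kappa_c) - J_t^{(n)}(\kappa_0)\Bigr]_{t=1-\epsilon}^{t=l/n} + O(n^{-1/2}).
\eeq
Next I invoke Lemma \ref{J} to split $J_t^{(n)}(\kappa) = C_t^{(n)}(\kappa) + \delta_{nt}(\kappa) + Y_t^{(n)}(\kappa)$, reducing the task to controlling the three corresponding increments over $(l_1, l]$ separately.

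For the $C$- and $\delta$-increments I use Lemma \ref{CDelta}. The $C$-part contributes $O(\log n / n)$ at each endpoint by Lemma \ref{CDelta}(1), and is absorbed into $O(n^{\beta/2 - 1})$. The $\delta$-part contributes $O(n^{-1/2}(1-t)^{-1/2})$ at each endpoint by Lemma \ref{CDelta}(2)-(3); since $l \le l_2 = n - n^{\beta}$ gives $(1 - l/n)^{-1/2} \le n^{(1-\beta)/2}$, this is pointwise $O(n^{-\beta/2})$. Upgrading to a bound uniform in $l \in [l_1, l_2]$ costs at most a factor $n^{\epsilon'}$ for any $\epsilon' > 0$, via Doob's maximal inequality applied to the martingale piece $Z_\beta^{(n)}(\kappa_c) - Z_\beta^{(n)}(\kappa_0)$ hiding in $\delta^{(2)}$ and a direct estimate for the deterministic parts, yielding the stated $O(n^{-\beta/2 + \epsilon'})$.

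For the $Y$-increment I write
\beq
Y_{l/n}^{(n)}(\kappa_c) - Y_{l/n}^{(n)}(\kappa_0)
&=& \int_0^l a(n-s)\bigl(e^{2i\theta_s(\kappa_c)} - e^{2i\theta_s(\kappa_0)}\bigr)\, dM_s(\kappa_0) \\
&& + \int_0^l a(n-s)\, e^{2i\theta_s(\kappa_c)}\, d\bigl(M_s(\kappa_c) - M_s(\kappa_0)\bigr),
\eeq
and subtract the same expression at $t = 1-\epsilon$. The first integral, restricted to $(l_1, l]$ by the subtraction, is exactly the martingale integral appearing in the statement. For the second, the smooth $\kappa$-dependence of $g_\kappa = (L + 2i\kappa)^{-1} F$ yields the Lipschitz estimate $d\langle M(\kappa_c) - M(\kappa_0)\rangle_s = O(n^{-2})\, ds$; integrating $a(n-s)^2$ over $[0, l_2]$ produces only $O(\log n)$, so by Doob the supremum in $l$ of the second integral is $O(\sqrt{\log n}/n)$, which is absorbed in $O(n^{\beta/2 - 1})$.

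The main obstacle is this last step: one must combine the Lipschitz-in-$\kappa$ estimate on $M_s(\kappa)$ with a maximal martingale inequality on a time window $[0, l_2]$ where the weight $a(n-s)^2$ is nearly non-integrable at $s = n$. Steps 1-2 and the bookkeeping leading to the $C$- and $\delta$-estimates are essentially routine once the earlier lemmas are at hand.
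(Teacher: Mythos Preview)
Your proposal is correct and follows exactly the route the paper indicates: the paper's proof is the single sentence ``a straightforward consequence of (\ref{Decomposition}), Lemma \ref{J} and \ref{CDelta},'' and you have faithfully unpacked that into its constituent estimates, including the (unstated but necessary) replacement of $dM_s(\kappa_c)$ by $dM_s(\kappa_0)$ via the Lipschitz-in-$\kappa$ bound on $g_\kappa$. Your handling of the error terms and the martingale maximal inequality is sound; the $n^{\epsilon'}$ slack is harmless even if not strictly required for every piece.
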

Let 
${\cal F}_t := \sigma
\left(
Y_s ; 0 \le s \le t 
\right)$.
\begin{proposition}
\label{Distance}
\beq
{\bf E}\left[
| \Psi_{1 - n^{\beta-1}}^{(n)}(c) - \Psi_{1-\epsilon}^{(n)}(c) | 
\, | \, {\cal F}_{l_1} 
\right]
& \le &
C
\left(
d(\Psi_{1 - \epsilon}^{(n)}(c), 2 \pi {\bf Z})
+
\epsilon
\right).
\eeq
\end{proposition}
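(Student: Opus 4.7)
The plan is to apply Lemma~\ref{Difference} at $l=l_2$, take conditional expectation given ${\cal F}_{l_1}$, and estimate each piece. The decomposition writes the increment as a deterministic drift $2c(l_2-l_1)/n$, a stochastic integral
$$
N_{l_2}:=\int_{l_1}^{l_2}\bigl(e^{2i\theta_s(\kappa_c)}-e^{2i\theta_s(\kappa_0)}\bigr)a(n-s)\,dM_s(\kappa_0),
$$
and two vanishing errors. Since $l_2-l_1\le n\epsilon$, the drift contributes at most $2|c|\epsilon$, which is absorbed in the $\epsilon$ term of the claim; the error terms $O(n^{\beta/2-1})$ and $O(n^{-\beta/2+\epsilon'})$ are uniformly $o(1)$ in $n$ and therefore also dominated by $\epsilon$.

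The key step is to control ${\bf E}[\,|N_{l_2}|\,|\,{\cal F}_{l_1}]$. The integrand factors as $e^{2i\theta_s(\kappa_c)}-e^{2i\theta_s(\kappa_0)} = e^{2i\theta_s(\kappa_0)}(e^{i\Psi_{s/n}^{(n)}(c)}-1)$, so its modulus equals $|e^{i\Psi_{s/n}^{(n)}(c)}-1| = 2|\sin(\Psi_{s/n}^{(n)}(c)/2)|$. Writing $\psi_s:=\Psi_{s/n}^{(n)}(c)-a_*\ge 0$ (using the lower confinement from Lemma~\ref{keylemma}) and noting $a_*\in 2\pi{\bf Z}$, one has $|\sin(\Psi_{s/n}^{(n)}(c)/2)| = |\sin(\psi_s/2)|$, which is bounded by $\frac{1}{2}\min(\psi_s,\,2\pi-\psi_s)$ on each fundamental period. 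Consequently the stochastic integrand is controlled in modulus by the distance $d(\Psi_{s/n}^{(n)}(c),2\pi{\bf Z})$, which by the triangle inequality is at most $d+|\Psi_{s/n}^{(n)}(c)-\Psi_{1-\epsilon}^{(n)}(c)|$ (with $d:=d(\Psi_{1-\epsilon}^{(n)}(c),2\pi{\bf Z})$), so long as $\psi_s$ stays within one period.

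The main obstacle is turning this into a quantitative bound. A direct application of conditional Jensen and the It\^o isometry gives
$$
{\bf E}[\,|N_{l_2}|\,|\,{\cal F}_{l_1}]^2 \;\le\; C\int_{l_1}^{l_2}\frac{{\bf E}[\,|e^{i\Psi_{s/n}^{(n)}(c)}-1|^2\,|\,{\cal F}_{l_1}]}{n-s}\,ds,
$$
and the trivial bound $|e^{i\Psi}-1|\le 2$ yields only $\sqrt{\log n}$, which does not match $O(d+\epsilon)$. One therefore has to bootstrap via the self-consistent inequality
$$
g(l)^2 \;\le\; C\epsilon^2 + C\int_{l_1}^{l}\frac{(g(s)+d)^2}{n-s}\,ds,
$$
where $g(l):={\bf E}[\,|\Psi_{l/n}^{(n)}(c)-\Psi_{1-\epsilon}^{(n)}(c)|\,|\,{\cal F}_{l_1}]$, controlling the phase increment by its own conditional expectation plus the initial offset. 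To prevent the logarithmic kernel $\int(n-s)^{-1}ds\sim(1-\beta)\log n$ from blowing up the Gr\"onwall, I would split at the stopping time $\tau$ when $\psi_s$ first reaches $\pi$: before $\tau$ the linear envelope $|\sin(\psi_s/2)|\le\psi_s/2$ is valid and the bootstrap closes directly, while on $\{\tau\le l_2\}$ one applies the symmetric estimate to $2\pi-\psi_s$ by translation to the next element of $2\pi{\bf Z}$. Combining both cases yields $g(l_2)\le C(d+\epsilon)$.
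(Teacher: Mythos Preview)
Your bootstrap does not close, and the fix you propose does not address the real obstruction. The Gr\"onwall kernel $\int_{l_1}^{l_2}(n-s)^{-1}\,ds=\log(\epsilon n^{1-\beta})$ diverges with $n$, so any inequality of the form $h(l)\le C(\epsilon^2+d^2)+C\int_{l_1}^l (n-s)^{-1}h(s)\,ds$ only yields a bound growing like a power of $n$. Splitting at the hitting time $\tau$ of $\pi$ changes the envelope on $|\sin(\psi_s/2)|$ but does nothing about the kernel; you still integrate $(n-s)^{-1}$ over an interval of length $\sim n$ ending at $n-n^\beta$. There is also a secondary slip: your self-consistent inequality mixes ${\bf E}[\,|\cdot|\,]^2$ on the left with ${\bf E}[\,|\cdot|^2\,]$ inside the integral, and Jensen goes the wrong way to convert one into the other.

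The paper's argument avoids estimating ${\bf E}[\,|N_{l_2}|\,|\,{\cal F}_{l_1}]$ altogether. The point of Lemma~\ref{keylemma} is not merely that $\psi_s\ge 0$ helps bound the sine; it is that positivity \emph{removes the absolute value}: since $\Psi_{l/n}^{(n)}(c)\ge a_*$ for all $l_1\le l\le l_2$, one has ${\bf E}[\,|\Psi_{1-n^{\beta-1}}^{(n)}-a_*|\,|\,{\cal F}_{l_1}]={\bf E}[\,\Psi_{1-n^{\beta-1}}^{(n)}-a_*\,|\,{\cal F}_{l_1}]$, and now the conditional expectation of the martingale increment $N_{l_2}$ is \emph{zero}, leaving only the drift $2c(l_2-l_1)/n\le C\epsilon$ plus $(\Psi_{1-\epsilon}^{(n)}-a_*)$. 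The companion bound relative to $a^*$ is obtained by a stopping-time argument (first passage of $\Psi$ to $a^*$) together with the identity $|x|=-x+2x^+$. Taking the minimum of the two gives $d(\Psi_{1-\epsilon}^{(n)},2\pi{\bf Z})+\epsilon$. The martingale is never estimated in modulus; it is made to disappear.
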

This proposition 
follows from 
Lemma \ref{keylemma2}
below. 
\begin{lemma}
\label{keylemma2}
\beq
&(1)&\quad
{\bf E}[ | \Psi_{1 - n^{\beta-1}}^{(n)}(c) - a_* | | {\cal F}_{l_1}]
\le 
C
\left(
\epsilon
+
(\Psi_{1-\epsilon}^{(n)}(c) - a_*)
\right)
\\
&(2)& \quad
{\bf E}[ | \Psi_{1 - n^{\beta-1}}^{(n)}(c) - a^* | | {\cal F}_{l_1}]
\le 
C
\left(
\epsilon + (a^* - \Psi_{1-\epsilon}^{(n)}(c))
\right).
\eeq
\end{lemma}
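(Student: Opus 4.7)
\emph{Proof plan.} The plan is to expand $\Psi^{(n)}_{1-n^{\beta-1}}(c) - \tilde a$ (for $\tilde a \in \{a_*,a^*\}$) as the $\mathcal{F}_{l_1}$-measurable starting distance plus the increment of $\Psi^{(n)}$ over $[l_1/n,l_2/n]$, and then to bound the $\mathcal{F}_{l_1}$-conditional expectation of the absolute increment using Lemma \ref{Difference} together with the one-sided barriers at $a_*$ and $a^*$ supplied by Lemma \ref{keylemma} and its dual.

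For part (1), Lemma \ref{keylemma} gives $\Psi^{(n)}_{l/n}(c)\ge a_*$ for every $l\in[l_1,l_2]$ once $n$ is large, so the absolute value collapses:
\[
\bigl|\Psi^{(n)}_{1-n^{\beta-1}}(c)-a_*\bigr|
=\bigl(\Psi^{(n)}_{1-\epsilon}(c)-a_*\bigr)
+\bigl(\Psi^{(n)}_{1-n^{\beta-1}}(c)-\Psi^{(n)}_{1-\epsilon}(c)\bigr).
\]
The first summand is already $\mathcal{F}_{l_1}$-measurable. For the increment I apply Lemma \ref{Difference}: the explicit drift equals $2c(l_2-l_1)/n\le 2|c|\epsilon$, the stochastic integral $\int_{l_1}^{l_2}(e^{2i\theta_s(\kappa_c)}-e^{2i\theta_s(\kappa_0)})a(n-s)\,dM_s(\kappa_0)$ has vanishing conditional expectation given $\mathcal{F}_{l_1}$ (since $M_s(\kappa_0)$ is adapted to the $Y$-filtration), and the residuals $O(n^{\beta/2-1})+O(n^{-\beta/2+\epsilon'})$ are $o(1)$. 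Combining yields (1).

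For part (2) the same scheme works with $a^*$ replacing $a_*$ once one has the dual barrier $\Psi^{(n)}_{l/n}(c)\le a^*$ for $l\in[l_1,l_2]$; then $|\Psi^{(n)}_{1-n^{\beta-1}}(c)-a^*|=a^*-\Psi^{(n)}_{1-n^{\beta-1}}(c)$ and exactly the same decomposition-and-conditional-expectation computation delivers the bound $C(\epsilon+(a^*-\Psi^{(n)}_{1-\epsilon}(c)))$.

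The main obstacle is establishing this upper barrier. In Lemma \ref{keylemma} the one-sided inequality $\Psi^{(n)}_{l/n}(c)\ge a_*$ is essentially automatic because the integrand $e^{2i\theta_s(\kappa_c)}-e^{2i\theta_s(\kappa_0)}=e^{i\Psi^{(n)}_{s/n}(c)}-1$ vanishes on $2\pi\mathbb{Z}$, so the process cannot cross $a_*$ downward (the drift $O(1/n)$ is too small on the scale $l-l_1\le n\epsilon$ to fight the trap). For the upper barrier at $a^*$ the deterministic drift now points toward the barrier rather than away from it, so one must verify that over $[l_1,l_2]$ the cumulative drift $2c(l-l_1)/n\le 2|c|\epsilon$ together with the martingale fluctuation cannot carry $\Psi^{(n)}$ through $a^*$. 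The fluctuation is controlled through its quadratic variation $\int_{l_1}^{l_2}|e^{i\Psi^{(n)}_{s/n}(c)}-1|^2 a(n-s)^2\,ds$, which itself involves the running distance of $\Psi^{(n)}_{s/n}(c)$ from $2\pi\mathbb{Z}$; closing this loop requires a self-consistent Gronwall-type estimate (or a comparison with the limiting SDE, whose noise coefficient vanishes on $2\pi\mathbb{Z}$ and whose drift-induced overshoot is of order $\epsilon$). This trapping estimate, rather than the conditional-expectation computation, is where the genuine analytic work lies.
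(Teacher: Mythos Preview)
Your argument for part (1) matches the paper's exactly.

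For part (2), however, there is a genuine gap: the upper barrier $\Psi^{(n)}_{l/n}(c)\le a^*$ that you try to establish is in general \emph{false}. When $c>0$ the drift pushes the process upward, and nothing prevents it from crossing $a^*$; the ``trapping'' mechanism you invoke (noise vanishing on $2\pi\mathbf Z$) only prevents \emph{downward} crossings of $2\pi\mathbf Z$, not upward ones. So the self-consistent Gronwall estimate you sketch cannot close.

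The paper avoids this obstacle entirely by never claiming an upper barrier. Instead it uses the elementary identity $|a|=-a+2a^+$ and bounds the two pieces separately. The signed piece ${\bf E}[\,a^*-\Psi^{(n)}_{1-n^{\beta-1}}(c)\mid\mathcal F_{l_1}]$ is handled exactly as in part (1), yielding $\le C\epsilon+(a^*-\Psi^{(n)}_{1-\epsilon}(c))$. For the overshoot $(\Psi^{(n)}_{1-n^{\beta-1}}(c)-a^*)^+$, one introduces the stopping time $T=\inf\{l\ge l_1:\Psi^{(n)}_{l/n}(c)\ge a^*\}$. On $\{T\le l_2\}$ the process starts at $a^*$ at time $T$, and now the \emph{lower} barrier argument of Lemma~\ref{keylemma} (applied at level $a^*$ from time $T$) gives $\Psi^{(n)}_{l/n}(c)\ge a^*$ for $l\ge T$; hence the overshoot equals its own conditional expectation given $\mathcal F_T$, which by Lemma~\ref{Difference} is just the remaining drift $\le \frac{c}{n}(l_2-l_1)+o(1)\le C\epsilon$. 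Combining the two pieces gives (2). The point is that crossing $a^*$ is allowed, but once it happens the process is trapped \emph{above} $a^*$, so the overshoot is controlled by drift alone.
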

\begin{proof}
(1)
By
Lemma \ref{Difference}
\begin{equation}
\left| 
{\bf E}[ \Psi_{1 - n^{\beta-1}}^{(n)}(c) - \Psi_{1-\epsilon}^{(n)}(c)
| {\cal F}_{l_1} ] 
\right|
\le 
2 \cdot \frac cn (l_2 - l_1)
+ o(1).
\label{two}
\end{equation}
By
Lemma \ref{keylemma}
\beq
&&
{\bf E}[ | \Psi_{1 - n^{\beta-1}}^{(n)}(c) - a_* | | {\cal F}_{l_1}]
=
\left|
{\bf E}[  \Psi_{1 - n^{\beta-1}}^{(n)}(c) - a_*  | {\cal F}_{l_1}]
\right|
\\
& \le &
\left| 
{\bf E}[ \Psi_{1 - n^{\beta-1}}^{(n)}(c) - \Psi_{1-\epsilon}^{(n)}(c) 
| {\cal F}_{l_1}] 
\right|
+
(\Psi_{1-\epsilon}^{(n)}(c) - a_*).
\eeq
Substituting 
(\ref{two}) 
and using 
$\frac cn (l_2 - l_1) \le C \epsilon$, 
we have the conclusion. 
\\
(2)
Letting 
\[
T := \inf \left\{
t \ge l_1 \, | \, 
\Psi_{t/n}^{(n)}(c) - a^* \ge 0 
\right\},
\]
we have
\beq
{\bf E}[ (\Psi_{1 - n^{\beta-1}}^{(n)}(c) - a^* )^+ 
| {\cal F}_{l_1}]
&=&
{\bf E} \left[
1(T \le l_2)
{\bf E}[ \Psi_{1 - n^{\beta-1}}^{(n)}(c) - a^* \, | \, {\cal F}_T ]
| {\cal F}_{l_1}
\right].
\eeq
If 
$T \le l_2$, 
then
$\Psi_{T/n}^{(n)}(c) = a^*$
so that by
Lemma
\ref{keylemma}, \ref{Difference}, 
\[
0 \le 
{\bf E}\left[
1(T \le l_2)
{\bf E}\left[
\Psi_{1 - n^{\beta-1}}^{(n)}(c) - 
\Psi_{T/n}^{(n)}(c) | {\cal F}_T 
\right]
| {\cal F}_{l_1}
\right]
\le
\frac cn 
(l_2 - l_1) + o(1).
\]
Therefore
\beq
{\bf E}[ (\Psi_{1 - n^{\beta-1}}^{(n)}(c) - a^* )^+ 
| {\cal F}_{l_1}]
& \le &
\frac cn \cdot (l_2 - l_1) + o(1)
\le 
C\epsilon.
\eeq
On the other hand
\beq
\left|
{\bf E} [ \Psi_{1 - n^{\beta-1}}^{(n)}(c) - a^* 
| {\cal F}_{l_1}] 
\right|
& \le &
\left|
{\bf E}[ \Psi_{1 - n^{\beta-1}}^{(n)}(c) - \Psi_{1-\epsilon}^{(n)}(c) 
| {\cal F}_{l_1}] 
\right|
+
(a^* - \Psi_{1-\epsilon}^{(n)}(c))
\\
& \le &
2 \cdot \frac cn \cdot (l_2 - l_1)
+
\left(
a^* - \Psi_{1-\epsilon}^{(n)}(c)
\right).
\eeq
Hence by using 
$|a| = - a + 2 a^+$
we have
\beq
&&
{\bf E}[ | \Psi_{1 - n^{\beta-1}}^{(n)}(c) - a^* | 
| {\cal F}_{l_1}] 
\\
&=&
\left|
{\bf E} \left[
(-\Psi_{1 - n^{\beta-1}}^{(n)}(c) + a^*) + 2 (\Psi_{1 - n^{\beta-1}}^{(n)}(c) - a^*)^+
| {\cal F}_{l_1}
\right]
\right|
\\
& \le &
C\epsilon
+
(a^* - \Psi_{1-\epsilon}^{(n)}(c)).
\eeq
\QED
\end{proof}
Let 
$\kappa_{\lambda} := \kappa_0 + \frac {\lambda}{n}$
and let 
$\theta^*_{n^{\beta}}(\kappa_{\lambda})$
be the solution to the following equation. 
\beq
\theta^*_{n^{\beta}}(\kappa_{\lambda})
:=
\kappa_{\lambda}(n^{\beta}-n)
+
\frac {1}{2 \kappa_{\lambda}}
\int_n^{n^{\beta}}
Re 
\left(
e^{2i \theta_s(\kappa_{\lambda})} - 1
\right)
a(n-s) F(Y_s) ds
\eeq
That is, 
$\theta^*_{n^{\beta}}(\kappa_{\lambda})$
is the Pr\"ufer phase function solved from the right endpoint. 
By Sturm-Liouville theory 
\begin{eqnarray}
&&
\sharp \{ \mbox{atoms of $\xi_n$ in }
[\lambda_1, \lambda_2]
\}
\nonumber
\\
&=&
\sharp \left(
\left[
2\theta_{n - n^{\beta}}(\kappa_{\lambda_1}) 
-
2\theta^*_{n^{\beta}}(\kappa_{\lambda_1}), 
2\theta_{n - n^{\beta}}(\kappa_{\lambda_2}) 
-
2\theta^*_{n^{\beta}}(\kappa_{\lambda_2})
\right]
\cap
2 \pi {\bf Z}
\right).
\label{SL}
\end{eqnarray}
\begin{lemma}
\label{Endpoint}
\beq
&&
\theta^*_{n^{\beta}}(\kappa_{\lambda})
-
\theta^*_{n^{\beta}}(\kappa_0)
\stackrel{P}{\to} 0
\eeq
\end{lemma}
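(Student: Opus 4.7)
The plan is to mirror the partial-integration strategy used in the proofs of Theorem \ref{Apriori} and Lemma \ref{V}, now applied to the right-endpoint phase. I would first split
\begin{eqnarray*}
\theta^*_{n^\beta}(\kappa_\lambda) - \theta^*_{n^\beta}(\kappa_0)
&=& (\kappa_\lambda-\kappa_0)(n^\beta-n) \\
&&{}+ \left(\frac{1}{2\kappa_\lambda}-\frac{1}{2\kappa_0}\right)A_n(\kappa_\lambda) + \frac{1}{2\kappa_0}\big(A_n(\kappa_\lambda)-A_n(\kappa_0)\big),
\end{eqnarray*}
where $A_n(\kappa)$ denotes the oscillatory integral appearing in the definition of $\theta^*$. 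The deterministic drift contribution vanishes at rate $O(\lambda n^{\beta-1})$ provided the effective window length is $n^\beta$ (which is what makes the Sturm identity (\ref{SL}) compatible with recovering the counting function of $\xi_n$ via $\Psi^{(n)}$ in Theorem \ref{Carousel}).

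For the middle term the prefactor is of size $O(n^{-1})$, so it suffices to bound $A_n(\kappa_\lambda)$ by $o(n)$. I would apply Lemma \ref{PartialIntegration} exactly as in Lemma \ref{J}: the boundary terms are $O(1)$; the derivative piece $\int (a(n-s))'\ldots ds$ is $O(1)$ by integrability of $|a'|$ at infinity; the higher-order piece is $O(\int a(n-s)^2\,ds)=O(\log n)$; and the martingale part has quadratic variation of the same order. Thus $A_n(\kappa_\lambda)=O(\sqrt{\log n})$ in $L^2$, and multiplication by the $O(n^{-1})$ prefactor yields $o(1)$ in probability.

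The third term $\frac{1}{2\kappa_0}(A_n(\kappa_\lambda)-A_n(\kappa_0))$ is the principal obstacle. Its integrand is \emph{not} small pointwise, since $\theta_s(\kappa_\lambda)-\theta_s(\kappa_0)$ is of order $\lambda$ near the right endpoint, so $e^{2i\theta_s(\kappa_\lambda)}-e^{2i\theta_s(\kappa_0)}$ remains of order one. Following the derivation of $V_t^{(n)}(c)$ in the proof of Theorem \ref{SDE}, I would perform a joint partial integration in both spectral parameters to obtain a complex martingale with quadratic variation bounded by
\[
C\int a(n-s)^2\big|e^{2i(\theta_s(\kappa_\lambda)-\theta_s(\kappa_0))}-1\big|^2\,ds,
\]
plus boundary and remainder terms controlled exactly as in Lemma \ref{CDelta}. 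The quantitative improvement over the crude $O(\log n)$ bound will come from the observation that the difference-of-exponentials factor is approximately constant (equal to $e^{2i\lambda}-1$) across the short effective window, so pulling out this constant reduces the problem to an oscillatory integral of the type handled in Lemma \ref{J} and yields $L^2$-decay. Chebyshev's inequality then converts the $L^2$ estimate to convergence in probability.
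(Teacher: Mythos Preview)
The paper's proof is a one-line citation to an external estimate (Lemma~6.4 of \cite{KN}): ${\bf E}\bigl[|\theta^*_t(\kappa_\lambda) - \theta^*_t(\kappa_0)|\bigr] \le Ct/n + n^{-1/2}$, and then sets $t = n^\beta$. The underlying mechanism is that $\theta^*$ is the Pr\"ufer phase solved from the \emph{right} endpoint with identical initial data at $s=n$ for both $\kappa_\lambda$ and $\kappa_0$; the difference therefore starts at zero, and over the short window of length $n^\beta \ll n$ the drift contributes only $(\kappa_\lambda-\kappa_0)\,n^\beta = \lambda n^{\beta-1}$, with a Gr\"onwall-type bound handling the nonlinear part.

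Your treatment of the third term contains a genuine gap. You assert that the integrand is order one because $\theta_s(\kappa_\lambda)-\theta_s(\kappa_0)\approx\lambda$ near the right endpoint, but for the right-endpoint phase $\theta^*$ this difference is \emph{zero} at $s=n$ and stays small across the window --- it is not $O(1)$. (The paper's display appears to contain a typo, writing $\theta_s$ inside the defining integral where $\theta^*_s$ is meant; this may have misled you.) More importantly, your proposed remedy --- pulling out the ``constant'' $e^{2i\lambda}-1$ and reducing to an oscillatory integral of the type in Lemma~\ref{J} --- does not produce $L^2$-decay: the martingale part of such an integral over the window $[n-n^\beta,n]$ has quadratic variation $\int a(n-s)^2\,ds \sim \beta\log n$, hence $L^2$-norm of order $\sqrt{\log n}$, not $o(1)$. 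The $o(1)$ you may be recalling from the proof of Theorem~\ref{SDE} concerns $\langle V^{(n)}(c),V^{(n)}(d)\rangle_t$, which vanishes because of the oscillating factor $e^{4i\theta_s(\kappa_0)}$; it does not apply to $\langle V^{(n)}(c),\overline{V^{(n)}(c)}\rangle_t$, and it is the latter that bounds the $L^2$-norm. The right move is to exploit the short-window structure directly, as the paper does, rather than seek cancellation from oscillation.
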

\begin{proof}
By 
\cite{KN}, Lemma 6.4
we have
\[
{\bf E}\left[|
\theta^*_{t}(\kappa_{\lambda})
-
\theta^*_{t}(\kappa_0)
|
\right]
\le
C\cdot\frac tn + \frac {1}{\sqrt{n}}.
\]
Setting 
$t = n^{\beta}$
yields the conclusion. 
\QED
\end{proof}
{\it Proof of Theorem \ref{Carousel} }\\
Our goal is to show 
\beq
\sharp \left(
\mbox{atoms of $\xi_n$ in }
[0, \lambda_1], \cdots, [0, \lambda_d]
\right)
\to
\frac {1}{2\pi}
\left(
\Psi_{1-}(\lambda_1), 
\cdots, 
\Psi_{1-}(\lambda_d)
\right).
\eeq
We show this convergence for 
$d = 1$, 
for the general case follow similarly. 
By
Theorem \ref{SDE}
and 
Proposition \ref{Distance}, 
\begin{equation}
\Psi_{1 - n^{\beta-1}}^{(n)}(\kappa_{\lambda})
\stackrel{}{\to}
\Psi_{1-}(\lambda).
\label{sharp}
\end{equation}
for some subsequence. 
Letting 
$\lambda_1 = 0$, $\lambda_2 = \lambda$
in (\ref{SL}) 
we have
\beq
&&
\sharp \left(
\mbox{atoms of $\xi_n$ in }
[0, \lambda]
\right)
\\
&=&
\sharp \left(
\left[
2\theta_{n - n^{\beta}}(\kappa_0) 
-
2\theta^*_{n^{\beta}}(\kappa_0), 
2\theta_{n - n^{\beta}}(\kappa_{\lambda}) 
-
2\theta^*_{n^{\beta}}(\kappa_{\lambda})
\right]
\cap
2 \pi {\bf Z}
\right).
\eeq
The length of this interval is equal to, by 
Lemma \ref{Endpoint}, (\ref{sharp}), 
\beq
&&
2\left(
\theta_{n - n^{\beta}}(\kappa_{\lambda}) 
-
\theta^*_{n^{\beta}}(\kappa_{\lambda})
\right)
-
2\left(
\theta_{n - n^{\beta}}(\kappa_0) 
-
\theta^*_{n^{\beta}}(\kappa_0)
\right)
\\
& = &
\Psi_{1 - n^{\beta-1}}(\lambda) + o(1)_P
\stackrel{}{\to} 
\Psi_{1-}(\lambda).
\eeq
By conditioning on 
$Y_{n - n^{\beta}}$, 
we see that 
$\theta_{n - n^{\beta}}(\kappa_0)$
and 
$\theta^*_{n^{\beta}}(\kappa_0)$
are independent.
Thus by 
Theorem \ref{uniform} 
the left endpoint of this interval satisfies that its projection 
$\left\{
2\theta_{n - n^{\beta}}(\kappa_0) 
-
2\theta^*_{n^{\beta}}(\kappa_0)
\right\}_{2 \pi {\bf Z}}$
to 
$[0, 2 \pi)$  
converges to the uniform distribion on 
$[0, 2 \pi)$. 
Therefore
\beq
\sharp \left(
\mbox{atoms of $\xi_n$ in }
[0, \lambda]
\right)
\to
\Psi_{1-}(\lambda)
\eeq
proving Theorem \ref{Carousel}. 
\QED
%
\section{Appendix}
In this section 
we recall basic tools used in this paper. 
The content below are borrowed from 
\cite{KN}.
For 
$f \in C^{\infty}(M)$
let 
$R_{\beta} f 
:=(L + i\beta)^{-1}f$
$(\beta > 0)$,  
$R f :=L^{-1}(f - \langle f \rangle)$.
Then by 
Ito's formula, 
\beq
\int_0^t e^{i \beta s} f(X_s) ds
&=&
\left[
e^{i \beta s} (R_{\beta}f)(X_s) 
\right]_0^t
+
\int_0^t 
e^{i \beta s} d M_s(f, \beta)
\\
\int_0^t f(X_s) ds
&=&
\langle f \rangle t
+
\left[ (R f) (X_s) \right]_0^t 
+
M_t(f,0).
\eeq
$M_s(f, \beta), M_s(f, 0)$
are the complex martingales whose variational process satisfy
\beq
\langle M(f, \beta), M(f, \beta) \rangle_t
&=&
\int_0^t [ R_{\beta} f, R_{\beta} f] (X_s) ds, 
\\
\langle M(f, \beta), \overline{M(f, \beta)} \rangle_t
&=&
\int_0^t [ R_{\beta} f, \overline{R_{\beta} f}] (X_s) ds
\\
\langle M(f, 0), M(f, 0) \rangle_t
&=&
\int_0^t [ R f, R f] (X_s) ds, 
\\
\langle M(f, 0), \overline{M(f, 0)} \rangle_t
&=&
\int_0^t [ R f, \overline{R f}] (X_s) ds
\eeq
where
\beq
[f_1, f_2](x)
&:=&
L(f_1 f_2)(x)
-
(Lf_1)(x) f_2(x)
-
f_1(x) (Lf_2)(x)
\\
&=&
(\nabla f_1, \nabla f_2) (x).
\eeq
Then 
the integration by parts gives us the following formulas to be used frequently. 
\begin{lemma}
\label{PartialIntegration}
\beq
(1)
&&
\int_0^t 
b(s) e^{i \beta  s}
e^{i \gamma \tilde{\theta}_s}
f(X_s) ds
\\
&=&
\left[
b(s) e^{i \gamma \tilde{\theta}_s}
e^{i \beta  s}
(R_{\beta }f)(X_s)
\right]_0^t
- 
\int_0^t 
b'(s) e^{i \gamma \tilde{\theta}_s}
e^{i \beta  s}
(R_{\beta }f)(X_s)ds
\\
&& - 
\frac {i \gamma}{2 \kappa}
\int_0^t 
b(s) a(s) Re (e^{2i \theta_s}-1) 
e^{i \gamma \tilde{\theta}_s}
e^{i \beta  s}
F(X_s) (R_{\beta }f) (X_s) ds
\\
&& +
\int_0^t b(s) 
e^{i \beta  s} 
e^{i \gamma \tilde{\theta}_s}
d M_s(f, \beta). 
\eeq
\beq
(2) &&
\int_0^t b(s) e^{i \gamma \tilde{\theta}_s} f(X_s) ds
\\
&=&
\langle f \rangle 
\int_0^t b(s) e^{i \gamma \tilde{\theta}_s} ds
\\
&& + \left[
b(s) e^{i \gamma \tilde{\theta}_s} 
(Rf)(X_s) 
\right]_0^t 
 - \int_0^t
b'(s) e^{i \gamma \tilde{\theta}_s}
(Rf)(X_s) ds
\\
&& - \frac {i \gamma}{2 \kappa}
\int_0^t a(s)b(s) Re (e^{2i \theta_s} - 1)
e^{i \gamma \tilde{\theta}_s}
F(X_s) (Rf)(X_s) ds
\\
&& +
\int_0^t b(s) e^{i \gamma \tilde{\theta}_s}
dM_s(f, 0).
\eeq
\end{lemma}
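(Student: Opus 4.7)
The plan is to derive both identities by applying It\^o's formula to the resolvent-corrected integrand and then performing an integration by parts against the finite-variation multiplier $U(s) := b(s) e^{i\beta s} e^{i\gamma \tilde\theta_s}$. The strategy proceeds in three steps.

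First, since $R_\beta f = (L+i\beta)^{-1} f$ satisfies $L(R_\beta f) = f - i\beta R_\beta f$ pointwise on $M$, and $(X_t)$ is the diffusion with generator $L$, It\^o's formula yields the martingale decomposition $d(R_\beta f)(X_s) = (f - i\beta R_\beta f)(X_s)\, ds + dM_s(f,\beta)$, with $M(f,\beta)$ the complex martingale whose quadratic and co-quadratic variations are computed by the carr\'e du champ identities already displayed at the top of the appendix. For identity (2) the analogous input is $L(Rf) = f - \langle f\rangle$, which produces $d(Rf)(X_s) = (f-\langle f\rangle)(X_s)\, ds + dM_s(f,0)$; the constant $\langle f \rangle$ here is precisely what will survive to give the explicit $\langle f\rangle \int_0^t b(s) e^{i\gamma\tilde\theta_s}\,ds$ term in (2).

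Second, I would compute the dynamics of the multiplier $U(s)$. Because $b$ is smooth and $\tilde\theta_s$ is absolutely continuous by (\ref{theta-eq}), $U$ is of finite variation with
\[
dU(s) = b'(s)\, e^{i\beta s} e^{i\gamma \tilde\theta_s}\, ds + i\beta\, U(s)\, ds + i\gamma\, U(s)\, d\tilde\theta_s,
\]
and substituting $d\tilde\theta_s = \frac{1}{2\kappa} \mathrm{Re}(e^{2i\theta_s}-1)\, a(s) F(X_s)\, ds$ (from (\ref{theta-eq}) with $q(s)=a(s)F(X_s)$) produces exactly the three explicit finite-variation contributions that must appear on the right-hand side of (1).

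Third, I would apply the It\^o product rule to $U(s)(R_\beta f)(X_s)$. Because $U$ has no martingale part, the cross-variation vanishes and the product rule reduces to $d[U(s)(R_\beta f)(X_s)] = (R_\beta f)(X_s)\, dU(s) + U(s)\, d(R_\beta f)(X_s)$. Inserting the outputs of the previous two steps, the $i\beta\, U(s)(R_\beta f)(X_s)\,ds$ contribution from $dU$ cancels against the $-i\beta\, U(s)(R_\beta f)(X_s)\,ds$ contribution coming from $L R_\beta f = f - i\beta R_\beta f$. This isolates $U(s) f(X_s)\,ds$ as a sum of a perfect differential (integrating to the boundary term $[U(s)(R_\beta f)(X_s)]_0^t$), the $b'$ drift, the $\mathrm{Re}(e^{2i\theta_s}-1)$ drift carrying the factor $-i\gamma/(2\kappa)$, and the stochastic integral against $M_s(f,\beta)$. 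Integration from $0$ to $t$ then gives (1); identity (2) is obtained by the identical manipulation with $R_\beta f$ replaced by $Rf$, where no $i\beta$ cancellation occurs and the constant $\langle f\rangle$ survives in the drift. The only real obstacle is bookkeeping of signs in the $i\beta$-cancellation and careful tracking of the several multiplicative factors in $U(s)$; there is no probabilistic subtlety, since compactness of $M$ and $F\in C^\infty(M)$ guarantee $R_\beta f, Rf\in C^\infty(M)$ with bounded derivatives, so every application of It\^o's formula and of the product rule is standard.
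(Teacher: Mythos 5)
Your argument is correct and is precisely the proof the paper intends: the appendix obtains the lemma by combining the displayed It\^o identities for $(R_{\beta}f)(X_s)$ (resp.\ $(Rf)(X_s)$) with integration by parts against the finite-variation multiplier $b(s)e^{i\beta s}e^{i\gamma\tilde{\theta}_s}$, exactly as you do, with the $i\beta$-cancellation coming from $(L+i\beta)R_{\beta}f=f$ and the $\langle f\rangle$ term surviving in case (2). The only point to watch is that the sign of the stochastic-integral term is fixed by the paper's convention for $M_s(f,\beta)$ in its first It\^o display (the brackets do not distinguish $M$ from $-M$), so your Step 1 must be read with that convention in order to reproduce the stated $+\int_0^t b(s)e^{i\beta s}e^{i\gamma\tilde{\theta}_s}\,dM_s(f,\beta)$.
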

We will also use following notation for simplicity. 
\beq
g_{\kappa} &:=& (L+2i \kappa)^{-1}F, 
\quad
g := L^{-1}(F - \langle F \rangle),
\\
h_{\kappa, \beta} &:=& (L + 2i \beta \kappa)^{-1}F g_{\kappa}
\\
M_s(\kappa) &:=& M_s (F, 2 \kappa), 
\quad
M_s := M_s (F,  0),
\\
\widetilde{M}_s^{(\beta)}(\kappa)
&:=&
M_s (F g_{\kappa}, \beta \kappa), 
\quad
\widetilde{M}_s
:=
M_s (F g_{\kappa}, 0).
\eeq
\begin{lemma}
\label{Inverse}
Let 
$\Psi_n, n=1,2, \cdots$, 
and 
$\Psi$
are continuous and increasing functions defined on a open set 
$K \subset {\bf R}$ 
such that
$\lim_{n \to \infty}\Psi_n(x)=\Psi(x)$ 
pointwise. 
If 
$y_n \in Ran \;\Psi_n$, 
$y \in Ran \;\Psi$
and
$y_n \to y$, 
then it holds that 
\[
\Psi_n^{-1}(y_n) 
\stackrel{n \to \infty}{\to}
\Psi^{-1}(y).
\]
\end{lemma}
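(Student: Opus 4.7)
The plan is to prove this by a standard sandwich argument based on the strict monotonicity of $\Psi$ (which is the sense of ``increasing'' implicit here, since otherwise $\Psi^{-1}(y)$ would not be a single point).

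First I would fix $x := \Psi^{-1}(y)$, which lies in $K$. For any $\epsilon > 0$ sufficiently small that $[x-\epsilon, x+\epsilon] \subset K$, strict monotonicity of $\Psi$ gives $\Psi(x-\epsilon) < y < \Psi(x+\epsilon)$. Next, I would use the pointwise convergence hypothesis at the two specific points $x \pm \epsilon$, combined with $y_n \to y$, to conclude that for all sufficiently large $n$,
\[
\Psi_n(x-\epsilon) < y_n < \Psi_n(x+\epsilon).
\]
Since each $\Psi_n$ is (strictly) increasing, applying $\Psi_n^{-1}$ to this sandwich yields $x-\epsilon < \Psi_n^{-1}(y_n) < x+\epsilon$ for all large $n$. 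As $\epsilon > 0$ was arbitrary, this gives $\Psi_n^{-1}(y_n) \to x = \Psi^{-1}(y)$.

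There is essentially no hard step: the only thing to be careful about is that $\Psi_n^{-1}(y_n)$ is a well-defined real number. This is guaranteed by the hypothesis $y_n \in \mathrm{Ran}\,\Psi_n$ together with strict monotonicity (continuous strictly increasing functions are bijections onto their image). One small subtlety is to ensure that the two points $x \pm \epsilon$ used in the pointwise convergence really lie in $K$; this is handled by taking $\epsilon$ small, using that $K$ is open and $x \in K$. No uniform convergence of $\Psi_n$ to $\Psi$ is needed --- the argument only uses pointwise convergence at the two test points $x \pm \epsilon$, which is the main reason the lemma is so flexible in its applications elsewhere in the paper (e.g. in the proofs of Theorems \ref{clock}, \ref{Schtau}, and \ref{Carousel}, where $\Psi_n = \Psi_{L_n,L_n}$ arises as a random Pr\"ufer phase).
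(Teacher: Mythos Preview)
Your argument is correct. The paper actually states Lemma~\ref{Inverse} without proof (it appears in the Appendix as an elementary tool borrowed from \cite{KN}), so there is no ``paper's own proof'' to compare against; your sandwich argument using pointwise convergence at the two test points $x\pm\epsilon$ is the standard way to establish this and fills the gap cleanly.
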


\vspace*{1em}
\noindent {\bf Acknowledgement }
This work is partially supported by 
JSPS grant Kiban-C no.22540140.


\end{document}